\newcommand\newmvar[3][]{%
  \newcommand#2[1][1]{#1{\ifcase ##1\or #3\or\else@ctrerr\fi}}%
}
\newcommand{\eg}{{\it e.g.}}
\newcommand{\cf}{{\it cf.}}
\newcommand{\etal}{{\it et al.}}
\newcommand{\suchthat}{\:\big| \: }
\newcommand{\xra}[1]{\xrightarrow{#1}}
\renewcommand{\epsilon}{\varepsilon}  
\newmvar{\param}{P \or Q}
\newmvar{\mapf}{f \or g \or h}
\newmvar{\maph}{h \or k \or m}
\newmvar{\mapp}{p \or q \or r}
\newmvar{\mapu}{u \or v}
\newmvar{\vara}{a \or b}
\newmvar{\varx}{x \or y \or z}
\newmvar{\vari}{i \or j \or k}
\newmvar{\obj}{C \or D \or E } 
\newmvar{\var}{ c \or d \or e } 
\newmvar{\objA}{A \or B \or C \or D}
\newmvar{\objI}{I \or J \or K}
\newmvar{\objP}{P \or Q}
\newmvar{\objR}{R \or S}
\newmvar{\objX}{X \or Y \or Z}
\newmvar{\relR}{R \or S \or K}
\newmvar{\functor}{F \or G \or H}
\newcommand{\smallcatFormat}[1]{\mathcal{#1}}
\newcommand{\cloneFormat}[1]{\mathbb{#1}}
\newmvar{\cat}{\smallcatFormat{C} \or \smallcatFormat{D} \or \smallcatFormat{E}}
\newmvar{\clone}{\cloneFormat{C} \or \cloneFormat{D} \or \cloneFormat{E}}
\or \cloneFormat{N} \or \cloneFormat{O}}
\newcommand{\Nat}{\mathbb{N}}
\newmvar{\indexSet}{\mathbb{I} \or \mathbb{J} \or \mathbb{K}}
\newcommand{\Cat}{\mathbf{Cat}}
\newcommand{\Set}{\mathbf{Set}}
\newcommand{\Clone}{\mathbf{Clone}}
\newcommand{\CartClone}{\mathbf{CartClone}}
\newcommand{\ClClone}{\mathbf{ClClone}}
\newcommand{\StClClone}{  \mathbf{ClClone}_{\mathrm{st}}  }
\newcommand{\CCClone}{\mathbf{CCClone}}
\newcommand{\SKClone}{\mathbf{SKClone}}
\newcommand{\ExtSKClone}{ \mathbf{SKClone}_{\mathrm{ext}} }
\newcommand{\SKCat}{\mathbf{SKCat}}
\newcommand{\StSKCat}{  \mathbf{SKCat} }
\newcommand{\CartCat}{\mathbf{CartCat}}
\newcommand{\CCCat}{\mathbf{CCCat}}
\newcommand{\MonCat}{\mathbf{MonCat}}
\newcommand{\SymMonCat}{\mathbf{SMonCat}}
\newcommand{\Multicat}{\mathbf{Multicat}}
\newcommand{\SymMulticat}{\mathbf{SMulticat}}
\newcommand{\RepMulticat}{\mathbf{RepMulticat}}
\newcommand{\SymRepMulticat}{\mathbf{SRepMulticat}}
\newcommand{\fpMulticat}{\mathbf{fpMulticat}}
\newcommand{\fpSymMulticat}{\mathbf{fpSMulticat}}
\newcommand{\ClMulticat}{\mathbf{ClMulticat}}
\newcommand{\ClSymMulticat}{\mathbf{ClSMulticat}}
\newcommand{\Sig}{\mathbf{Sig}}
\newcommand{\sigFunctor}{f}
\newcommand{\op}[1]{  {#1}^{\mathrm{op}} }
\newcommand{\monoidal}[1]{ (#1, \tens, \tensu) }
\newcommand{\monoidalclosed}[1]{ (#1, \tens, \tensu, \lolliObj{-}{=}) }
\newmvar{\tensu}{I \or J}
\newcommand{\tens}{\otimes}
\newmvar{\monPhi}{\phi \or \alpha}
\newmvar{\monPsi}{\psi \or \beta}
\newcommand{\tinyprod}{ { \mathrm{\Pi} } }
\newcommand{\terminal}{1}
\newcommand{\cartesian}[1]{ (#1, \tinyprod) }
\newcommand{\ccc}[1]{ (#1, \tinyprod, \To) }
\newcommand{\act}{\bullet}
\newmvar{\car}{A \or B \or C} 
\newmvar{\alg}{a \or b \or c} 
\newmvar{\open}{U \or V}
\newcommand{\forget}{ \mathrm{U} } 
\newcommand{\free}{  \mathrm{F} } 
\newmvar{\nattrans}{\nu \or \xi \or \zeta \or \kappa}
\newmvar{\twocell}{\nu \or \xi \or \zeta \or \kappa}
\newcommand{\nameOf}[1]{  \ulcorner{#1}\urcorner  }
\newcommand{\Obj}[1]{ |#1| }
\newcommand{\Id}{\mathrm{Id}}
\newcommand{\id}{\mathrm{id}}
\newcommand{\eval}{\mathrm{eval}}
\newmvar{\cone}{\varpi \or \xi \or \zeta}
\newcommand{\ext}[1]{{#1}^{\#}}
\newmvar{\medmap}{u \or v}
\newmvar{\prodPres}{\nu \or \varpi}
\newcommand{\inc}{\iota}
\newcommand{\seq}[1]{\langle #1 \rangle}
\newcommand{\seqlr}[1]{\left\langle #1 \right\rangle}
\newcommand{\seqbig}[1]{\big\langle #1 \big\rangle}
\newcommand{\adjUp}{{\mathbin{\rotatebox[origin=c]{270}{$\scriptstyle\dashv$}}}}
\newcommand{\smalladjUp}{{\mathbin{\rotatebox[origin=c]{270}{$\scriptstyle\dashv$}}}}
\newcommand{\To}{\Rightarrow}
\newcommand{\iso}{\cong}
\newcommand{\smallprod}{\textstyle{\prod}}
\newmvar{\globalelt}{g \or g'} 
\newcommand{\expObj}[2]{ {{#1} \To {#2}} }
\newcommand{\lolliObj}[2]{ {[#1, #2]} }
\newcommand{\lolliObjbig}[2]{ {\big[#1, #2 \big]} }
\newmvar{\fib}{p \or q} 
\newmvar{\total}{\smallcat E \or \smallcat E'}
\newmvar{\base}{\smallcat B \or \smallcat B'}
\newcommand{\sorts}{S}
\newmvar{\mmap}{t \or u \or v}
\newmvar{\cloneFunctor}{f \or g \or h}
\newmvar{\multicatFunctor}{f \or g \or h}
\newcommand{\nucleus}[1]{  \overline{#1}  }
\newcommand{\univ}{\rho} 
\newcommand{\toMulti}{\mathrm{M}}
\newcommand{\catToClone}{\mathcal{P}}
\newcommand{\monToMulti}{\mathcal{T}}
\newcommand{\multiToMon}{\mathcal{N}}
\newcommand{\msub}[2]{ {#1} \circ \seq{#2} }
\newcommand{\csublr}[2]{ {#1}{ \left[ {#2} \right] } }
\newcommand{\csubbig}[2]{ {#1}{ \big[ {#2} \big] } }
\newcommand{\csub}[2]{ {#1}[{#2}] }
\newcommand{\csubthree}[3]{ \csub{ \csub{#1}{#2} }{ #3 } }
\newcommand{\p}[2]{  \mathsf{p}_{#1}^{#2} }
\newcommand{\closed}[1]{ (#1, \To, \eval) }
\newenvironment{td}[0]{\begin{center}\vspace{-0.0em}\begin{tikzcd}}{\end{tikzcd}\end{center}}
\newcommand{\sig}{\mathcal{S}}
\newcommand{\binddot}{\, . \,}
\newcommand{\wkn}[2]{ {#1}^{#2}}
\newcommand{\ind}[1]{  {#1}_{\bullet}  }
\newcommand{\lollipop}{\multimap}
\newcommand{\basicLambda}{ \Lambda }
\newcommand{\stpc}{\basicLambda^{\times}}
\newcommand{\stlc}{\basicLambda^{\to}}
\newcommand{\stlpc}{\basicLambda^{\times, \to}}
\newcommand{\basicLin}{ \mathfrak{L} }
\newcommand{\linTens}{ \basicLin^{\tens} }
\newcommand{\linWith}{ \basicLin^{\with} }
\newcommand{\linLolli}{ \basicLin^{\lollipop} }
\newcommand{\basicMon}{ \mathcal{O} }
\newcommand{\monTens}{ \basicMon^{\tens} }
\newcommand{\monWith}{ \basicMon^{\with} }
\newcommand{\monLolli}{ \basicMon^{\lollipop} }
\newcommand{\CL}{\mathsf{CL}}
\newcommand{\Syn}[1]{\mathrm{Syn}(#1)}
\newmvar{\type}{A \or B \or C \or D \or E} 
\newmvar{\groundtype}{\gamma \or \nu}
\newcommand{\length}[1]{| #1 |} 
\newmvar{\Ctx}{\Gamma \or \Delta \or \Theta \or \Sigma}
\newcommand{\app}[2]{#1 \, #2}
\newcommand{\appthree}[3]{#1 \, #2 \, #3}
\newcommand{\appfour}[4]{#1 \, #2 \, #3 \, #4}
\newcommand{\appfive}[5]{#1 \, #2 \, #3 \, #4 \, #5}
\newcommand{\letrule}[3]
		{\mathtt{let\:}{#1}\mathtt{\:be\:}{#2}\mathtt{\:in\:}{#3}}
\newcommand{\letruleforwith}[4]
		{\mathtt{let\:}{#1}\mathtt{\:be\:}\:\mathtt{p}_{#2} \mathtt{\:of\:}{#3}\mathtt{\:in\:}{#4}}
\newcommand{\abst}[2]{\uplambda{#1} \binddot {#2}}
\newenvironment{bprooftree} 
  {\leavevmode\hbox\bgroup}
  {\DisplayProof\egroup}
\newcommand{\unaryRule}[3]{\begin{bprooftree}
\AxiomC{\ensuremath{#1}}
\RightLabel{\scriptsize #3}
\UnaryInfC{\ensuremath{#2}}
\end{bprooftree}}
\newcommand{\binaryRule}[4]{\begin{bprooftree}
\AxiomC{\ensuremath{#1}}
\AxiomC{\ensuremath{#2}}
\RightLabel{\scriptsize #4}
\BinaryInfC{\ensuremath{#3}}
\end{bprooftree}}
\newcommand{\ternaryRule}[5]{\begin{bprooftree}
\AxiomC{\ensuremath{#1}}
\AxiomC{\ensuremath{#2}}
\AxiomC{\ensuremath{#3}}
\RightLabel{\scriptsize #5}
\TrinaryInfC{\ensuremath{#4}}
\end{bprooftree}}
\newmvar{\interp}{s \or t \or l}
\newcommand{\sem}[1]{ \llbracket #1 \rrbracket }
\newcommand{\w}{\mathrm{w}}
\newcommand{\sym}{\mathrm{sym}}
\newcommand{\altseq}[1]{ \seq{#1} }		
\newcommand{\altseqbig}[1]{ \seqbig{#1} }	
\newcommand{\CLw}{  \CL^{\mathrm{w}}  }
\newcommand{\CLwext}{  \CL^{\mathrm{wext}}  }
\DeclareMathOperator{\T}{\mathrm{T}}
\newcommand{\expCtx}[2]{ {#1 \To #2} }
\newcommand{\lolliCtx}[2]{ {[ #1 ; #2 ]} }
\newcommand{\lolliCtxbig}[2]{ {\big[ #1 ; #2 \big]} }
\newcommand{\itapp}[2]{ i_{#1 ; #2} }
\newcommand{\itappu}{ i }
\newcommand{\toCl}{ \mathrm{E} }
\newcommand{\toStForClone}{ \mathrm{S} }
\newcommand{\toStForCL}{ \mathrm{L} }
\newcommand{\toSKCat}{ \mathrm{N} }
\newcommand{\SKCatToClone}{ \mathrm{Cl} }
\newcommand{\SKCatToStClone}{ \mathrm{Q} }
\newcommand{\skcat}[1]
	{ (#1, \skforget, \lolliObj{-}{{=}}, \SS{}, \K{}{}, \catapp{})  }
\newcommand{\skforget}{U}
\newcommand{\skfunctor}[1]{ (#1, \homs, \forgs) }	
\newcommand{\homs}{\phi}
\newcommand{\forgs}{\psi}
\renewcommand{\SS}[1]{ {S}_{#1} }
\newcommand{\K}[2]{ K^{#1}_{#2} }
\newcommand{\catapp}[1]{ \epsilon_{#1} }
\newcommand{\sk}[1]
	{ (#1,  \lolliObj{-}{{=}}, \combS, \combK, \clapp{}{})  }
\newcommand{\comb}[1]{ \mathsf{#1} }
\newcommand{\combZ}{ \comb{Z} }
\newcommand{\combS}{ \comb{S} }
\newcommand{\combK}{ \comb{K} }
\newcommand{\combI}{ \comb{I} }
\newcommand{\combBprime}{ \comb{B'} }
\newcommand{\combB}{ \comb{B} }
\newcommand{\weq}{=_{\mathrm{w}}}
\newcommand{\wexteq}{=_{\mathrm{wext}}}
\DeclareMathOperator{\middot}{\cdot}
\newcommand{\clapp}[2]{ #1 \middot #2 }
\newcommand{\clappthree}[3]{  #1 \middot #2 \middot #3 }
\newcommand{\clappfour}[4]{  #1 \middot #2 \middot #3 \middot #4}
\newcommand{\close}[1]{ {#1}^{\mathrm{c}} }
\begin{document}
\title{Clones, closed categories, and combinatory logic\thanks{Supported by the 
Air Force Office of Scientific Research under award number {FA9550-21-1-0038}.} \\
	{\normalsize (longer version)}
}
%
%
\author{Philip Saville \orcidlink{0000-0002-8320-0280}}

\authorrunning{P. Saville}
%
\institute{Department of Computer Science, University of Oxford, UK 
\email{philip.saville@cs.ox.ac.uk}\\
\url{www.philipsaville.co.uk} 
}
\maketitle              
\begin{abstract}
We give an exposition of the semantics of the simply-typed $\uplambda$-calculus, and its linear and ordered variants, using multi-ary structures.
We define universal properties for multicategories, and use these to derive familiar rules for products, tensors, and exponentials. 
Finally we explain 
how to recover both the category-theoretic syntactic model and its semantic interpretation from the multi-ary framework.  \\[2mm]
We then use these ideas to study the semantic interpretation of combinatory logic and the simply-typed $\uplambda$-calculus without products. 
We introduce 
	\emph{extensional SK-clones}
and show these are sound and complete for both combinatory logic with 
extensional weak equality and the simply-typed $\uplambda$-calculus without 
products. We then show such SK-clones are equivalent to a variant of closed 
categories called \emph{SK-categories}, so the simply-typed 
$\uplambda$-calculus without products is the internal language of SK-categories.
As a corollary, we deduce that SK-categories have the same relationship to cartesian monoidal categories that closed categories have to monoidal categories.

\keywords{
		categorical semantics 
	\and abstract clones  
	\and lambda calculus 
	\and combinatory logic 
	\and closed categories 
	\and cartesian closed categories
}
\end{abstract}

\section{Introduction}
\label{sec:introduction}
Lambek's correspondence between cartesian closed categories and the 
simply-typed $\uplambda$-calculus is one of the central pillars of categorical 
semantics. 
One way of stating it categorically is to say that the syntax of typed $\uplambda$-terms over a \emph{signature} of base types and constants forms the free cartesian closed category 
	(for a readable overview, see~\cite{LambekAndScott,Crole1994}). 
The existence of this \emph{syntactic model} gives \emph{completeness}: if an equation holds in every model, it holds in the free one, and hence in the syntax. 
The free property then gives \emph{soundness}: for any interpretation of basic types and constants in a cartesian closed category ${\ccc\cat}$ one has a functor 
	$\sem{-}$
from the syntactic model to $\cat$, which is exactly the semantic interpretation of $\uplambda$-terms. 
The fact this functor is required to preserve cartesian closed structure amounts to showing that the semantic interpretation is sound with respect to the usual $\beta\eta$-laws.
All this justifies calling the simply-typed \mbox{$\uplambda$-calculus} the 
\emph{internal language} of cartesian closed categories.

This framework is powerful, but hides a fundamental mismatch: morphisms 
	$\type \to \type[2]$
in a category are \emph{unary}---they have just one input---but terms-in-context such as
	${\varx_1 : \type_1, \dots, \varx_n : \type_n \vdash \mmap : \type[2]}$
can have many inputs.
The standard solution~(\eg~\cite{Crole1994,Johnstone2002}) is to use 
categorical products to model contexts, so a term $\mmap$ as above corresponds 
to a map
	$\prod_{i=1}^n \type_i \to \type[2]$
out of the product.
	
Despite its evident success, this solution remains somewhat unsatisfactory, in two ways (see also~\cite{Jacobs1992}).
First, it forces us to conflate two different syntactic classes, namely contexts and product types.
As a result, some encoding is required to construct the syntactic model: the interpretation of
	$\varx : \type, \varx[2] : \type[2] \vdash \mmap : \type[3]$
is a term in context
	$p : \type \times \type[2]$.
This adds complexity to the construction, and results in the somewhat unintuitive fact that the semantic interpretation of a term $\mmap$ in the syntactic model may not be just $\mmap$ itself. 
%

Second, we are forced to include products in our type theory if we want a category-theoretic internal language---even though the calculus without products likely has a stronger claim to being called `the' simply-typed $\uplambda$-calculus
	(\eg~see Church's original definition~\cite{Church1940}). 
This raises the question: what categorical structure has the simply-typed $\uplambda$-calculus without products as its internal~language?

\paragraph*{\bf This paper.} This paper has three main aims.
	First, to explain how removing the mismatch between terms-in-context and morphisms outlined above clarifies the semantic interpretation of simply-typed $\uplambda$-calculi. To achieve this, one needs to move from the unary setting of categories to a \emph{multi-ary} setting, in which we have \emph{multimaps} $A_1, \dots, A_n \to B$. These ideas are not new, but are under-appreciated, and I hope this will provide self-contained introduction for a wider audience. 
	Second, to initiate a multi-ary investigation of the semantics of (cartesian) combinatory logic, in the style of Hyland's investigation of similar ideas for the untyped $\uplambda$-calculus~(\cite{Hyland2014,Hyland2015}).
	Finally, to use these results to define a categorical semantics for the simply-typed $\uplambda$-calculus without products.

\paragraph*{\bf Outline.} In \Cref{sec:clones-and-multicats,sec:universal-properties-for-multicategories,sec:cartesian-clones,sec:closed-clones,sec:cc-clones}, we explain how the multi-ary perspective yields a slick way to derive the unary semantic interpretation and syntactic model, together with soundness and completeness results (\Cref{sec:recovering-the-semantics-for-cartesian}). We also show how important type-theoretic constructions such as products and exponentials can be derived from the semantics. This framework accommodates different choices of structural rules, such as whether the language is ordered, linear, or cartesian. 

The idea of using multi-ary constructions goes back to Lambek~(\cite{Lambek1969,Lambek1989}), and
has recently been exploited to great effect in a very general setting by Shulman~\cite{Shulman2023}.
Particular cases can also be found in the works of Hyland~(\cite{Hyland2014,Hyland2015}), Hyland \& de Paiva~\cite{Hyland1993} and Blanco \& Zeilberger~\cite{Blanco2020}.
%
%
A reader familiar with these approaches will likely be unsurprised by the technical development below. However, we believe these ideas deserve to be more widely known, so spend time making them explicit in a concrete setting.

In \Cref{sec:SK-clones} we introduce a multi-ary model of (cartesian) combinatory logic, called \emph{SK-clones}, and prove that the sub-category of \emph{extensional} SK-clones is equivalent to the category of \emph{closed clones} modelling simply-typed $\uplambda$-calculus without products.  This provides a categorical statement of the classical correspondence between $\uplambda$-calculus and extensional combinatory logic (\eg~\cite{Barendregt1985,Gilezan1993}); it may be seen as a purely-semantic counterpart to the proof in~\cite{Altenkirch2023}, where Altenkirch~\etal~prove the generalised algebraic theories describing extensional combinatory logic and simply-typed $\uplambda$-calculus are the same.

Finally, in \Cref{sec:SK-categories} we introduce a version of Eilenberg \& Kelly's closed categories~(\cite{Eilenberg1966,Day1978}), called \emph{SK-categories}, and show that the category of SK-categories is equivalent to the category of extensional SK-clones, and so to the category of closed clones. Hence, SK-categories are a categorical model for the simply-typed $\uplambda$-calculus without products. SK-categories are a cartesian version of the prounital-closed categories of Uustalu, Veltri \& Zeilberger~(\cite{Uustalu2020,Uustalu2021}), which in turn are closely related to an (incomplete) suggestion of Shulman's~\cite{ShulmanNlab}.

Jacobs has also isolated a structure that is sound and complete for simply-typed $\uplambda$-calculus without products~\cite{Jacobs1992}. 
His approach, which fits into his elegant general framework~\cite{Jacobs1999}, is also predicated on a careful distinction between contexts and products. 
His models are certain indexed categories, with the contexts encoded by the 
indexing: this makes them feel closer to multi-ary structures. 
In SK-categories, by contrast, contexts are modelled within the category itself by using the closed structure (\cf~\cite[\S4.4]{Pitts2000}). 
Moreover, unlike other work relating closed categories to multi-ary structures, SK-categories do not force us to include a unit object in the corresponding type theory (\cf~\cite{Manzyuk2012}).

%

\paragraph*{\bf Technical preliminaries.}
For a set $\sorts$ we write 
	$\sorts^{\star}$
for the set of finite sequences over $\sorts$, and use Greek letters
	$\Ctx[1], \Ctx[2], \dots$ 
to denote elements of $\sorts^{\star}$.
The empty string is denoted $\diamond$, and the length of 
	$\Ctx$
by $\length\Ctx$.
Where the length of a sequence is clear, we write simply
	$\ind{\type}$
for
	$\type[1]_1, \dots, \type[1]_n$.
Contexts are assumed to be ordered lists.

We call multimaps of the form $\type \to \type[2]$ \emph{unary} and
a multimap 
	$\diamond \to \type[2]$ \emph{nullary}.

We define a \emph{signature} $\sig$ to be a set $\Obj\sig$ of 
	\emph{basic sorts} 
with sets
	$\sig(\Ctx; \type[2])$
of 
	\emph{constants}
		$c : \Ctx \to \type[2]$
for each 
	$\big( \Ctx , \type[2] \big) 
		\in \Obj{\sig}^{\star} \times \Obj\sig$.
A \emph{homomorphism of signatures}
	${\sigFunctor : \sig \to \sig'}$
is a map 
	$\Obj\sigFunctor : \Obj\sig \to \Obj{\sig'}$
with maps 
	$\sig(\type_1, \dots, \type_n; \type[2])
		\to
		\sig'(\sigFunctor\type_1, \dots, \sigFunctor\type_n; \sigFunctor\type[2])$
for each 
	$\big( (\type_1, \dots, \type_n), \type[2] \big) 
				\in \Obj{\sig}^{\star} \times \Obj\sig$.
We write $\Sig$ for the category of signatures and their homomorphisms.
One could also consider versions of higher-order constants, which may use the language's constructs. This extension does not change the theory significantly, and would require introducing multiple categories of signatures, so we do not seek this extra generality here.


We assume familiarity with the simply-typed $\uplambda$-calculus, as in~\eg~\cite{Crole1994}.
We denote the simply-typed $\uplambda$-calculus with constants and base types given by a signature $\sig$, and both product and exponential types modulo $\alpha\beta\eta$-equality, by $\stlpc_{\sig}$.
We write $\stpc_{\sig}$ and $\stlc_{\sig}$ for the fragments with just products and just exponentials, respectively.
Here we focus on the \emph{typed} cases: the untyped versions---both in the syntax and the multi-ary models---are recovered by fixing a single base type $\star$ such that $\Theta(\star, \dots, \star) = \star$ for each type constructor~$\Theta$.

We also assume familiarity with the basics of cartesian categories, cartesian closed categories, and monoidal categories, as in~\eg~\cite{cfwm,LambekAndScott}.
To avoid having to treat the unit type as a special case, cartesian categories 
are assumed to have $n$-ary products $\smallprod_n$ for all $n\in \Nat$. 
We also work with functors preserving structure \emph{strictly}: this 
simplifies the exposition without any great 
loss of generality. 
Thus, $\MonCat$, $\SymMonCat$ and $\CartCat$ denote the categories of monoidal categories, symmetric monoidal categories, and cartesian categories, respectively, with functors preserving all the data on the nose.

\section{Multicategories and clones}
\label{sec:clones-and-multicats}
We begin with an intuitive overview of the place of multi-ary structures in semantics.
A multi-ary structure has \emph{multimaps} 
	$\type_1, \dots, \type_n \to \type[2]$
with multiple inputs and one output; unlike the morphisms in a category, multimaps correspond directly to terms-in-context.
As a result, it is often easier to construct a multi-ary free model than it is to construct a unary one, and the interpretation of a term-in-context $\mmap$ in the free model is given by $\mmap$ itself.
Moreover, every multi-ary structure gives rise to a unary one by restricting to multimaps with one input.
The multi-ary semantics therefore factors the unary one, as shown:
\vspace{.5mm}
\begin{equation}
	\label{eq:factoring-free-properties-schematically}
	\begin{tikzcd}[column sep = 1.2em, row sep = 1.5em]
		& \text{ multi-ary structures } \\
		\text{ signatures } && \text{ categorical structures }
		\arrow[
					""{name=0, anchor=center, inner sep=0}, 
					"\text{ free }"{}, 
					bend left = 8,
					 from=2-1, to=1-2, 
					start anchor={[shift={(13pt,0pt)}]north},
					end anchor={[shift={(4pt,2pt)}]south west}
		]
		\arrow[
				""{name=1, anchor=center, inner sep=0}, 
				"\text{ forget }"{}, 
				bend left = 8, 
				from=1-2, to=2-1,
				start anchor={[shift={(18pt,0pt)}]south west},
				end anchor={[shift={(-5pt,-3pt)}]north east}
		]
		\arrow[
						""{name=2, anchor=center, inner sep=0}, 
						"\text{ restrict to unary maps }", 
						bend left  = 8, 
						from=1-2, 
						to=2-3,
						start anchor={[shift={(0pt,2pt)}]south east},
						end anchor={[shift={(14pt,1pt)}]north west}						
					]
		\arrow[
						""{name=3, anchor=center, inner sep=0}, 
						"\text{ extend }", 
						bend left = 8,
						from=2-3, 
						to=1-2, 
						start anchor={[shift={(3pt,-2pt)}]north west},
						end anchor={[shift={(-12pt,1pt)}]south east}
				]
		\arrow["\vdash"{anchor=center, rotate=53}, draw=none, from=3, to=2]
		\arrow["\dashv"{anchor=center, rotate=-53}, draw=none, from=0, to=1]
	\end{tikzcd}
\end{equation}
\vspace{.5mm}

One can then `read off' the syntactic category, together with a guarantee that it has the right structure, by restricting the free multi-ary structure to unary maps.
Similarly, the usual semantic interpretation in (say) a cartesian closed category $\cat$ is exactly the interpretation that arises by extending $\cat$ to a multi-ary structure.
This gives an algebraic justification for encoding contexts as products: this 
is how one extends a cartesian closed category to a multi-ary structure.
(For the details of these points, see 
	\Cref{sec:recovering-the-semantics-for-cartesian}.)

The multi-ary perspective also provides a unifying framework for type theories with different structural rules. 
The simply-typed $\uplambda$-calculus is \emph{cartesian}: it admits the structural rules of weakening, contraction, and permutation 
	(as in~\eg~\cite[Fig. 3.2]{Crole1994}).  
The corresponding multi-ary structures are certain \emph{abstract clones}. 
\emph{Ordered} type theories (\eg~\cite{Lambek1958,Polakow1999}), also known 
as \emph{planar} type theories
(\eg~\cite{Abramsky2007,Zeilberger2015}), do not 
admit weakening, contraction, or 
permutation, and correspond to certain
\emph{multicategories}.
\emph{Linear} 
type theories (\eg~\cite{Girard1989}), 
which admit only permutation, 
correspond to certain
	\emph{symmetric multicategories}
(see also the alternative `tangled' option in~\cite{Mellies2018}).
Since abstract clones and symmetric multicategories may be seen as special 
cases of multicategories, we can develop a theory of how to add structure to 
cartesian, linear, and ordered type theories by analysing how to add structure 
to multicategories.

\subsection{Multicategories, clones, and their internal languages}

We now introduce multicategories and abstract clones and show how they correspond to certain type theories.  
An even more general framework for syntax, allowing multi-ary domains and codomains as well as both cartesian and linear contexts, is provided by Shulman's recent work with polycategories~\cite{Shulman2023}.
Clones, and their correspondence with syntax, also play a key role in the 
	`algebraic syntax'
programme of Fiore and collaborators initiated in~\cite{Fiore1999} (see \eg~\cite{Fiore2010,Arkor2020,Arkor2021}).
	
	
\begin{definition}[\cite{Lambek1969}]
A \emph{multicategory} $\multicat$ consists of a set $\Obj\multicat$ of \emph{objects} and sets 
	$\multicat(\Ctx; \type[2])$ 
of \emph{multimaps} for  every 
	$\Ctx \in \Obj{\multicat}^{\star}$ and $\type[2] \in \Obj{\multicat}$,
together with
	\begin{enumerate}
	\item 
		An \emph{identity} multimap
			$\Id_{\type} \in \multicat(\type; \type)$
		for every $\type \in \Obj{\multicat}$;
%
	\item 
		For any 
			$\type_1, \dots, \type_n, \type[2] \in \Obj{\multicat}$
		and 
			$(\Ctx[2]_i \in \Obj{\multicat}^{\star})_{i=1, \dots, n}$,
		a \emph{composition} map 
		\begin{align*}
			\multicat(\type_1, \dots, \type_n; \type[2])
				\times \smallprod_{\vari = 1}^n  \multicat(\Ctx[2]_i; \type_i)
				&\to  
				\multicat(\Ctx[2]_1, \dots, \Ctx[2]_n; \type[2]) \\
	 		\big( \mmap, (\mmap[2]_1, \dots, \mmap[2]_n) \big)
						&\mapsto 
						\msub{\mmap}{\mmap[2]_1, \dots, \mmap[2]_n}
		\end{align*}
%
	\end{enumerate}
subject to an associativity law and two unit laws 
	(see~\eg~\cite[p.~35]{Leinster2004}).
%
%
A \emph{multicategory functor}
	$\multicatFunctor : \multicat \to \multicat[2]$
consists of a map 
	$\Obj\multicatFunctor : \Obj\multicat \to \Obj{\multicat[2]}$ 
with maps
	$\multicatFunctor_{\ind{\type}, \type[2]} 
			:
			\multicat(\type_1, \dots, \type_n; \type[2])
			\to 
			\multicat[2](
				\multicatFunctor\type_1, \dots \multicatFunctor\type_n; 
				\multicatFunctor\type[2]
			)$	
	for every 
		$\type_1, \dots, \type_n, \type[2] \in \Obj{\multicat}$,
such that substitution and the identity are preserved
	(see~\eg~\cite[p.~39]{Leinster2004}).
%
%
\end{definition}

\begin{definition}[{\cite{May1972,Hyland1993}}]
	A \emph{symmetric multicategory} consists of a multicategory $\multicat$
	together with a symmetric group action: for each 
		$\type_1, \dots, \type_n \in \Obj{\multicat}$
	and
		$\sigma \in S_n$ 
	one has
		$(-) \act \sigma 
			: 
			\multicat(\type_1, \dots, \type_n; \type[2])
			\to 
			\multicat(\type_{\sigma 1}, \dots, \type_{\sigma n}; \type[2])
		$
	compatible with substitution and satisfying unit and associativity laws 
		(\eg~\cite[p. 54]{Leinster2004}).
	A \emph{symmetric multicategory functor} is a multicategory functor which preserves the action.
\end{definition}

We write 
	$\Multicat$
	(resp. $\SymMulticat$)
for the category of (symmetric) multicategories 
and their functors, and write 
	$\mmap : {\Ctx \to \type[2]}$
for 
	$\mmap \in \multicat(\Ctx; \type[2])$. 

\begin{example}
\label{ex:multicategories}
	\label{c:mon-to-multi-defined}
	Every monoidal category $(\cat, \tens, \tensu)$ induces a multicategory
			$\monToMulti{\cat}$. 
	The objects are those of $\cat$, with multimaps 
			$(\monToMulti{\cat})(\type_1, \dots, \type_n; \type[2])
				:= \cat(\bigotimes_{i=1}^n \type_i, \type[2])$
	for a chosen $n$-ary bracketing of the tensor product.
	This determines functors 
		$\MonCat \to \Multicat$,
	and
		$\SymMonCat \to \SymMulticat$ (see~\eg~\cite[p. 39]{Leinster2004});
	we denote both of these by $\monToMulti$.
\end{example}

Lambek~\cite{Lambek1969} essentially observed that every multicategory has an internal language, as follows. 
One identifies multimaps 
	$\mmap : \type_1, \dots, \type_n \to \type[2]$
with terms
	$\varx_1 : \type_1, \dots, \varx_n : \type_n \vdash {\mmap : \type[2]}$,
for a fixed ordering of an infinite set of variables $\{ \varx_1, \varx_2, \dots \}$.
The identity $\Id_{\type}$ is identified with the variable $\varx : \type$, and the composition operation becomes a formal substitution operation on the language.
Stated in this way, the three axioms become well-known properties of 
substitution: the unit laws say 
	$\csub{\varx}
				{\mmap[2]}
		= 
		\mmap[2]$
and 
	$\csub{\mmap}
				{\varx_1, \dots, \varx_n}
		= 
		\mmap$,
and the associativity law is a linear version of the so-called Substitution Lemma 
	(\eg~\cite[Lemma~2.1.16]{Barendregt1985}).


The next result shows this terminology does not differ too much from the notion of internal language in the introduction.
%
%
For a signature $\sig$ and 
	$\Ctx := ({\varx_i : \type_i})_{i=1, \dots, n}$,
write 
	$\basicMon_{\sig}$
for the ordered language generated by the two rules on the left below, and 
	$\basicLin_{\sig}$
for the linear language generated by all three rules:
	\begin{equation*}
	\label{eq:internal-language-of-a-multicategory}
	{\small
		\hspace{-1mm}
		%
		%
		\unaryRule
				{ \phantom{()\sig)} }
				{\varx : \type \vdash
						\varx : \type }
				{}
		\hspace{-1.5mm}
		\binaryRule 
			{c \in \sig(\Ctx; \type[2]) }
			{(\Ctx[2]_i \vdash \mmap[2]_i :  \type_i )_{i=1, \dots, n}}
			{\Ctx[2]_1, \dots, \Ctx[2]_n \vdash 
					c^{\S} 
						(\mmap[2]_1, \dots, \mmap[2]_n)
				: \type[2]
			}
			{}
		\hspace{-1.5mm}	
		\unaryRule
			{\Ctx[3], \varx : \type, \varx[2] : \type[2], \Ctx[2] \vdash \mmap 
			: \type[3]}
			{\Ctx[3], \varx[2] : \type[2], \varx : \type, \Ctx[2] \vdash \mmap 
			: \type[3]}
			{}
	}
	\end{equation*}
Substitution is defined as usual, so that the following rule is admissible:
\begin{equation} \label{eq:substitution}
	\binaryRule
		{\varx_1 : \type_1, \dots, \varx_n : \type_n \vdash \mmap : \type[2]}
		{( \Ctx[2]_i \vdash \mmap[2]_i : \type_i )_{i=1, \dots, n}}
		{ \Ctx[2]_1, \dots, \Ctx[2]_n \vdash 
			\csub
				{\mmap}
				{\mmap[2]_1 / \varx_1, \dots, \mmap[2]_n / \varx_n}
			: \type[2]
		}
		{}
\end{equation}

With this rule as composition, 
	$\basicMon_{\sig}$
and 
	$\basicLin_{\sig}$
define a syntactic multicategory 
	$\Syn{\basicMon_{\sig}}$
and a syntactic symmetric multicategory
	$\Syn{\basicLin_{\sig}}$,
respectively.
These define left adjoints to the functors
	$\Multicat \to \Sig$
and 
	$\SymMulticat \to \Sig$
 sending a (symmetric) multicategory $\multicat$ to the signature with objects
	$\Obj\multicat$
and constants
	$\big\{ \multicat(\Ctx; \type[2]) \big\}%
					_{\Ctx \in \Obj{\multicat}^{\star}, \type[2] \in \Obj\multicat}$;
we denote both these functors by $\forget$.

\begin{restatable}{lemma}{SemanticInterpOfBasicLin}
	$\Syn{\basicMon_{\sig}}$
		(resp. $\Syn{\basicLin_{\sig}}$)
	is the free multicategory (resp. symmetric multicategory) on $\sig$.
\end{restatable}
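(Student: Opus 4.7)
The plan is to verify the universal property of the free--forgetful adjunction directly. First, I would define the unit $\eta_\sig : \sig \to \forget(\Syn{\basicMon_{\sig}})$: it acts as the identity on basic sorts, and sends each constant $c \in \sig(\Ctx; \type[2])$, with $\Ctx = \varx_1 : \type_1, \dots, \varx_n : \type_n$, to the derivation of $\Ctx \vdash c^{\S}(\varx_1, \dots, \varx_n) : \type[2]$ obtained by applying the constant rule to $n$ copies of the variable rule.

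Next, given any multicategory $\multicat$ and any signature homomorphism $\sigFunctor : \sig \to \forget(\multicat)$, I would construct an extension $\widehat\sigFunctor : \Syn{\basicMon_{\sig}} \to \multicat$ by structural induction on derivations. On objects it extends $\sigFunctor$ on basic sorts; the variable rule is sent to $\Id_{\sigFunctor\type}$; and an application of the constant rule to $c \in \sig(\Ctx; \type[2])$ with subderivations $(\Ctx[2]_i \vdash \mmap[2]_i : \type_i)_{i=1,\dots,n}$ is sent to $\msub{\sigFunctor(c)}{\widehat\sigFunctor(\mmap[2]_1), \dots, \widehat\sigFunctor(\mmap[2]_n)}$. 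In the symmetric case, the permutation rule is interpreted using the symmetric group action on $\multicat$.

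The main technical obstacle is to show that $\widehat\sigFunctor$ preserves the composition of $\Syn{\basicMon_{\sig}}$, given by~\eqref{eq:substitution}, and so is genuinely a multicategory functor. This reduces to a semantic substitution lemma
\[
	\widehat\sigFunctor\big( \csub{\mmap}{\mmap[2]_1/\varx_1, \dots, \mmap[2]_n/\varx_n} \big)
	\;=\;
	\msub{\widehat\sigFunctor(\mmap)}{\widehat\sigFunctor(\mmap[2]_1), \dots, \widehat\sigFunctor(\mmap[2]_n)} ,
\]
which I would prove by induction on the derivation of $\mmap$, using the unit law of $\multicat$ in the variable case and the associativity law in the constant case; the symmetric case additionally requires the equivariance of symmetric multicategory composition. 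Uniqueness of $\widehat\sigFunctor$ then follows by a routine induction, since any multicategory functor extending $\sigFunctor$ is forced to agree with $\widehat\sigFunctor$ on the variable and constant rules, and hence on all derivations.
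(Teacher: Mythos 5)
Your proposal is correct and follows essentially the same route as the paper: the paper's proof simply exhibits the extension $\interp\sem{-}$ by structural induction, sending the variable rule to $\Id$ and the constant rule to composition with $\interp(c)$, exactly as you do. You additionally spell out the unit, the semantic substitution lemma, and the uniqueness induction, all of which the paper leaves implicit but which are the standard details one would need to check.
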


Thus, the internal language of a symmetric multicategory is the core of Abramsky's linear $\uplambda$-calculus~\cite{Abramsky1993}.
To recover a cartesian language, we use \emph{(multi-sorted) abstract clones}. 
These differ from multicategories in that the result of substituting
	$(\mmap[2]_i : \Ctx[2] \to \type_i)_{i=1, 2}$ 
into 
	$\mmap : \type_1, \type_2 \to \type[2]$
yields a multimap of type 
	$\Ctx[2] \to \type[2]$,
not
	$\Ctx[2], \Ctx[2] \to \type[2]$.
Abstract clones are equivalently
	\emph{cartesian multicategories}~(see \eg~\cite{Hyland2014}), 
but this formulation is less natural syntactically: it amounts to adding explicit duplication and deletion operations to the language.

\begin{definition}
	An \emph{abstract clone} $\clone$ consists of a set $\Obj\clone$ of \emph{sorts} and sets $\clone(\Ctx; \type[2])$ of \emph{multimaps} for every $\Ctx \in \Obj{\clone}^{\star}$ and $\type[2] \in \Obj{\clone}$, together with
\begin{enumerate}
	\item 
		\emph{Projection} multimaps 
			$\p{\vari}{\ind{\type}} \in \clone(\type_1, \dots, \type_n; \type_i)$
		for every 
					$\type_1, \dots, \type_n \in \Obj{\clone}$;
	\item 
		For every 
			$\type_1, \dots, \type_n, \type[2] \in \Obj{\clone}$
		and 
			$\Ctx[2] \in \Obj{\clone}^{\star}$,
		a \emph{substitution} operation 
		\begin{align*}
		\clone(\type_1, \dots, \type_n; \type[2])
				\times \smallprod_{\vari = 1}^n  \clone(\Ctx[2]; \type_i)
				&\to  
				\clone(\Ctx[2]; \type[2]) \\
		 \big( \mmap, (\mmap[2]_1, \dots, \mmap[2]_n) \big)
						&\mapsto 
						\csub{\mmap}{\mmap[2]_1, \dots, \mmap[2]_n}
		\end{align*}
%
	\end{enumerate}
subject to an associativity law and two unit laws for any 
	$\mmap \in \clone(\type_1, \dots, \type_n; \type[2])$,
	$\big( \mmap[2]_i \in \clone(\type[2]_1, \dots, \type[2]_m; \type_i) \big)_{i=1, \dots,n}$
and 
	$\big( \mmap[3]_j \in \clone(\Ctx[3]; \type[2]_j) \big)_{j =1 , \dots, m}$:
\begin{align*}
	\big(
				{\mmap}[\ind{\mmap[2]}]
			\big)[\ind{\mmap[3]}]
	=
	{\mmap}
		\big[
			\dots, 
			\csub
				{\mmap[2]_{\vari}}
				{\ind{\mmap[3]}},
			\dots 
		\big]		
\hspace{2mm}
,
\hspace{2mm}		
	\csub
		{\p{\vari}{\ind{\type}}}
		{\mmap[2]_1, \dots, \mmap[2]_n}
	=
	\mmap[2]_i
\hspace{2mm}
,
\hspace{2mm}  
	\csub
			{\mmap}
			{  \p{1}{\ind\type}, \dots, \p{n}{\ind\type} }
		=
		\mmap
\end{align*}
%
%
%
A \emph{homomorphism of clones}
	$\cloneFunctor : \clone \to \clone[2]$
consists of 
%
	a map $\Obj\cloneFunctor : \Obj\clone \to \Obj{\clone[2]}$ and
	 maps 
		$\cloneFunctor_{\ind{\type}, \type[2]} 
			:
			\clone(\type_1, \dots, \type_n; \type[2])
			\to 
			\clone[2](
				\cloneFunctor\type_1, \dots \cloneFunctor\type_n; 
				\cloneFunctor\type[2]
			)$
	for every
	$\type_1, \dots, \type_n, \type[2] \in \Obj{\clone}$,		
	such that 
%
	$
	\cloneFunctor(\p{i}{\ind{\type}})
		= \p{i}{\ind{(\cloneFunctor\type)}}$
and
%
%
	$\cloneFunctor{\big( 
		\csub
			{\mmap}
			{\mmap[2]_1, \dots, \mmap[2]_n} 
		\big)}
	=
		\csub
			{(\cloneFunctor\mmap)}
			{\cloneFunctor\mmap[2]_1, \dots, \cloneFunctor\mmap[2]_n} $.
%
We write 
	$\Clone$
for the category of clones and clone homomorphisms.
\end{definition}

\begin{example}[{\cf~\Cref{ex:multicategories}}]
	\label{ex:cartesian-cat-to-clone}
	%
	Any cartesian category 
		$\cartesian\cat$
	determines a clone 
		$\catToClone\cat$
	with sorts the objects of $\cat$ and 
		$(\catToClone{\cat})
			{\big( \type_1, \dots, \type_n ; \type[2] \big) }
			:= 
			\cat( \smallprod_{i=1}^n \type_i ; \type[2])$.
\end{example}	

We distinguish between clones and multicategories by using $[\dots]$ for a clone's substitution operation and $\seq{\dots}$ for a multicategory's composition operation. Every multicategory, and hence every clone, has an underlying category. 

\begin{definition}
	The \emph{nucleus}
		$\nucleus\multicat$
	of a multicategory or clone $\multicat$ is the category with the same objects 
	and 
		$\nucleus\multicat(\type, \type[2]) := \multicat(\type; \type[2])$.
	This defines functors 
		$\nucleus{(-)} : \Multicat \to \Cat$
	and 
		$\nucleus{(-)} : \Clone \to \Cat$
	to the category of small categories.
\end{definition}

%

The internal language of a clone is a cartesian version of that for 
multicategories. Write $\basicLambda_\sig$ for the language below; 
substitution is defined as usual.
\begin{equation*}
\label{eq:internal-language-of-a-clone}
	%
	%
	\unaryRule
			{ {\small (i = 1, \dots, n) } } 
			{\varx_1 : \type_1, \dots, \varx_n : \type_n  \vdash 
					\varx_i : \type_i }
			{}
	\hspace{0mm}
	\binaryRule 
		{c \in \sig(\Ctx; \type[2]) }
		{(\Ctx[2] \vdash \mmap[2]_i :  \type_i )_{i=1, \dots, n}}
		{\Ctx[2] \vdash 
				c^{\S} 
					(\mmap[2]_1, \dots, \mmap[2]_n)
			: \type[2]
		}
		{}
\end{equation*}
Identifying variables with projections, we get a syntactic clone $\Syn{\basicLambda_\sig}$.

\begin{lemma}
	\label{res:internal-language-of-a-clone}
	The canonical forgetful functor $\forget : \Clone \to \Sig$ has a left adjoint, and the free clone on $\sig$ is
		$\Syn{\basicLambda_\sig}$.
\end{lemma}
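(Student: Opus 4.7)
The plan is to exhibit $\Syn{\basicLambda_\sig}$ as a clone, construct the unit of the adjunction, and then verify the universal property by induction on terms.

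First I would check that $\Syn{\basicLambda_\sig}$ is a well-defined clone. The sorts are the elements of $\Obj\sig$; multimaps $\Ctx \to \type[2]$ are terms $\Ctx \vdash \mmap : \type[2]$ (taken up to $\alpha$-equivalence, with the ordering on $\Ctx$ matching the canonical ordering on variables); projections $\p{i}{\ind\type}$ are the variables $\varx_1 : \type_1, \dots, \varx_n : \type_n \vdash \varx_i : \type_i$; and substitution $\csub\mmap{\ind{\mmap[2]}}$ is capture-free syntactic substitution. The two unit laws then say $\csub{\varx_i}{\ind{\mmap[2]}} = \mmap[2]_i$ and $\csub\mmap{\varx_1, \dots, \varx_n} = \mmap$, both of which are standard. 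Associativity is a cartesian analogue of the Substitution Lemma, provable by a routine induction on the structure of $\mmap$.

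Next I would define the unit $\eta_\sig : \sig \to \forget \Syn{\basicLambda_\sig}$ as the identity on sorts and by $c \mapsto \bigl(\varx_1 : \type_1, \dots, \varx_n : \type_n \vdash c^{\S}(\varx_1, \dots, \varx_n) : \type[2]\bigr)$ on constants $c \in \sig(\Ctx; \type[2])$. Given any clone $\clone$ and signature homomorphism $\sigFunctor : \sig \to \forget\clone$, I would construct a clone homomorphism $\widehat{\sigFunctor} : \Syn{\basicLambda_\sig} \to \clone$ by setting $\widehat\sigFunctor(\type) := \sigFunctor(\type)$ on sorts and recursing on the derivation of a term:
\begin{align*}
\widehat\sigFunctor\bigl(\varx_1{:}\type_1, \dots, \varx_n{:}\type_n \vdash \varx_i\bigr)
		&:= \p{i}{\sigFunctor\type_1, \dots, \sigFunctor\type_n} \\
\widehat\sigFunctor\bigl(\Ctx[2] \vdash c^{\S}(\mmap[2]_1, \dots, \mmap[2]_n)\bigr)
		&:= \csub{\sigFunctor(c)}{\widehat\sigFunctor \mmap[2]_1, \dots, \widehat\sigFunctor\mmap[2]_n}.
\end{align*}
Preservation of projections is immediate. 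The main obstacle, and the crux of the argument, is preservation of substitution: one needs $\widehat\sigFunctor\bigl(\csub\mmap{\ind{\mmap[2]}}\bigr) = \csub{\widehat\sigFunctor\mmap}{\ind{\widehat\sigFunctor\mmap[2]}}$. This follows by induction on $\mmap$, using the first unit law of $\clone$ in the variable case and the associativity law of $\clone$ in the constant case, together with the already-established fact that syntactic substitution satisfies the corresponding clone laws.

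Finally, for uniqueness, suppose $\cloneFunctor : \Syn{\basicLambda_\sig} \to \clone$ is any clone homomorphism with $\forget\cloneFunctor \circ \eta_\sig = \sigFunctor$. Then $\cloneFunctor$ agrees with $\widehat\sigFunctor$ on sorts, on variables (since $\cloneFunctor$ preserves projections), and on the terms $c^{\S}(\varx_1, \dots, \varx_n)$ (by the assumption on $\eta_\sig$). Because every term of $\basicLambda_\sig$ is obtained by iterated substitution from variables and these canonical constant terms, and because both $\cloneFunctor$ and $\widehat\sigFunctor$ preserve substitution, a straightforward induction gives $\cloneFunctor = \widehat\sigFunctor$. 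Naturality of the bijection $\Clone(\Syn{\basicLambda_\sig}, \clone) \iso \Sig(\sig, \forget\clone)$ in both arguments is clear from the construction, establishing the adjunction.
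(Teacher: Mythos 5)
Your proof is correct and follows the same route the paper takes (the paper only sketches this argument, via the analogous multicategory case in the appendix): exhibit the syntax as a clone with variables as projections and capture-free substitution as composition, and define the unique extension of a signature homomorphism by induction on terms, sending variables to projections and $c^{\S}(\vec{u})$ to $\csub{\sigFunctor(c)}{\widehat\sigFunctor\vec{u}}$. Your additional care over preservation of substitution and uniqueness fills in exactly the steps the paper leaves implicit.
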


\begin{example}
	\label{ex:syntactic-models-of-cartesian-languages}
	The languages 
		$\stpc_{\sig}, \stlc_{\sig}$
	and 
		$\stlpc_{\sig}$
	each induce 
	{syntactic clones}
	we denote by 
		$\Syn{\stpc_{\sig}}, 
			\Syn{\stlc_{\sig}}$
	and 
		$\Syn{\stlpc_{\sig}}$,
	respectively.
\end{example}



\section{Universal properties for multicategories}
\label{sec:universal-properties-for-multicategories}
 In this section we generalise the categorical notion of \emph{universal arrows} 
(as in \eg~\cite[\S3]{cfwm}) to give a notion of universal property for multicategories. This will provide a uniform way to introduce new connectives to a type theory. One could also define the required conditions directly (see~\cite{Blanco2020,Shulman2023}), but here we wish to emphasise that they arise from category-theoretic ideas. 
%

\begin{definition}[{\cf~\cite{Hermida2000}}]
	\label{def:universal-arrow}
	Let 
		$\multicatFunctor : \multicat \to \multicat[2]$
	be a multicategory functor. 
	\begin{enumerate}
	\item 
			A \emph{universal arrow from $\multicatFunctor$ to 
				$\objX[2] \in \Obj{\multicat[2]}$} 
			is a pair 
				$(\objR \in \Obj\multicat, \univ : \multicatFunctor\objR \to \objX[2])$
			such that for every 
				$\mmap : \multicatFunctor\type_1, \dots, \multicatFunctor\type_n \to \objX[2]$
			there exists a unique multimap 
				$\ext{\mmap} : \type_1, \dots, \type_n \to \objR$
			such that 
				$\msub{\univ}{\multicatFunctor(\ext\mmap)} = \mmap$.
		\item 
			A \emph{universal arrow from 
						$\objX_1, \dots, \objX_n \in \Obj{\multicat[2]}$ 
					to $\multicatFunctor$}
			is a pair 
				$(\objR \in \Obj{\multicat}, 
					\univ : \objX_1, \dots, \objX_n \to \multicatFunctor\objR)$
			such that for every 
				$\mmap : \objX_1, \dots, \objX_n \to \multicatFunctor\type[2]$
			there exists a unique multimap 
				$\ext{\mmap} : \objR \to \type[2]$
			such that 
				$\msub{\multicatFunctor(\ext{\mmap})}{\univ} = \mmap$.
	\end{enumerate}	
\end{definition}




We extend this definition---and hence our notion of universal property---to clones by using the next observation (\cf~the fact a cartesian category is~monoidal).

\begin{restatable}{lemma}{CloneToMulticat}
	\label{res:clones-to-multicats}
	There is a faithful functor 
		$\toMulti : \Clone \to \Multicat$
	sending a clone $\clone$ to the multicategory with the same objects and hom-sets, and composition given using substitution in $\clone$ and the projections.
\end{restatable}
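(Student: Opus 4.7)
The plan is to construct $\toMulti$ directly and then verify the multicategory axioms and the functoriality and faithfulness conditions in turn. On objects, I define $\toMulti\clone$ to have the same sorts and hom-sets as $\clone$, so $(\toMulti\clone)(\Ctx; \type[2]) := \clone(\Ctx; \type[2])$, with identity at $\type$ given by the unary projection $\p{1}{\type}$. For the multicategory composition of $\mmap \in \clone(\type_1, \dots, \type_n; \type[2])$ with a family $\mmap[2]_i \in \clone(\Ctx[2]_i; \type_i)$ whose contexts $\Ctx[2]_i$ may differ, I would first \emph{weaken} each $\mmap[2]_i$ to the concatenated context $\Ctx[2] := \Ctx[2]_1, \dots, \Ctx[2]_n$ by substituting the appropriate projections of $\Ctx[2]$ for the variables of $\Ctx[2]_i$. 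Writing $\widehat{\mmap[2]_i}$ for this weakening, the multicategory composition is then $\msub{\mmap}{\mmap[2]_1, \dots, \mmap[2]_n} := \csub{\mmap}{\widehat{\mmap[2]_1}, \dots, \widehat{\mmap[2]_n}}$.

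The key auxiliary observation is that weakening a projection yields another projection: substituting projections of $\Ctx[2]$ into $\p{j}{\Ctx[2]_i}$ reduces, by the clone's unit law $\csub{\p{j}{\ind{\type[2]}}}{\mmap[3]_1, \dots, \mmap[3]_k} = \mmap[3]_j$, to the projection of $\Ctx[2]$ at the shifted index. Using this, the two multicategory unit laws $\msub{\p{1}{\type}}{\mmap} = \mmap$ and $\msub{\mmap}{\p{1}{\type_1}, \dots, \p{1}{\type_n}} = \mmap$ reduce respectively to the two unit laws of the clone. The main technical obstacle is verifying \textbf{associativity}: for families $\mmap[3]_{i,j}$ of the appropriate types, one must check that iterated composition agrees no matter the bracketing. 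After unfolding, this reduces to showing that weakening commutes suitably with clone substitution, namely that applying clone substitution to the weakenings $\widehat{\mmap[2]_i}$ and $\widehat{\mmap[3]_{i,j}}$ gives the same result as first substituting at the original contexts and then weakening to the concatenated context. This bookkeeping follows from the clone's associativity law combined with the fact that substituting projections into projections returns the correct projection.

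For the functorial action, I define $\toMulti\cloneFunctor$ to have the same object-map and hom-action as $\cloneFunctor$. Preservation of identities amounts to preservation of projections, and preservation of multicategory composition follows because clone homomorphisms preserve substitution and projections, hence preserve weakenings. Functoriality of $\toMulti$ is then immediate. Finally, faithfulness holds because the underlying data of $\cloneFunctor$ is exactly the underlying data of $\toMulti\cloneFunctor$, so two distinct clone homomorphisms necessarily induce distinct multicategory functors.
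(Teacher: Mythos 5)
Your construction is exactly the one the paper uses: the identity is the unary projection $\p{1}{\type}$, and the multicategory composite is formed by first weakening each $\mmap[2]_i$ to the concatenated context via the evident projections and then applying the clone's substitution. The verification sketch (unit laws reducing to the clone's unit laws, associativity to the clone's associativity plus the projection computation, faithfulness because the underlying data is unchanged) is correct and fills in details the paper leaves implicit.
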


\Cref{def:universal-arrow} does not involve `global' conditions like 
naturality, so is particularly amenable to a type-theoretic interpretation. 
As in the categorical setting, however, it can be rephrased using natural isomorphisms (\cf~\cite[\S3.2]{cfwm}).

\begin{lemma}
	\label{res:universal-arrows-as-natural-isomorphisms}
	Let 
		$\multicatFunctor : \multicat \to \multicat[2]$
	be a multicategory functor. 
	\begin{enumerate}
	\item 
		Giving a universal arrow 
		from $\multicatFunctor$ to 
			$\objX \in \Obj{\multicat[2]}$
		is equivalent to giving $\objR \in \multicat$ and an isomorphism 
			$\phi_{\ind\type} : 
				\multicat(\type_1, \dots, \type_n; \objR)
				\xra{\iso}
				\multicat[2](\multicatFunctor\type_1, \dots, \multicatFunctor\type_n; \objX[2])$,
		natural in the sense that the left diagram below commutes for any 
			$\mmap : \type_1, \dots, \type_n \to \type[2]$;
	\item 
		Giving a universal arrow 
		from 
			$\objX_1, \dots, \objX_n \in \Obj{\multicat[2]}$
		to $\multicatFunctor$ is equivalent to giving 
			$\objR \in \Obj{\multicat}$
		and an isomorphism 
			$\psi_{\type[2]} 
				: 
				\multicat(\objR; \type[2])
				\xra\iso
				\multicat[2](\objX_1, \dots, \objX_n; \multicatFunctor\type[2])$,
		natural in the sense that the right diagram below commutes for any 
			${\mmap[2] : \type[2] \to \type[3]}$.
	\end{enumerate}
	%
	%
	\[
		\begin{tikzcd}[column sep = 1em, scalenodes = .95]
			\multicat(\type[2]; \objR)
			\arrow{r}[]{\phi_{\type[2]}}
			\arrow{d}[swap]{  \msub{(-)}{\mmap}  }
			&
			\multicat[2](\multicatFunctor\type[2]; \objX)
			\arrow{d}{  \msub{(-)}{\multicatFunctor\mmap}  }
			\: 
			\\
			\multicat(\type_1, \dots, \type_n; \objR) 
			\arrow[swap]{r}{\phi_{\ind\type}}
			&
			\multicat[2](\multicatFunctor\type_1, \dots, \multicatFunctor\type_n; \objX)
		\end{tikzcd}
		\hspace{2mm}
		\begin{tikzcd}[column sep = 1em, scalenodes = .95]
			\multicat(\objR; \type[2])
			\arrow{r}[]{\psi_{\type[2]}}
			\arrow{d}[swap]{  \msub{\mmap[2]}{-}  }
			&
			\multicat[2](\objX_1, \dots, \objX_n; \multicatFunctor\type[2])
			\arrow{d}{  \msub{\multicatFunctor(\mmap[2])}{-}  }
			\: 
			\\
			\multicat(\objR; \type[3]) 
			\arrow[swap]{r}{\psi_{\type[3]}}
			&
			\multicat[2]\objX_1, \dots, \objX_n; \multicatFunctor\type[3])
		\end{tikzcd}		
	\]
\end{lemma}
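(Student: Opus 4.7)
The plan is to establish both parts in parallel, since (2) is the formal multicategorical dual of (1). I will give the argument for (1) in detail.

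For the forward direction, suppose $(\objR, \univ : \multicatFunctor\objR \to \objX)$ is a universal arrow from $\multicatFunctor$ to $\objX$. Define $\phi_{\ind{\type}}(\mmap) := \msub{\univ}{\multicatFunctor\mmap}$ for $\mmap : \type_1, \dots, \type_n \to \objR$. The universal property states exactly that every $\mathsf{g} : \multicatFunctor\type_1, \dots, \multicatFunctor\type_n \to \objX$ is uniquely of the form $\msub{\univ}{\multicatFunctor(\ext{\mathsf{g}})}$, so $\phi_{\ind{\type}}$ is a bijection. Naturality in the sense of the left square reduces to the identity
\[
\msub{\univ}{\multicatFunctor(\msub{\mapf}{\mmap})}
\;=\;
\msub{\univ}{\msub{\multicatFunctor\mapf}{\multicatFunctor\mmap}}
\;=\;
\msub{\bigl(\msub{\univ}{\multicatFunctor\mapf}\bigr)}{\multicatFunctor\mmap},
\]
which follows from functoriality of $\multicatFunctor$ and the associativity law for the multicategory composition in $\multicat[2]$ (using that $\univ$ is unary, so the bracketing is unambiguous).

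For the converse, given $\objR$ and a natural isomorphism $\phi$, set $\univ := \phi_{\objR}(\Id_{\objR}) : \multicatFunctor\objR \to \objX$. For any $\mathsf{g} : \multicatFunctor\type_1, \dots, \multicatFunctor\type_n \to \objX$ define $\ext{\mathsf{g}} := \phi_{\ind\type}^{-1}(\mathsf{g})$. Applying the naturality square at $\mapf = \Id_\objR$ and the morphism $\ext{\mathsf{g}} : \type_1, \dots, \type_n \to \objR$ (so the vertical map is $\msub{(-)}{\ext{\mathsf{g}}}$) yields
\[
\phi_{\ind\type}(\ext{\mathsf{g}})
\;=\;
\phi_{\ind\type}\bigl(\msub{\Id_\objR}{\ext{\mathsf{g}}}\bigr)
\;=\;
\msub{\phi_\objR(\Id_\objR)}{\multicatFunctor(\ext{\mathsf{g}})}
\;=\;
\msub{\univ}{\multicatFunctor(\ext{\mathsf{g}})},
\]
and the left-hand side is $\mathsf{g}$ by construction. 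Uniqueness follows by the same calculation: any $\mapf$ satisfying $\msub{\univ}{\multicatFunctor\mapf} = \mathsf{g}$ satisfies $\phi_{\ind\type}(\mapf) = \mathsf{g}$, so $\mapf = \ext{\mathsf{g}}$ by injectivity of $\phi_{\ind\type}$. Finally, the two constructions are mutually inverse: starting from $(\objR, \univ)$ one recovers $\phi_\objR(\Id_\objR) = \msub{\univ}{\multicatFunctor\Id_\objR} = \univ$; and starting from $\phi$, naturality (applied as above) shows that the reconstructed isomorphism sends $\mapf$ to $\msub{\univ}{\multicatFunctor\mapf} = \phi_{\ind\type}(\mapf)$.

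Part (2) is entirely dual: one defines $\psi_{\type[2]}(\mapf) := \msub{\multicatFunctor\mapf}{\univ}$ for $\mapf : \objR \to \type[2]$, and in the converse direction sets $\univ := \psi_\objR(\Id_\objR)$; the bijection, naturality and uniqueness arguments mirror those above. The only real subtlety, and the main point where one must be careful, is the bookkeeping in the associativity step: in (1) the map $\univ$ being substituted is unary (so only one sub-sequence of inputs is affected), while in (2) the universal arrow $\univ$ has arity $n$, so one invokes associativity in its general form. Once this is handled cleanly, both parts are direct manipulations of the multicategory axioms.
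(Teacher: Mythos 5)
Your proof is correct and is exactly the standard Yoneda-style argument for universal arrows that the paper intends here (the paper omits the proof, pointing only to the categorical analogue in Mac Lane \S 3.2): define $\phi$ by composition with $\univ$, recover $\univ$ as $\phi_{\objR}(\Id_{\objR})$, and let multi-naturality plus associativity do the rest. Both directions, the uniqueness step, and the mutual-inverse check are all in order.
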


A corollary is that
%
	giving a right adjoint 
	to a multicategory functor 
		$\multicatFunctor : \multicat[2] \to \multicat$
	in Hermida's 2-category of multicategories~\cite{Hermida2000}
	is equivalent to giving a mapping 
		$\multicatFunctor[2]_0 : \Obj{\multicat} \to \Obj{\multicat[2]}$
	and a universal arrow 
		$\multicatFunctor\multicatFunctor[2](\objX) \to \objX$
	from $\multicatFunctor$ to $\objX$ for each $\objX \in \Obj{\multicat[2]}$.

%

\section{Product structure}
\label{sec:cartesian-clones}
We now have enough to define products for multicategories, and hence for clones.
An $n$-ary product is exactly a limit over the discrete category with $n$ 
objects. 
Rephrasing in terms of universal arrows 
	(\eg~\cite[\S3]{cfwm})
we get that
equipping a category $\cat$ with $n$-ary products is exactly equipping it with a universal arrow from the diagonal functor
	$\Delta^{(n)} : \cat \to \cat^{\times n}$
to 
	$(\type_1, \dots, \type_n)$
for every 
	${\type_1, \dots, \type_n \in \cat}$.
	
Since $\Multicat$ has finite products defined in much the same way as the category of small categories $\Cat$,  we may make the following definition. 
The prefix `cartesian' is already used for multicategories, so we use 
	`{\bf f}inite-{\bf p}roducts'.

\begin{definition}
	\label{def:finite-product-multicategory}
	An \emph{fp-multicategory} is a multicategory $\multicat$ equipped with a universal arrow 
		$\big(
			\prod_{i=1}^n \type_i,
			(\pi_1^{\ind\type}, \dots, \pi_n^{\ind\type})
		\big)$
	from the diagonal functor
		$\Delta^{(n)}  : \multicat \to \multicat^{\times n}$
	to 
		$(\type_1, \dots, \type_n)$
	for every 
		$n \in \Nat$ 
	and
		$\type_1, \dots, \type_n \in \Obj{\multicat}$. 
\end{definition}

Asking for $\multicat$ to have finite products is equivalent to asking for a product object
	$\prod_{i=1}^n \type_i$
and unary multimaps 
	$\big( \pi_i^{\ind\type} : \prod_{i=1}^n \type_i \to \type_i \big)_{i=1, \dots, n}$
for each $\type_1, \dots, \type_n \in \Obj{\multicat}$, such that composition induces isomorphisms 
	$\multicat\big(\Ctx; \smallprod_{i=1}^n \type_i\big) 
		\iso 
		\prod_{i=1}^n \multicat(\Ctx; \type_i)$.
%
%
%
In the internal language, this amounts to the following rules:
\begin{equation}
	\label{eq:product-structure-spelled-out}
	\hspace{-3mm}
	\begin{aligned}
	\unaryRule	
		{  \phantom{ (\Ctx)}  }
		{p : \prod_{i=1}^n \type_i \vdash \pi^{\ind\type}_i(p) :  \type_i}
		{$(i = 1, \dots, n)$}
	\hspace{1mm}
	&,
	\hspace{1mm}
	\unaryRule
		{(\Ctx \vdash \mmap_i : \type_i)_{i=1, \dots, n}}
		{ \Ctx \vdash \altseq{\mmap, \dots, \mmap_n} : 
			\prod_{i=1}^n \type_i }
		{}	
	\\[3mm]
	\csubbig	
		{\pi_i^{\ind\type}(p)}
		{\altseq{ \mmap_1, \dots, \mmap_n}}
			= \mmap_i
	\hspace{3mm}
	&,
	\hspace{3mm}
	\altseqbig{ 
				\csub
						{\pi_1^{\ind\type}(p)}
						{\mmap[2]},
				\dots,  
				\csub
						{\pi_n^{\ind\type}(p)}
						{\mmap[2]}			
		}
		= 
		\mmap[2]
	\end{aligned}
\end{equation}
%

We can now derive the rules for $\with$ in linear $\uplambda$-calculus~\cite{Abramsky1993}. Indeed, given 
	$\Ctx, \varx : \type_i, \Ctx[3] \vdash \mmap : \type[2]$,
from~(\ref{eq:product-structure-spelled-out})
we get 
	$\Ctx, p : \prod_{i=1}^n \type_i, \Ctx[3] 
		\vdash \csub{\mmap}{\pi_i^{\ind\type}(p) / \varx} : \type[2]$.
This suggests the following.
Let $\monWith_{\sig}$ (resp. $\linWith_\sig$) be the extension of 
	$\basicMon_\sig$ (resp. $\basicLin_\sig$) with
\begin{equation*}
	\hspace{-4mm}
		\begin{aligned}
			\binaryRule 
				{\Ctx, \varx_i : \type_i, \Ctx[3] \vdash \mmap : \type[3]}
				{\Ctx[2] \vdash \mmap[2] : \with_{i=1}^n \type_i}
				{ \Ctx, \Ctx[2], \Ctx[3]
					\vdash \letruleforwith{\varx_i}{i}{u}{\mmap} 
					: \type[3]}
				{}		
			\hspace{2mm}
			,
			\hspace{2mm}
			\unaryRule 
				{(\Ctx \vdash \mmap_i : \type_i)_{i=1, \dots, n}}
				{ \Ctx 
					\vdash \seq{\mmap_1, \dots, \mmap_n} 
						: \with_{i=1}^n \type_i }
				{}
		\\[2mm]
		\letruleforwith
			{\varx_i}
			{i}
			{\seq{\mmap[2]_i}_{i=1}^n}
			{\mmap} 
		= 
			\csub{\mmap}{\mmap[2]_i / \varx_i}
		\hspace{2mm}
		,
		\hspace{2mm}
		\seq{
			\letruleforwith
				{\varx_i}
				{i}
				{\mmap[2]}
				{\varx_i}}_{i=1}^n
		= \mmap[2]
		\end{aligned}
\end{equation*}
where we write $\seq{\mmap[2]_i}_{i=1}^n$ for 
	$\seq{\mmap[2]_1, \dots, \mmap[2]_n}$.
%
%
This syntax defines a free property. To see this, say a multicategory functor $\multicatFunctor$ 
	\emph{(strictly) preserves finite products} 
if it preserves all the data on the nose, so that
	$\multicatFunctor(\smallprod_{i=1}^n \type_i)
			= \smallprod_{i=1}^n \multicatFunctor\type_i$,
	$\multicatFunctor( \pi_i^{\ind{\type}} )
			= \pi_i^{\ind{\multicatFunctor\type}}$,
and 
	$\multicatFunctor(\seq{\mmap_1, \dots, \mmap_n}) 
			= \seq{\multicatFunctor\mmap_1, \dots, \multicatFunctor\mmap_n}$.
%
Write
	$\fpMulticat$
for the category of fp-multicategories and product-preserving functors, and 
	$\fpSymMulticat$
for the subcategory of symmetric multicategories with finite products, with functors preserving both structures.

\begin{lemma}	
	\label{res:free-property-of-monWith-and-tensWith}
	The composite forgetful functor 
		$\fpMulticat \to \Multicat \to \Sig$
	has a left adjoint, and the free fp-multicategory on $\sig$ is $\Syn{\monWith_{\sig}}$.
	This extends to symmetric structure: replace $\fpMulticat$ by 
	$\fpSymMulticat$ and $\monWith$ by $\linWith$.	
\end{lemma}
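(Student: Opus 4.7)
The plan is to bootstrap from the previous lemma, which identifies $\Syn{\basicMon_\sig}$ (resp.\ $\Syn{\basicLin_\sig}$) as the free (symmetric) multicategory on $\sig$. I would verify first that $\Syn{\monWith_\sig}$ carries the structure of an fp-multicategory, then establish its universal property directly via induction on terms.

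For the first step, the underlying multicategory of $\Syn{\monWith_\sig}$ is obtained from $\Syn{\basicMon_{\sig}}$ by extending the syntax with $\with$ types, pairing, and the let-construct, then quotienting by the $\beta\eta$-equations. I would define the product as $\smallprod_{i=1}^n A_i := \with_{i=1}^n A_i$, the pairing operation as $\seq{-}$ itself, and projections by $\pi_i^{\ind A} := \bigl( p : \with_{j=1}^n A_j \vdash \letruleforwith{x_i}{i}{p}{x_i} : A_i \bigr)$. The two $\beta\eta$-equations for $\with$ are precisely the equations in~(\ref{eq:product-structure-spelled-out}) once $\seq{\mmap_i}$ is substituted into a projection (for $\beta$) and conversely (for $\eta$), so this data indeed constitutes a universal arrow from the diagonal.

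For the universal property, given an fp-multicategory $\multicat$ and a signature homomorphism $f : \sig \to \forget(\multicat)$, I would define $\hat f : \Syn{\monWith_\sig} \to \multicat$ by mutual induction on types and terms. On types, set $\hat f(\gamma) := f(\gamma)$ on base sorts and $\hat f(\with_{i=1}^n A_i) := \smallprod_{i=1}^n \hat f(A_i)$ using the products in $\multicat$. On terms, variables are sent to identity multimaps; a constant application $c^\S(\vec t\,)$ goes to $\msub{f(c)}{\hat f(\vec t\,)}$; a pairing $\seq{t_1,\dots,t_n}$ goes to the pairing in $\multicat$; and $\letruleforwith{x_i}{i}{u}{v}$ goes to $\msub{\hat f(v)}{\dots, \msub{\pi_i}{\hat f(u)}, \dots}$, substituting the components of $\hat f(u)$ for the bound variables. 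Well-definedness amounts to checking that the $\beta\eta$-equations for $\with$ hold after interpretation: $\beta$ reduces to the factoring equation $\msub{\pi_i}{\seq{-}} = (-)_i$ in $\multicat$, and $\eta$ to uniqueness of the pairing, both guaranteed by the universal property of products in $\multicat$.

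Uniqueness of $\hat f$ is then forced: any product-preserving functor extending $f$ must agree with $\hat f$ on types (by strict preservation of $\with$) and on the generating multimaps (variables, constants, $\seq{-}$, and let), hence on all terms by preservation of composition. The symmetric variant is essentially the same, since any multicategory functor that preserves composition necessarily respects the symmetric action inherited from permutation of variables. The main obstacle is the well-definedness check, in particular unpacking the let-rule's interpretation to verify compatibility with substitution of terms into subterms of the form $\letruleforwith{x_i}{i}{u}{v}$; this is a routine but somewhat verbose induction, essentially a translation of the Substitution Lemma for this calculus into the language of universal arrows in $\multicat$.
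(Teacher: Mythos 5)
Your proof is correct and follows the standard argument the paper leaves implicit (the paper gives no proof of this lemma; its appendix proof of the analogous free property of $\Syn{\basicMon_{\sig}}$ is exactly this style of induction, and your identification of $\with$ with the product, $\seq{-}$ with pairing, and $\letruleforwith{\varx_i}{i}{p}{\varx_i}$ with $\pi_i$ is the intended reading of the rules following~(\ref{eq:product-structure-spelled-out})). One caveat: in the symmetric case, preservation of the group action is \emph{not} automatic for a composition-preserving multicategory functor---it is extra structure---so you must additionally check by induction on terms that your $\hat f$ commutes with the exchange rule; this is routine for the syntactic multicategory but should not be dismissed as a formal consequence of preserving composition.
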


Returning to the cartesian setting, we define products in a clone using the 
corresponding structure for multicategories and \Cref{res:clones-to-multicats}. 

\begin{definition}
	A \emph{cartesian clone} $\cartesian\clone$ is a clone $\clone$ equipped with a choice of finite products on $\toMulti\clone$.
	A \emph{(strict) homomorphism of cartesian clones} is a clone homomorphism 
		$\cloneFunctor$
	that strictly preserves all the product structure.
	We write $\CartClone$ for the category of cartesian clones and strict homomorphisms.
\end{definition}

Writing $\pi_i(\mmap)$ for the multimap
	$\csub	
		{\pi_i^{\ind\type}}
		{\mmap}$,
the rules~(\ref{eq:product-structure-spelled-out}) translate directly to the usual product rules of $\uplambda$-calculus. 
So cartesian clones exactly capture $\stpc$.

\begin{lemma}
	\label{res:free-property-of-stpc}
	The composite forgetful functor
		$\CartClone \to \Clone \to \Sig$
	has a left adjoint, and 
		$\Syn{\stpc_{\sig}}$	
	is the free cartesian clone on $\sig$.
\end{lemma}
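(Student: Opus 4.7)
The plan is to parallel the proof of Lemma~\ref{res:free-property-of-monWith-and-tensWith} in the cartesian setting. Two things must be verified: that $\Syn{\stpc_\sig}$ carries a cartesian clone structure, and that it enjoys the appropriate universal property over $\sig$.

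For the cartesian clone structure, the underlying clone is built exactly as in Lemma~\ref{res:internal-language-of-a-clone}: sorts are the types of $\stpc_\sig$, multimaps $\Ctx \to \type[2]$ are $\alpha\beta\eta$-equivalence classes of terms $\Ctx \vdash \mmap : \type[2]$, the $i$-th variable is identified with $\p{i}{\ind\type}$, and substitution is the admissible rule~(\ref{eq:substitution}). To equip the associated multicategory $\toMulti(\Syn{\stpc_\sig})$ with finite products in the sense of Definition~\ref{def:finite-product-multicategory}, I take the product object to be the syntactic product type $\smallprod_{i=1}^n \type_i$, with projections $\pi_i^{\ind\type}$ and tupling $\seq{\mmap_1, \dots, \mmap_n}$ given by the corresponding term-formers. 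The required isomorphism $\toMulti(\Syn{\stpc_\sig})(\Ctx; \smallprod_{i=1}^n \type_i) \iso \prod_{i=1}^n \toMulti(\Syn{\stpc_\sig})(\Ctx; \type_i)$ is precisely the content of the $\beta\eta$-rules of~(\ref{eq:product-structure-spelled-out}).

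For the universal property, let $\sigFunctor : \sig \to \forget\clone$ be a signature morphism into a cartesian clone $\clone$. Define $\sem{-} : \Syn{\stpc_\sig} \to \clone$ by recursion on types and terms: set $\sem{\groundtype} := \sigFunctor\groundtype$ and $\sembig{\smallprod_{i=1}^n \type_i} := \smallprod_{i=1}^n \sem{\type_i}$ using the chosen products on $\toMulti\clone$; for terms, send variables $\varx_i$ to $\p{i}{\ind{\sem\type}}$, send a constant $c$ applied to $\mmap[2]_1, \dots, \mmap[2]_n$ to $\csubbig{(\sigFunctor c)}{\sem{\mmap[2]_1}, \dots, \sem{\mmap[2]_n}}$, send $\seq{\mmap_1, \dots, \mmap_n}$ to the tuple $\seqbig{\sem{\mmap_1}, \dots, \sem{\mmap_n}}$ induced by the product in $\toMulti\clone$, and send $\pi_i(\mmap)$ to $\csubbig{\pi_i^{\ind{\sem\type}}}{\sem{\mmap}}$.

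The crucial step, which I expect to be the main obstacle, is verifying that $\sem{-}$ descends to $\alpha\beta\eta$-equivalence classes. The $\alpha$ case is automatic under the de Bruijn-style identification of variables with projections, and the product $\beta\eta$-laws hold by construction from the cartesian structure on $\toMirror$... I mean, on $\toMulti\clone$. The key remaining ingredient is a substitution lemma $\sembig{\csub{\mmap}{\mmap[2]_1/\varx_1, \dots, \mmap[2]_n/\varx_n}} = \csubbig{\sem{\mmap}}{\sem{\mmap[2]_1}, \dots, \sem{\mmap[2]_n}}$, proved by induction on $\mmap$ using the clone associativity axiom and the naturality of tupling (\Cref{res:universal-arrows-as-natural-isomorphisms}). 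Once established, $\sem{-}$ is a strict cartesian clone homomorphism extending $\sigFunctor$ by construction, and uniqueness follows because any such homomorphism is forced to agree with $\sem{-}$ on base types, constants, variables, tuples, and projections, which generate all of $\Syn{\stpc_\sig}$.
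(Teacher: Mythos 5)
Your proposal is correct and follows essentially the same route the paper intends: equip the syntactic clone of Lemma~\ref{res:internal-language-of-a-clone} with products via the term-formers, observe that the $\beta\eta$-rules of~(\ref{eq:product-structure-spelled-out}) give exactly the required isomorphisms, and obtain the unique extension as the inductively defined semantic interpretation, with the substitution lemma as the one real verification (the paper leaves this proof implicit, treating it as the cartesian instance of the pattern in Lemma~\ref{res:free-property-of-monWith-and-tensWith}). Only a cosmetic remark: the stray ``$\toMulti$\,irror'' self-correction should be cleaned up before this could be spliced into a writeup.
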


Using the characterisation of universal arrows in terms of natural isomorphisms we get the following refinement of \Cref{ex:cartesian-cat-to-clone}.

\begin{example}
	\label{ex:cartesian-category-to-cartesian-clone}
	For any cartesian category $\cartesian\cat$ the induced clone
		$\catToClone\cat$
	is cartesian, essentially by definition; this extends to a functor
		$\catToClone : \CartCat \to \CartClone$.
	Moroever, if $\cartesian\clone$ is a cartesian clone, then so is its nucleus $\nucleus\clone$. Hence $\nucleus{(-)}$ restricts to a functor $\CartClone \to \CartCat$.
\end{example}

The two functors in this example are actually adjoints, yielding our first version of the schema in~(\ref{eq:factoring-free-properties-schematically}). The unit is identity-on-objects and sends 
	$\mmap : \type_1, \dots, \type_n \to \type[2]$
	to 
	$\csub{\mmap}{\pi_1^{\ind\type}, \dots, \pi_n^{\ind\type}} : \prod_{i=1}^n \type_i \to \type[2]$.

\begin{restatable}{proposition}{FactoringAdjunctionForCartesianClones}
	\label{eq:nucleus-left-adjoint-to-catToClone}
	\label{res:factoring-adjunction-for-cartesian-clones}
	The functor 
		$\nucleus{(-)} : \CartClone \to \CartCat$
	fits into the following diagram of adjunctions:
	\[
		\begin{tikzcd}[ampersand replacement=\&]
			\Sig \& \CartClone \& \CartCat
			\arrow[""{name=0, anchor=center, inner sep=0}, "\free"{}, yshift=2mm, from=1-1, to=1-2]
			\arrow[""{name=1, anchor=center, inner sep=0}, "\forget"{}, yshift=-2mm, from=1-2, to=1-1]
			\arrow[""{name=2, anchor=center, inner sep=0}, "\nucleus{(-)}", yshift=2mm, from=1-2, to=1-3]
			\arrow[""{name=3, anchor=center, inner sep=0}, "\catToClone", yshift=-2mm, from=1-3, to=1-2]
			\arrow["\adjUp"{anchor=center, rotate=0}, draw=none, from=3, to=2]
			\arrow["\adjUp"{anchor=center, rotate=-0}, draw=none, from=0, to=1]
		\end{tikzcd}
	\]
	Moreover,
			$\forget \circ \catToClone$
	is equal to the canonical forgetful functor $\CartCat \to \Sig$.
	Hence, the free cartesian category on $\sig$ is canonically isomorphic to 
		$\nucleus{\Syn{\stpc_{\sig}}}$.
\end{restatable}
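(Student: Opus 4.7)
The plan is to build the adjunction $\nucleus{(-)} \dashv \catToClone$ directly from an explicit unit and counit, then deduce the equality of forgetful functors and the corollary by definition-chasing and uniqueness of left adjoints.

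For the unit I take $\eta_{\clone} : \clone \to \catToClone(\nucleus{\clone})$ to be the identity on sorts, sending a multimap $\mmap \in \clone(\type_1, \dots, \type_n; \type[2])$ to
$\csub{\mmap}{\pi_1^{\ind\type}, \dots, \pi_n^{\ind\type}} \in \clone\bigl(\smallprod_{i=1}^n \type_i; \type[2]\bigr) = \catToClone(\nucleus{\clone})(\ind\type; \type[2])$.
Preservation of the clone-ambient projections $\p{i}{\ind\type}$ is immediate from the first unit law, since the clone-projections of $\catToClone(\nucleus{\clone})$ at arity $(\ind\type;\type_i)$ are by definition the cartesian projections $\pi_i^{\ind\type}$. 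Preservation of substitution follows by associativity in $\clone$ together with the fact that composition and tupling in $\nucleus{\clone}$ are built from substitution against the $\pi_i^{\ind\type}$, as expressed by the natural-isomorphism reformulation of the product universal property (\Cref{res:universal-arrows-as-natural-isomorphisms}). Strict preservation of the chosen products is direct on product objects and projections, and reduces for tuplings to the same naturality calculation.

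The counit is easier: $\nucleus{\catToClone\cat}$ has the same objects as $\cat$ and hom-sets $\catToClone\cat(\type;\type[2]) = \cat(\type, \type[2])$ since $\smallprod_1 \type = \type$, inheriting its cartesian structure from $\cat$, so I set $\epsilon_{\cat} := \Id_{\cat}$. The triangle identities then reduce to clone-axiomatic calculations: for a unary $\mmap : \type \to \type[2]$ one has $\nucleus{\eta_{\clone}}(\mmap) = \csub{\mmap}{\pi_1^{\type}} = \mmap$, and applying $\catToClone(\epsilon_\cat) \circ \eta_{\catToClone\cat}$ to $\mmap \in \cat(\smallprod_i \type_i, \type[2])$ yields $\csub{\mmap}{\pi_1^{\ind\type}, \dots, \pi_n^{\ind\type}} = \mmap \circ \seq{\pi_1^{\ind\type}, \dots, \pi_n^{\ind\type}} = \mmap$, because the tupling of the product projections is the identity.

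The equality $\forget \circ \catToClone = \forget_{\CartCat}$ is immediate by unpacking \Cref{ex:cartesian-category-to-cartesian-clone}: both functors send $\cat$ to the signature with objects of $\cat$ as sorts and constants $\cat\bigl(\smallprod_{i=1}^n \type_i, \type[2]\bigr)$ of arity $(\ind\type;\type[2])$. Composing the two established adjunctions then yields $\nucleus{(-)} \circ \free \dashv \forget \circ \catToClone = \forget_{\CartCat}$, so by uniqueness of left adjoints $\nucleus{(-)} \circ \free$ is canonically isomorphic to the free-cartesian-category functor, and evaluating at $\sig$ identifies the free cartesian category on $\sig$ with $\nucleus{\Syn{\stpc_{\sig}}}$. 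The one place requiring real care is preservation of substitution by $\eta_{\clone}$, where one must carefully distinguish the clone-ambient projections $\p{i}{\ind\type}$ from the cartesian projections $\pi_i^{\ind\type}$; once that bookkeeping is in place, every check reduces to associativity of clone substitution together with the characterising property of cartesian tuplings supplied by the definition of products on $\toMulti\clone$.
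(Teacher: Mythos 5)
Your proposal is correct and matches the paper's proof in its essentials: you use exactly the same unit $\eta_{\clone}$ (identity on sorts, sending $\mmap$ to $\csub{\mmap}{\pi_1^{\ind\type}, \dots, \pi_n^{\ind\type}}$), and the same definition-chasing for $\forget \circ \catToClone$ and the corollary. The only difference is presentational: the paper establishes the adjunction by verifying the universal property of the unit (every clone homomorphism $\cloneFunctor : \clone \to \catToClone\cat[2]$ factors uniquely through $\eta_{\clone}$ via restriction to unary maps), whereas you exhibit the identity counit and check the triangle identities — an equivalent routine translation, relying on the same strictness convention $\nucleus{\catToClone\cat} = \cat$ that the paper also uses.
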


\subsection{Cartesian structure from representability}
\label{sec:representable-clones}
In the preceding section we defined products using a multi-ary version of the familiar universal property. 
There is another way to define `monoidal structure' in a multicategory: Hermida's
	\emph{representability}~\cite{Hermida2000}.
From the perspective of linear logic, the finite product structure explored above corresponds to the additive conjunction $\with$; Hermida's representability will correspond to the multiplicative conjunction $\tens$.
We shall also see that, for clones, the two are~equivalent.
%


\begin{definition} 
	\label{def:representable-multicategory}
	A \emph{representable multicategory} is a multicategory $\multicat$ equipped with
	a universal arrow 
		$\Big( 
					{\T}(\objX_1, \dots, \objX_n), 
					\univ_{\ind{\objX}} : \objX_1, \dots, \objX_n \to \T(\objX_1, \dots, \objX_n)
			\Big)$ 
	from 
		$\objX_1, \dots, \objX_n$
	to the identity $\id_{\multicat}$ for each 
		$\objX_1, \dots, \objX_n \in {\Obj\multicat}$; 
	we write 
		$\T_{i=1}^n \objX_i$
	for 
		$\T(\objX_1, \dots, \objX_n)$.
	These universal arrows must be closed under composition, so
		\[
			\objX_1, \dots, \objX_n, \objX[2]_1, \dots, \objX[2]_m
				\xra{ \seq{ \univ_{\ind{\objX}}, \univ_{\ind{\objX[2]}}  }}
			\T_{i=1}^n \objX_i, \T_{j=1}^m \objX[2]_j
				\xra{\univ}
			{\T}{\big(  {\T}_{i=1}^n \objX_i, \T_{j=1}^m \objX[2]_j \big)}
		\]
		must also be universal.
		A \emph{representable multicategory functor} $\multicatFunctor$ is a multicategory functor that preserves all the universal arrows, so that 
			$\multicatFunctor(\T_{i=1}^n \type_i)
				= \T_{i=1}^n \multicatFunctor\type_i$,
			$\multicatFunctor(\univ_{\ind{\type}})
				= \univ_{\multicatFunctor\ind{\type}}$
		and 
			$\multicatFunctor(\ext\mmap) = \ext{\multicatFunctor\mmap}$.
		Write $\RepMulticat$ for the category of representable multicategories, and $\SymRepMulticat$ for the category of representable multicategories whose underlying multicategories are also symmetric, with functors preserving both structures.
\end{definition}

\begin{example}[{\cf~\Cref{ex:multicategories}}]
	\label{ex:monoidal-cat-to-representable-multicategory}
	The multicategory $\monToMulti\cat$ induced by a monoidal category 
		$\monoidal{\cat}$
	is representable. 
	We therefore obtain functors 
		$\MonCat \to \RepMulticat$ 
	and
		$\SymMonCat \to \SymRepMulticat$;
	we denote them both $\monToMulti$.
\end{example}

A representable multicategory is a multicategory equipped with rules which are dual to those in~(\ref{eq:product-structure-spelled-out}) in the sense that the universal arrow goes the other direction.
%
%
%
Indeed, writing 
	$\varx_1 \tens \dots \tens \varx_n$
for $\univ_{\ind\type}$, and 
	$\letrule{(\varx_1, \dots, \varx_n)}{p}{\mmap}$
for $\ext{\mmap}$, and extending this to all terms by 
\begin{equation*}
	\begin{aligned}
		 \mmap[2]_1 \tens \dots \tens \mmap[2]_n 
		 &:=
			\csub
			 	{(\varx_1 \tens \dots \tens \varx_n)}
			 	{\mmap[2]_1 / \varx_1, \dots, \mmap[2]_n / \varx_n}
		\\[2mm]
		\letrule{(\varx_1, \dots, \varx_n)}{\mmap[2]}{\mmap}
		&:=
			\csub 
				{\big(  \letrule{(\varx_1, \dots, \varx_n)}{p}{\mmap}  \big)}
				{  \mmap[2] / p  }
	\end{aligned}
\end{equation*}
we obtain the following rules, where 
	$\Ctx := (\varx_i : \type_i)_{i=1, \dots, n}$:
\begin{equation} \label{eq:rules-for-otimes}
	\hspace{-2mm}
	\small
	\unaryRule
			{ ( \Ctx[2]_i \vdash \mmap[2]_i : \type_i)_{i=1, \dots, n} }
			{ \Ctx[2]_1, \dots, \Ctx[2]_n 
					\vdash\otimes_{i=1}^n \mmap[2]_i
					:  \bigotimes_{i=1}^n \type_i } 
			{}
	\hspace{0mm}
	,
	\hspace{0mm}
	\binaryRule
			{\Lambda, \Ctx, \Ctx[3] \vdash \mmap : \type[2]  }
			{	\Ctx[2] \vdash \mmap[2] : \bigotimes_{i=1}^n \type_i }
			{ \Lambda, \Ctx[2], \Ctx[3] \vdash 
					\letrule{(\varx_1, \dots, \varx_n)}{\mmap[2]}{\mmap} : \type[2] }
			{}
\end{equation}
\begin{equation*}	
	\hspace{-2mm}
	\letrule
		{(\varx_1, \dots, \varx_n)}
		{p}
		{\csub
			{\mmap}
			{\otimes_{i=1}^n \varx_i / p}
		}
	= 
		\mmap 
	\hspace{1mm}
	,
	\hspace{1mm}
	\letrule
		{(\varx_1, \dots, \varx_n)}
		{\otimes_{i=1}^n \varx_i}
		{\mmap}
	= 
	\mmap
\end{equation*}

We write 
	$\monTens_{\sig}$
	(resp. $\linTens_\sig$)
for the extension of 
	$\basicMon_{\sig}$
	(resp. $\basicLin_{\sig}$)
with these rules. 
This is essentially the tensor fragment of Abramsky's linear $\uplambda$-calculus~\cite{Abramsky1993}. 
The connection with multicategories was already made in by 
	Hyland~\&~de Paiva~\cite{Hyland1993}, 
who showed this type theory arises from Lambek's 
	\emph{monoidal multicategories}~\cite{Lambek1989}.

\begin{lemma}
	The composite forgetful functor
		$\RepMulticat \to \Multicat \to \Sig$
	has a left adjoint, and the free representable multicategory on $\sig$ is the syntactic multicategory $\Syn{\monTens_\sig}$.
	The same holds for symmetric structure, if one replaces $\RepMulticat$	by $\SymRepMulticat$ and $\monTens$ by $\linTens$.
\end{lemma}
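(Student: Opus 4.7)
The plan is to mirror the strategy used for \Cref{res:free-property-of-monWith-and-tensWith}: first show that the syntactic multicategory $\Syn{\monTens_\sig}$ is indeed representable, and then check it satisfies the required universal property among representable multicategories.

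For representability, I would define $\T_{i=1}^n \type_i := \bigotimes_{i=1}^n \type_i$ on objects, and take the universal arrow $\univ_{\ind\type} : \type_1, \dots, \type_n \to \bigotimes_{i=1}^n \type_i$ to be the term $\varx_1 \tens \dots \tens \varx_n$. Given any multimap $\mmap : \type_1, \dots, \type_n \to \type[2]$, corresponding to a term $\varx_1 : \type_1, \dots, \varx_n : \type_n \vdash \mmap : \type[2]$, set $\ext{\mmap} := \letrule{(\varx_1, \dots, \varx_n)}{p}{\mmap}$ in context $p : \bigotimes_{i=1}^n \type_i$. The $\beta$-equation for the \textsf{let}-binder gives $\msub{\ext{\mmap}}{\univ_{\ind\type}} = \mmap$, and the $\eta$-equation, together with the substitution lemma, yields uniqueness of $\ext\mmap$ among terms satisfying this equation. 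Closure under composition reduces to checking that the composite $\univ \circ \seq{\univ_{\ind\objX}, \univ_{\ind{\objX[2]}}}$ remains universal; this follows from the fact that iterated \textsf{let}-bindings can be normalised using the $\beta\eta$-equations, so the same $\ext{(-)}$ construction gives unique fillers.

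For the free property, suppose $\sig \xra{\sigFunctor} \forget\multicat$ is a signature homomorphism into a representable multicategory $\multicat$. I would define a representable multicategory functor $\ext\sigFunctor : \Syn{\monTens_\sig} \to \multicat$ by induction on terms: on basic sorts and constants $\ext\sigFunctor$ agrees with $\sigFunctor$; variables go to identities (i.e.\ to projection multimaps in $\Syn{\monTens_\sig}$ and identities in $\multicat$); substitution of constants is interpreted using the multicategory composition of $\multicat$; tensor introduction $\otimes_{i=1}^n \mmap[2]_i$ is interpreted using the universal arrow $\univ$ of $\multicat$; and the \textsf{let}-elimination is interpreted via the induced $\ext{(-)}$ operation in $\multicat$. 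The $\beta\eta$-equations of $\monTens_\sig$ hold automatically in $\multicat$ because the universal-arrow structure there satisfies the equivalent conditions of \Cref{res:universal-arrows-as-natural-isomorphisms}; this ensures $\ext\sigFunctor$ is well-defined on equivalence classes. Uniqueness of $\ext\sigFunctor$ follows since every term of $\monTens_\sig$ is built from constants, variables, substitution, $\otimes_{i=1}^n$ and \textsf{let}, all of whose interpretations are fixed by the requirements that $\ext\sigFunctor$ extends $\sigFunctor$, is a multicategory functor, and preserves the universal arrows.

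For the symmetric extension, the syntactic symmetric multicategory $\Syn{\linTens_\sig}$ carries a permutation action induced by reordering of contexts (admissible via the permutation rule in $\basicLin_\sig$), and the argument above goes through essentially unchanged once one checks that the tensor introduction rule of $\linTens_\sig$ is compatible with permutations---which is immediate since the universal arrow of a representable symmetric multicategory commutes with the symmetric action. The main obstacle I anticipate is bookkeeping around the ordered contexts in the definition of $\ext{(-)}$ for composites of universal arrows, namely showing that the composite universal property of \Cref{def:representable-multicategory} translates correctly through iterated \textsf{let}-bindings; this is where the associativity of tensor (derivable from the universal property, not assumed syntactically) has to be deployed carefully to match the bracketing chosen for $\bigotimes_{i=1}^n \type_i$.
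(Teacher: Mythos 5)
Your proof is correct and is the standard argument: the paper itself does not spell out a proof of this lemma (it defers to Hyland \& de Paiva and Abramsky), but your construction---universal arrows as the tensor-introduction terms, $\ext{(-)}$ as \textsf{let}-binding, the $\beta$-law for existence and the $\eta$-law for uniqueness, and the evident inductive extension of a signature map for the free property---matches exactly the pattern of the free-property proofs the paper does give, e.g.\ for $\Syn{\basicMon_{\sig}}$ in the appendix. One small simplification of your final worry: since $\bigotimes_{i=1}^n$ is an $n$-ary primitive of $\monTens_{\sig}$ rather than an iterated binary tensor, no associativity bookkeeping is needed for the basic universal arrows; the only content of the closure-under-composition condition is that nested \textsf{let}-bindings give unique fillers, which is just your $\beta\eta$ argument applied twice.
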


Combining this lemma with \Cref{res:free-property-of-monWith-and-tensWith}, one sees that a multicategory equipped with representable and finite-product structure corresponds to a linear type theory with both $\tens$ and $\with$. 
%

We can also obtain a linear version of \Cref{res:factoring-adjunction-for-cartesian-clones}.
Hermida~\cite{Hermida2000} showed that the 2-category of representable multicategories is 2-equivalent to the \mbox{2-category} of monoidal categories, and Weber showed this extends to the symmetric case~\cite{Weber2013}. 
From these constructions one can extract functors 
	$\monToMulti : \RepMulticat \to \MonCat$
and 
	$\monToMulti_\sym : \SymRepMulticat \to \SymMonCat$
sending a (symmetric) representable multicategory to a (symmetric) monoidal structure on its nucleus, together with equivalences
	$\RepMulticat \simeq \MonCat$
and
	$\SymRepMulticat \simeq \SymMonCat$.
So we get the following.

\begin{proposition}
	\label{res:factoring-monoidal-categories-interpretation}
	The functors 
		$\multiToMon$
	and 
		$\multiToMon_{\sym}$ 
	fit into the following diagram of adjunctions, where in each case the right-hand adjunction is an equivalence:
	\[
		\begin{tikzcd}[column sep = 0em, row sep = 1.5em]
			& \RepMulticat
				\arrow[phantom,
							start anchor={[shift={(-1pt,2pt)}]south east},
							end anchor={[shift={(-12pt,1pt)}]north}
							]{dr}[marking, yshift=-1.5mm, xshift=-1mm]
						{{\scriptstyle {\smalladjUp}\simeq}} \\
			\Sig && \MonCat
			\arrow[""{name=0, anchor=center, inner sep=0}, "\free"{}, bend left 
			= 6, 
			xshift = -3mm, from=2-1, to=1-2]
			\arrow[""{name=1, anchor=center, inner sep=0}, "\forget"{}, bend left = 12, from=1-2, to=2-1]
			\arrow[""{name=2, anchor=center, inner sep=0}, "\multiToMon", bend 
			left = 20, from=1-2, to=2-3, 
				start anchor={[shift={(-1pt,2pt)}]south east},
				end anchor={[shift={(-12pt,1pt)}]north}					
			]
			\arrow[""{name=3, anchor=center, inner sep=0}, "\monToMulti", bend 
			left = 20, from=2-3, to=1-2, shift left = 1mm, 
				start anchor={[shift={(-22pt,1pt)}]north},
				end anchor={[shift={(-10pt,2pt)}]south east}
			]
			\arrow["\dashv"{anchor=center, rotate=-45}, draw=none, from=0, to=1]
		\end{tikzcd}
		\hspace{.4cm}
		\begin{tikzcd}[column sep = -.8em, row sep = 1.5em ]
			& \SymRepMulticat 
			\arrow[phantom,
						start anchor={[shift={(-1pt,2pt)}]south east},
						end anchor={[shift={(-12pt,1pt)}]north}
						]{dr}[marking, yshift=-2mm, xshift=-1mm]
					{{\scriptstyle {\smalladjUp}\simeq}} \\
			\Sig && \SymMonCat
			\arrow[""{name=0, anchor=center, inner sep=0}, 
			"\free"{yshift=-1.5mm, 
			xshift=-1mm}, bend left 
			= 17, from=2-1, to=1-2]
			\arrow[""{name=1, anchor=center, inner sep=0}, "\forget"{}, bend 
			left = 14, from=1-2, to=2-1,
				start anchor={[shift={(-18pt,-1pt)}]south}
			]
			\arrow[""{name=2, anchor=center, inner sep=0}, 
			"\multiToMon_{\sym}", bend left  = 12, from=1-2, to=2-3,
				start anchor={[shift={(-1pt,2pt)}]south east},
				end anchor={[shift={(-14pt,1pt)}]north}				
				]
			\arrow[""{name=3, anchor=center, inner sep=0}, 
			"\monToMulti_{\sym}", bend left = 12, from=2-3, to=1-2,
				xshift=-2mm,
				start anchor={[shift={(-22pt,1pt)}]north},
				end anchor={[shift={(-10pt,2pt)}]south east}			
			]
			\arrow["\dashv"{anchor=center, rotate=-40}, draw=none, from=0, to=1]
		\end{tikzcd}
	\]
	Moreover,
			$\forget \circ \monToMulti$
	and 
		$\forget \circ \monToMulti_\sym$
	are both equal to the canonical forgetful functor to $\Sig$.
	Hence, the free monoidal 
		(resp. symmetric monoidal)	
	category on a signature $\sig$ is canonically isomorphic to 
		$
		\multiToMon{\big(\Syn{\monTens_\sig}\big)}
		$
		(resp. $\multiToMon{\big(\Syn{\linTens_\sig}\big)}$). 
\end{proposition}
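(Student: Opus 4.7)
The plan is to chain the adjunction from the preceding lemma, between $\Sig$ and $\RepMulticat$, with the equivalence of Hermida~\cite{Hermida2000} between $\RepMulticat$ and $\MonCat$. Since every equivalence is in particular an adjunction with $\multiToMon$ as left adjoint and $\monToMulti$ as right adjoint, composing the two adjunctions yields $\multiToMon \circ \free \dashv \forget \circ \monToMulti$, which is exactly the outer adjunction displayed in the first diagram. The second diagram is handled identically, replacing Hermida's equivalence by Weber's~\cite{Weber2013}.

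For the ``moreover'' clause I would simply unfold definitions. The functor $\forget : \RepMulticat \to \Sig$ sends $\multicat$ to the signature with basic sorts $\Obj\multicat$ and constants $\multicat(\ind\type; \type[2])$. By \Cref{ex:monoidal-cat-to-representable-multicategory}, $\monToMulti\cat$ has objects those of $\cat$ and multimaps $\cat(\bigotimes_i \type_i, \type[2])$, so $\forget \circ \monToMulti$ assigns to $\cat$ precisely the signature that the canonical forgetful functor $\MonCat \to \Sig$ assigns, and similarly for the symmetric case.

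The free-object claim then follows formally: applying the composite left adjoint $\multiToMon \circ \free$ to $\sig$ gives $\multiToMon(\Syn{\monTens_\sig})$, which is therefore left adjoint to the canonical forgetful functor and hence is the free monoidal category on $\sig$; the wording ``canonically isomorphic'' records uniqueness of left adjoints up to isomorphism. In the symmetric case one obtains $\multiToMon_{\sym}(\Syn{\linTens_\sig})$ as the free symmetric monoidal category by the same reasoning.

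The main obstacle I anticipate is reconciling the paper's strictness conventions with the classical equivalences. Here $\MonCat$, $\SymMonCat$, $\RepMulticat$ and $\SymRepMulticat$ all consist of strict structure-preserving morphisms, so I would need to verify that Hermida's and Weber's equivalences restrict appropriately. This holds because $\multiToMon$ constructs a tensor on $\nucleus\multicat$ directly from the chosen universal arrows: a representable functor preserves those arrows strictly iff the induced monoidal functor preserves the tensor strictly, and $\monToMulti$ is strictly structure-preserving by construction.
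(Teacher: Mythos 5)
Your proposal is correct and follows essentially the same route as the paper: the paper likewise obtains the right-hand equivalences by citing Hermida and Weber, composes them with the free--forgetful adjunction of the preceding lemma, and reads off the ``moreover'' clause by unfolding the definitions of $\forget$ and $\monToMulti$. Your closing remark about checking that the classical equivalences restrict to the strict setting is a point the paper leaves implicit, and your justification for it is sound.
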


%
We now turn to studying representability in the cartesian setting.

\begin{definition}
	A \emph{representable clone} is a clone $\clone$ equipped with a choice of representable structure on $\toMulti\clone$.
	A \emph{representable clone homomorphism} is a clone homomorphism  which preserves the representable structure 
	as in \Cref{def:representable-multicategory}.
\end{definition}

A cartesian clone makes the \emph{projections} primitive
	(recall~(\ref{eq:product-structure-spelled-out})), 
but a representable clone makes the \emph{pairing operation} primitive
	(recall~(\ref{eq:rules-for-otimes})).
It turns out these perspectives are equivalent. In the proof-theoretic setting such ideas are well-studied (\cf~the equivalence of G-systems and N-systems in~\cite[\S 3.3]{Troelstra2000}); the categorical statement has also been made by Pisani~\cite{Pisani2014} and Shulman~\cite{Shulman2023}.
%
	
\begin{restatable}{proposition}{RepresentableEqualsCartesian}
	\label{res:representable-equals-cartesian}
	Equipping a clone $\clone$ with representable structure is equivalent to equipping $\clone$ with cartesian structure. 
\end{restatable}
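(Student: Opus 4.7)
The plan is to exhibit mutually inverse constructions between cartesian and representable structure on a fixed clone $\clone$, exploiting the fact that in a clone (unlike a general multicategory) substitution requires all substituents to share a single context. This is precisely what collapses the `universal from' and `universal to' formulations of the product.

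For the cartesian-to-representable direction, given finite products $\bigl(\prod_{i=1}^n \type_i, \pi_i^{\ind\type}, \seq{-}\bigr)$, I would set $\T(\type_1, \dots, \type_n) := \prod_{i=1}^n \type_i$ and take the universal arrow $\univ_{\ind\type} : \type_1, \dots, \type_n \to \prod_{i=1}^n \type_i$ to be the pairing of the clone's built-in projections: $\univ_{\ind\type} := \seq{\p{1}{\ind\type}, \dots, \p{n}{\ind\type}}$. To verify universality, for any $\mmap : \type_1, \dots, \type_n \to \type[2]$ I would define $\ext{\mmap} := \csub{\mmap}{\pi_1^{\ind\type}, \dots, \pi_n^{\ind\type}}$ and check $\csub{\ext\mmap}{\univ_{\ind\type}} = \mmap$ using associativity of substitution together with the cartesian equation $\csub{\pi_i^{\ind\type}}{\seq{\mmap_1, \dots, \mmap_n}} = \mmap_i$ and the clone projection law. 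Uniqueness of $\ext\mmap$ reduces to the identity $\csub{\univ_{\ind\type}}{\pi_1^{\ind\type}, \dots, \pi_n^{\ind\type}} = \id_{\prod \type_i}$, which in turn follows from uniqueness of pairing by computing $\csub{\pi_i^{\ind\type}}{-}$ on each side.

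For the converse, given representable structure $(\T, \univ)$, I would define $\prod_{i=1}^n \type_i := \T(\ind\type)$, projections $\pi_i^{\ind\type} := \ext{\p{i}{\ind\type}}$, and pairing $\seq{\mmap_1, \dots, \mmap_n} := \csub{\univ_{\ind\type}}{\mmap_1, \dots, \mmap_n}$. The equation $\csub{\pi_i^{\ind\type}}{\seq{\mmap_1, \dots, \mmap_n}} = \mmap_i$ is immediate from associativity and the defining property of $\ext{\p{i}{\ind\type}}$. The substantive step is uniqueness of pairing: given $\maph : \Ctx \to \T(\ind\type)$ with $\csub{\pi_i^{\ind\type}}{\maph} = \mmap_i$ for every $i$, I must show $\maph = \seq{\mmap_1, \dots, \mmap_n}$. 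I would derive this from the identity $\csub{\univ_{\ind\type}}{\pi_1^{\ind\type}, \dots, \pi_n^{\ind\type}} = \id_{\T(\ind\type)}$, which itself follows by applying the representability bijection $\csub{-}{\univ_{\ind\type}}$ to both sides and using the clone projection law.

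The main obstacle is verifying closure of $\univ$ under composition in the cartesian-to-representable direction: the composite arrow $\type_1, \dots, \type_n, \type[2]_1, \dots, \type[2]_m \to \T\bigl(\prod_{i=1}^n \type_i, \prod_{j=1}^m \type[2]_j\bigr)$ must itself be universal. I would handle this by constructing, for each $\mmap : \type_1, \dots, \type_n, \type[2]_1, \dots, \type[2]_m \to \type[3]$, its unique extension through the composite via clone-substitution of $\mmap$ along the maps $\csub{\pi_i^{\ind\type}}{q_1}$ and $\csub{\pi_j^{\ind{\type[2]}}}{q_2}$, where $q_1, q_2$ are the outer product projections; uniqueness follows by iterating the argument above. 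Finally, I would verify the two translations are mutually inverse by unfolding definitions, which reduces to clone axioms and uniqueness of the associated universal arrows on both sides.
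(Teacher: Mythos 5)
Your proposal is correct and follows essentially the same route as the paper: the same constructions in both directions ($\univ_{\ind\type} := \seq{\p{1}{\ind\type}, \dots, \p{n}{\ind\type}}$, $\ext{\mmap} := \csub{\mmap}{\pi_1^{\ind\type}, \dots, \pi_n^{\ind\type}}$, $\pi_i^{\ind\type} := \ext{(\p{i}{\ind\type})}$, pairing via substitution into $\univ_{\ind\type}$), hinging on the same two identities relating $\univ$ and the projections. The only cosmetic difference is that the paper factors the closure-under-composition check into a separate lemma showing it is automatic for clones, using exactly the iterated projections you describe.
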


In 
	\Cref{res:factoring-monoidal-categories-interpretation}
we gave an equivalence of categories 
but in 
	\Cref{res:factoring-adjunction-for-cartesian-clones}
we only gave an adjunction. 
We can now upgrade~the latter to an equivalence. Indeed, $\nucleus{(-)} \circ \catToClone$ is equal to the identity. On the other hand, if $\cartesian\clone$ is a cartesian clone then by \Cref{res:representable-equals-cartesian} and \Cref{res:universal-arrows-as-natural-isomorphisms} we have a multi-natural isomorphism 
	$\clone(\type_1, \dots, \type_n; \type[2]) \iso \clone(\prod_{i=1}^n \type_i; \type[2]) = \catToClone(\nucleus\clone)(\type_1, \dots, \type_n; \type[2])$.


\begin{restatable}{corollary}{CartCloneEquivalentToCartCat}
	\label{res:cartclone-equivalent-to-cartcat}
	The functors $\catToClone$ and $\nucleus{(-)}$ of
		\Cref{res:factoring-adjunction-for-cartesian-clones}
	define an adjoint equivalence 
		$\CartClone \simeq \CartCat$.
\end{restatable}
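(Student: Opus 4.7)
The plan is to show both the unit and counit of the adjunction $\nucleus{(-)} \dashv \catToClone$ from \Cref{res:factoring-adjunction-for-cartesian-clones} are natural isomorphisms, which promotes it to an adjoint equivalence. The key ingredients are already gathered in the paragraph preceding the statement; what remains is to assemble them.

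For the counit, I would observe that for any cartesian category $\cat$, the clone $\catToClone\cat$ has the same objects as $\cat$ and hom-sets $(\catToClone\cat)(\type_1, \dots, \type_n; \type[2]) = \cat(\smallprod_{i=1}^n \type_i; \type[2])$. Restricting to unary multimaps (with the convention $\smallprod_{i=1}^1 \type = \type$) yields $\nucleus{(\catToClone\cat)}(\type; \type[2]) = \cat(\type; \type[2])$ on the nose, so $\nucleus{(-)} \circ \catToClone = \Id_{\CartCat}$ and the counit is the identity natural transformation.

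For the unit $\eta_\clone : \clone \to \catToClone(\nucleus\clone)$ at a cartesian clone $\cartesian\clone$, which by \Cref{res:factoring-adjunction-for-cartesian-clones} is the identity on objects and sends $\mmap : \type_1, \dots, \type_n \to \type[2]$ to $\csub{\mmap}{\pi_1^{\ind\type}, \dots, \pi_n^{\ind\type}}$, I would use representability. By \Cref{res:representable-equals-cartesian} the cartesian structure on $\clone$ induces representable structure, with universal arrow $\univ_{\ind\type} := \seq{\p{1}{\ind\type}, \dots, \p{n}{\ind\type}} : \type_1, \dots, \type_n \to \smallprod_i \type_i$ from $(\type_1, \dots, \type_n)$ to the identity functor. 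Applying \Cref{res:universal-arrows-as-natural-isomorphisms}(2) then yields natural bijections $\psi_{\type[2]} : \clone(\smallprod_i \type_i; \type[2]) \xrightarrow{\iso} \clone(\type_1, \dots, \type_n; \type[2])$, $\mmap[2] \mapsto \csub{\mmap[2]}{\univ_{\ind\type}}$.

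The main task---really just bookkeeping---is to verify that $\psi$ and $\eta_\clone$ are componentwise inverse, which follows from the product rules~(\ref{eq:product-structure-spelled-out}) together with the clone associativity and unit laws, using the identities $\csub{\pi_i^{\ind\type}}{\univ_{\ind\type}} = \p{i}{\ind\type}$ and $\seq{\pi_1^{\ind\type}, \dots, \pi_n^{\ind\type}} = \id_{\smallprod_i \type_i}$. Since $\eta_\clone$ is a strict homomorphism of cartesian clones (as unit of the adjunction) and is componentwise bijective, its inverse is automatically a strict homomorphism too, so $\eta_\clone$ is an isomorphism in $\CartClone$. Combined with the identity counit, this gives the desired adjoint equivalence.
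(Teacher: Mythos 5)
Your proposal is correct and follows essentially the same route as the paper's proof: the counit is the identity because $\nucleus{(-)} \circ \catToClone = \Id_{\CartCat}$, and the unit $\mmap \mapsto \csub{\mmap}{\pi_1^{\ind\type}, \dots, \pi_n^{\ind\type}}$ is inverted by $\csub{(-)}{\seq{\p{1}{\ind\type}, \dots, \p{n}{\ind\type}}}$, exactly the computation already carried out in \Cref{res:representable-equals-cartesian}. Your extra remark that a bijective strict homomorphism has a strict homomorphism as inverse is a standard point the paper leaves implicit.
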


\subsection
	{Recovering the semantic interpretation and syntactic model}
\label{sec:recovering-the-semantics-for-cartesian}
We now show how the usual semantic interpretation, syntactic model, and soundness and completeness results can be derived from the multi-ary framework. 
Although we shall not pursue the point in detail for reasons of space, essentially the same argument holds for all the calculi considered in this paper.

\paragraph*{\bf Semantic interpretation and soundness.}
We recover the usual  semantic interpretation of $\stpc$ in a cartesian category by
	\Cref{res:free-property-of-stpc}
and
	\Cref{ex:cartesian-category-to-cartesian-clone} as follows. 
Let $\forget : \CartCat \to \Sig$ be the functor sending a cartesian category 
	$\cartesian\cat$
to the signature with objects those of $\cat$ and constants
	$\big\{ \cat(\prod_{i=1}^n \type_i, \type[2]) \big\}_{\type_1, \dots, \type_n, \type[2] \in \cat}$.
An interpretation 
	${\interp : \sig \to \forget\cat}$
of basic types and constants in $\cat$ is exactly an interpretation 
	$\interp : \sig \to \forget(\catToClone\cat)$
in the induced cartesian clone. 
The unique extension $\interp\sem{-} : \Syn{\stpc_\sig} \to \catToClone\cat$ sends a term
	$\varx_1 : \type_1, \dots, \varx_n : \type_n \vdash \mmap : \type[2]$
to a multimap 
	$\interp\sem{\varx_1 : \type_1, \dots, \varx_n : \type_n} \to \interp\sem{\type[2]}$
in $\catToClone\cat$, which is exactly a map
	$\prod_{i=1}^n \interp\sem{\type_i} \to \interp\sem{\type[2]}$
in $\cat$. It is not hard to show this coincides with the usual, inductively defined semantic interpretation.
Unlike with the unary approach, we do not need to prove soundness with respect to $\beta\eta$ as a separate lemma: this holds immediately from the fact $\interp\sem{-}$ is a cartesian clone homomorphism. 

Moreover, for any objects $\type_1, \dots, \type_n$ in a cartesian clone one can construct a `multi-isomorphism' 
	$(\type_1, \dots, \type_n) \iso \prod_{i=1}^n \type_i$ 
	(see~\cite[Lemma 4.2.16]{mythesis}).
Hence, in a cartesian simple type theory with products, \emph{contexts must coincide with product types}. Together with the preceding, this provides a mathematical explanation for the identification of contexts and product types in the interpretation of $\stlpc$.

For a further example of a soundness property which normally requires a separate, inductively proven, lemma consider the following (\cf~\cite[Lemma~3.5.2]{Crole1994}). 
Fix a context $\Ctx := (\varx_i : \type_i)_{i=1, \dots, n}$ and a term
	$\Ctx \vdash \mmap : \type[2]$.
Then $\mmap$ weakens to
	$\Ctx, y : \type_{n+1} \vdash \mmap^y : \type[2]$, 
which is exactly the term obtained from the substitution rule~(\ref{eq:substitution}) using the $n$ variables $(\Ctx, y : \type_{n+1}  \vdash \varx_i : \type_i)_{i=1, \dots, n}$, so 
	$\mmap^y = \mmap{[\varx_1 / \varx_1, \dots, \varx_n / \varx_n]}$.
In other words, in the clone $\Syn{\stpc_{\sig}}$ we have that 
	$\mmap^y = \csub{\mmap}{\p{1}{\Ctx, \type_{n+1}}, \dots, \p{n}{\Ctx, \type_{n+1}}}$.
By the clone structure of $\catToClone\cat$ and the fact any clone homomorphism preserves substitution, it follows that
	$\interp\sem{\mmap^y} : \prod_{i=1}^{n+1} \interp\sem{\type_i} \to \interp\sem{\type[2]}$ is
$
	\interp\sem{\mmap^y}
		= \interp\sem{\csub{\mmap}{\p{1}{\Ctx, \type_{n+1}}, \dots, \p{n}{\Ctx, \type_{n+1}}}}
		= \interp\sem{\mmap} \circ \seq{\pi_1, \dots, \pi_n}
$. 

\paragraph*{\bf Syntactic model.} 
We extract the construction from \Cref{res:factoring-adjunction-for-cartesian-clones}.
For a signature $\sig$ the cartesian category
	$\nucleus{\Syn{\stpc_{\sig}}}$
has objects the types of $\stpc_\sig$ and morphisms
	$\type \to \type[2]$
given by $\alpha\beta\eta$-equivalence classes of terms 
	$\varx : \type  \vdash \mmap : \type[2]$
for a fixed variable $\varx$. 
%
%
Composition is substitution and the identity on $\type$ is the variable $\varx : \type$. 
%
%
The projections are 
	$\varx : \prod_{i=1}^n \type_i \vdash \pi_i^{\ind\type}(x) : \type_i$
and the pairing of the maps
	$( 
			\varx : \type[3] \vdash \mmap_i : \type_i 
		)_{i=1, 2}$
is
	$\varx : \type[3] \vdash \seq{\mmap_1, \mmap_2} : \type_1 \times \type_2$.
%
The usual proofs that this is indeed cartesian (see~\eg~\cite[Chapter 3]{Crole1994}) have been replaced by the simple observation of \Cref{ex:cartesian-category-to-cartesian-clone}.

\paragraph*{\bf Completeness.}
Once again, the proof is largely category-theoretic.
Note first that the functor $\nucleus{(-)} : \CartClone \to \CartCat$ is faithful. One can prove this directly using \Cref{res:representable-equals-cartesian} or infer it from  \Cref{res:cartclone-equivalent-to-cartcat} and the fact any equivalence is fully faithful. In any case, it follows by standard results (\eg~\cite[Lemma 4.5.13]{Riehl2016}) that the unit $\eta'$ of the adjunction $\nucleus{(-)} \dashv \catToClone$ is monic. 
Just as in $\Cat$, any monomorphism of clones is injective on objects and injective on multimaps.
It suffices, therefore, to find a semantic interpretation 
	$\iota\sem{-}$
which is equal to a component of
	$\eta'$.
This is accomplished by the next lemma.

\begin{lemma}
	Let 
		\begin{tikzcd}
			\cat 
						\arrow[yshift=1.2mm]{r}{\functor}
						\arrow[phantom]{r}{\adjUp} &
			\cat[2] 
						\arrow[yshift=-1.2mm]{l}{\forget}
						\arrow[yshift=1.2mm]{r}{\functor'}
						\arrow[phantom]{r}{\adjUp} &
			\cat[3]	\arrow[yshift=-1.2mm]{l}{\forget'}
		\end{tikzcd}
	be adjunctions with units $\eta : \id_{\cat} \To \forget\functor$ and $\eta' : \id_{\cat[2]} \To \forget'\functor'$. 
	Then for any $C \in \cat$, the unit $\eta'_{\functor C} : \functor C \to \forget'\functor'\functor C$ is the unique map $h$ such that the following diagram commutes:
	\begin{td}
		\forget\functor C
			\arrow[dashed]{r}{\forget h} &
		\forget\forget'\functor'\functor C
		\\
		C 
			\arrow{u}{\eta_C}
			\arrow[swap]{r}{\eta_C} &
		\forget\functor C
			\arrow[swap]{u}{\forget\eta'_{\functor C}}
	\end{td}
\end{lemma}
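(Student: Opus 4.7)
The argument is a direct application of the universal property of the first adjunction. Recall that the unit $\eta_C : C \to \forget\functor C$ is universal in the sense that for any object $D \in \cat[2]$ and any morphism $f : C \to \forget D$ in $\cat$, there is a unique morphism $\widehat{f} : \functor C \to D$ in $\cat[2]$ satisfying $\forget\widehat{f} \circ \eta_C = f$.

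Apply this to the morphism $f := \forget\eta'_{\functor C} \circ \eta_C : C \to \forget(\forget'\functor'\functor C)$, taking $D := \forget'\functor'\functor C$. This yields a unique morphism $h : \functor C \to \forget'\functor'\functor C$ in $\cat[2]$ with $\forget h \circ \eta_C = \forget\eta'_{\functor C} \circ \eta_C$, which is exactly the commutativity of the displayed diagram. It remains to verify that $h = \eta'_{\functor C}$ satisfies this equation, which is trivial: the equation $\forget\eta'_{\functor C} \circ \eta_C = \forget\eta'_{\functor C} \circ \eta_C$ holds by reflexivity. Uniqueness from the universal property then forces $h = \eta'_{\functor C}$.

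There is no genuine obstacle here; the only thing to be careful about is identifying which universal property is being invoked. In particular, note that the universal property used is that of $\eta_C$ (the unit of the \emph{outer} adjunction $\functor \dashv \forget$), not that of $\eta'_{\functor C}$; the latter plays the role of the candidate morphism whose identification with $h$ we are establishing. The hypothesis that the diagram commutes transports the triangle characterising $\eta'_{\functor C}$ along $\forget$, and the universal property of $\eta_C$ lifts the uniqueness back up to $\cat[2]$.
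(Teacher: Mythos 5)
Your proof is correct and is the intended argument: the lemma is precisely an instance of the uniqueness clause in the universal property of the unit $\eta_C$ of the adjunction $\functor \dashv \forget$, applied to the composite $\forget\eta'_{\functor C} \circ \eta_C : C \to \forget\forget'\functor'\functor C$, with existence witnessed trivially by $\eta'_{\functor C}$ itself. The paper omits the proof of this lemma as routine, and your observation that only the universal property of $\eta_C$ (not that of $\eta'$) is actually invoked is exactly the right way to see it.
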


In the setting of \Cref{eq:nucleus-left-adjoint-to-catToClone} this lemma implies that the component 
		$\eta'_{\functor \sig} : \Syn{\stpc_\sig} \to \catToClone\big( \nucleus{\Syn{\stpc_\sig}} )$
of the unit for the adjunction $\nucleus{(-)} \dashv \catToClone$ is exactly the unique cartesian clone homomorphism $\iota\sem{-}$ extending the obvious interpretation 
	$\iota := \sig \hookrightarrow  \nucleus{\Syn{\stpc_\sig}}$
of base types and constants in the free cartesian category. By our preceding discussion, this clone homomorphism is injective on multimaps: so if $\iota\sem{t} = \iota\sem{t'}$ then $t = t'$ in $\Syn{\stpc_\sig}$, hence $t =_{\beta\eta} t'$.

\section{Closed structure}
\label{sec:closed-clones}
To define closed structure, we follow Lambek's definition and simply upgrade the hom-set definition of exponentials to multicategories.

\begin{definition}[{\cite{Lambek1989}}]
	\label{def:closed-multicategory}
	A \emph{closed multicategory} is a multicategory $\multicat$ equipped with an object 
		$\lolliObj{\type}{\type[2]}$
	and multimap 
		$\eval_{\type, \type[2]} :  \lolliObj{\type}{\type[2]}, \type \to \type[2]$
	for every $\type, \type[2] \in \Obj\multicat$, such that composition induces isomorphisms as shown:
	\begin{equation} \label{eq:closed-structure}
		\begin{tikzcd}[column sep = 4em]
			\multicat(\Ctx, \type; \type[2])			 
			 & 			
			\multicat(\Ctx; \lolliObj{\type}{\type[2]})
			\arrow[""{name = 0},
					"\Lambda^{\type}",
					yshift=2mm, from=1-1, to=1-2]
			\arrow[""{name=1, anchor=center, inner sep=0}, yshift=-2mm, 
							"\msub
										{\eval_{\type, \type[2]}}
										{(-), \Id_{\type}}", from=1-2, to=1-1]
			\arrow["\iso"{description}, draw=none, from=0, to=1]
		\end{tikzcd}
	\end{equation}
	A \emph{(strict) closed multicategory functor} is a multicategory functor $\multicatFunctor$ which preserves all the data:
		$\multicatFunctor(\lolliObj{\type}{\type[2]})
					= \lolliObj{\multicatFunctor\type}{\multicatFunctor\type[2]}$,
		$\multicatFunctor(\eval_{\type, \type[2]})
					= \eval_{\multicatFunctor\type, \multicatFunctor\type[2]}$
	and 
		$\multicatFunctor(\Lambda \mmap)
					= \Lambda(\multicatFunctor\mmap)$.
	We write 
		$\ClMulticat$
	for the category of closed multicategories and their functors,
	and 
		$\ClSymMulticat$
	for the category of symmetric multicategories with closed structure, and functors preserving both of these. 
\end{definition}

\begin{example}
		If
			$\monoidalclosed{\cat}$
		is a closed (symmetric) monoidal category  then the induced (symmetric) multicategory
			$\monToMulti\cat$
		is also closed.
\end{example}

Closed multicategories allow us to model exponentials without requiring a tensor product.
%
%
%
Writing out the rules in the internal language, we get the map $\Lambda^{\type}$ in (\ref{eq:closed-structure}) as the usual abstraction rule, and the evaluation map as the application
		$f : \type \lollipop \type[2], x : \type 
					\vdash \app{f}{x} : \type[2]$.
We then see that
	$\Ctx[2], f : \type \lollipop \type[2], \varx : \type 
			\vdash u[\app{f}{x} / y] : \type[3]$
whenever
	$\Ctx[2], y : \type[2] \vdash u : \type[3]$,
so we recover a small adaptation of Abramsky's rules for exponentials.
Write
	$\monLolli_\sig$
	(resp. $\linLolli_\sig$)
for the extension of $\basicMon_{\sig}$ (resp. $\basicLin_\sig$) with the following rules and the $\beta\eta$-laws familiar from $\stlc$:
\begin{equation*}
	\begin{aligned}
			\ternaryRule
					{\Ctx[2], \varx[2] : \type[2] \vdash \mmap[2] : \type[3]}
					{\Ctx[3] \vdash \mmap : \type \lollipop \type[2]}
					{\Ctx \vdash \mmap[3] : \type}					
					{\Ctx[2], \Ctx[3], \Ctx
							\vdash 
							\csub
								{\mmap[2]}
								{ \app{\mmap}{\mmap[3]} / \varx[2] }
							: \type[3]
					}
					{}
			\hspace{0mm}
			&,
			\hspace{0mm}
			\unaryRule
		 		{ \Ctx, \varx : \type \vdash \mmap : \type[2] }
				{ \Ctx \vdash \abst{\varx}{\mmap} : \type \lollipop \type[2]  }
				{}
	\end{aligned}
\end{equation*}

\begin{lemma}[{\cite{Hyland1993}}]
	The composite forgetful functor
		$\ClMulticat \to \Multicat \to \Sig$
	has a left adjoint, and the free closed multicategory on $\sig$ is the syntactic multicategory $\Syn{\monLolli_\sig}$.
	The same holds for symmetric structure, if one replaces $\ClMulticat$	by $\ClSymMulticat$ and $\monLolli$ by $\linLolli$.
\end{lemma}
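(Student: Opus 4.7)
The plan is to extend the free-property argument for $\Syn{\basicMon_\sig}$ cited earlier. First I would equip $\Syn{\monLolli_\sig}$ with closed structure by taking $\eval_{\type, \type[2]}$ to be the term $f : \type \lollipop \type[2], x : \type \vdash \app{f}{x} : \type[2]$ and defining $\Lambda^{\type}$ to send $\Ctx, x : \type \vdash \mmap : \type[2]$ to $\Ctx \vdash \abst{x}{\mmap} : \type \lollipop \type[2]$. The $\beta\eta$-laws of $\monLolli_\sig$ give that these two operations are mutually inverse, so the isomorphism in~(\ref{eq:closed-structure}) holds.

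For the universal property, fix a closed multicategory $\multicat$ and a signature homomorphism $\iota : \sig \to \forget\multicat$. I would construct the extending closed multicategory functor $\ext{\iota} : \Syn{\monLolli_\sig} \to \multicat$ by simultaneous induction on types and terms. On types, set $\ext{\iota}(\type \lollipop \type[2]) := \lolliObj{\ext{\iota}\type}{\ext{\iota}\type[2]}$. On multimaps, variables and constants are interpreted using the free multicategory structure (via the earlier lemma applied to $\basicMon_\sig$), an application $\app{\mmap}{\mmap[3]}$ is interpreted by post-composing $\ext{\iota}\mmap$ with $\eval$ in $\multicat$, and an abstraction $\abst{x}{\mmap}$ is interpreted as $\Lambda(\ext{\iota}\mmap)$. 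Uniqueness is then automatic, because any closed multicategory functor must strictly preserve $\eval$ and $\Lambda$, which forces its action on every term by structural induction.

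The main obstacle is showing that this assignment descends to $\alpha\beta\eta$-equivalence classes and commutes with substitution, hence defines a multicategory functor at all. The $\beta$-law reduces to $\msub{\eval}{\Lambda(\mmap), \Id_\type} = \mmap$ and the $\eta$-law to $\Lambda\msub{\eval}{\mmap, \Id_\type} = \mmap$, both immediate from~(\ref{eq:closed-structure}). The delicate point is substitution: one must check that substituting into $\Lambda(\ext{\iota}\mmap)$ agrees with the action of $\Lambda$ on the substituted inner term. This unpacks to the naturality built into~(\ref{eq:closed-structure}), combined with a routine substitution lemma for $\monLolli_\sig$ proved by induction on term formation. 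The symmetric case goes through verbatim with $\linLolli_\sig$ in place of $\monLolli_\sig$, noting only that the $S_n$-action interacts trivially with the closed-structure operations since they are formulated purely in terms of composition.
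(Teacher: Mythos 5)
Your proposal is correct and follows essentially the same route the paper takes (the paper itself only cites Hyland \& de Paiva and gives the analogous inductive-interpretation argument for the base case $\Syn{\basicMon_\sig}$ in the appendix): equip the syntactic multicategory with closed structure via abstraction and application, with $\beta\eta$ supplying the isomorphism~(\ref{eq:closed-structure}), then obtain the unique extension as the inductively defined semantic interpretation, the only real work being well-definedness modulo $\alpha\beta\eta$ and the substitution lemma. The one small presentational difference is that $\monLolli_\sig$ uses the ternary ``substituted application'' rule rather than a bare binary application, but the two are interderivable via composition with identities, so your treatment of application is equivalent.
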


For the cartesian case, we follow the same procedure as in \Cref{sec:cartesian-clones}. 

\begin{definition}
	\label{def:closed-clone}
	A \emph{closed clone} is a clone $\clone$ equipped with a closed structure on $\toMulti\clone$.
	We write $\ClClone$ for the category of closed clones and clone homomorphisms preserving the closed structure as in \Cref{def:closed-multicategory}.
\end{definition}

\begin{example}
	\label{ex:ccc-to-closed-clone}
	If $\ccc\cat$ is a cartesian closed category, the clone 
		$\catToClone\cat$
	is closed.
\end{example}

\Cref{def:closed-clone}  recovers the usual $\beta\eta$-laws for exponentials in $\stlc$, complete with the weakenings that are usually implicit. 
Writing $\app{f}{\varx}$ for $\eval$, we get the following equations in the internal language when
	$\Ctx := (\varx_i : \type_i)_{i=1, \dots, n}$:
\[	
	\csubbig
			{( \app{f}{\varx}) }
			{ \csub
					{(\abst{\varx}{\mmap})}
					{\varx_1 / x_1, \dots, \varx_n / x_n} / f,
				\varx / x
			}
		= \mmap 
	\hspace{1mm}
	,
	\hspace{1mm}
	\abst
		{\varx}
		{ 
			\csubbig
					{( \app{f}{\varx}) }
					{\csub
							{\mmap}
							{\varx_1 / x_1, \dots, \varx_n / x_n} / f}
		}
		=
		\mmap
\]


\begin{lemma}
	\label{res:free-property-of-stlc}
	The composite forgetful functor
		$\ClClone \to \Clone \to \Sig$
	has a left adjoint, and the free closed clone on $\sig$ is the syntactic clone $\Syn{\stlc_\sig}$.
\end{lemma}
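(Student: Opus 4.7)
The plan is to follow the strategy already used for $\stpc$ in \Cref{res:free-property-of-stpc}: exhibit $\Syn{\stlc_\sig}$ as a closed clone and then check the universal property by an inductive construction.

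For the closed clone structure, I would identify each variable $\varx_i : \type_i$ in a context $(\varx_j : \type_j)_{j=1,\dots,n}$ with the projection $\p{i}{\ind\type}$, and take the clone substitution operation to be capture-avoiding $\uplambda$-substitution modulo $\alpha\beta\eta$. The clone axioms then reduce to the usual Substitution Lemma and the two identity laws for substitution, much as for $\Syn{\basicLambda_\sig}$ in \Cref{res:internal-language-of-a-clone}. I would equip this clone with closed structure by taking $\lolliObj{\type}{\type[2]} := \type \lollipop \type[2]$ and letting $\eval_{\type,\type[2]}$ be the term $f : \type \lollipop \type[2], \varx : \type \vdash \app{f}{\varx} : \type[2]$. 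With these choices, the map $\Lambda^{\type}$ of (\ref{eq:closed-structure}) is precisely $\uplambda$-abstraction, and the required bijection of (\ref{eq:closed-structure}) follows immediately from the $\beta\eta$-laws of $\stlc$, with $\beta$ supplying one composite and $\eta$ the other.

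For the universal property, given a closed clone $\clone$ and an interpretation $\interp : \sig \to \forget\clone$, I would define the extension $\interp\sem{-} : \Syn{\stlc_\sig} \to \clone$ by induction on well-formed terms. The definition is essentially forced by the strict preservation requirements of a closed clone homomorphism: variables must go to projections, constants to $\interp$ followed by substitution of the induced multimaps, application to substitution into $\eval$, and $\uplambda$-abstraction to $\Lambda$. This same observation yields uniqueness.

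The main obstacle will be verifying that $\interp\sem{-}$ is well-defined on $\alpha\beta\eta$-equivalence classes. The heart of the matter is a substitution lemma showing that the clone substitution operation in $\clone$ correctly models capture-avoiding $\uplambda$-substitution under the interpretation; this is proved by induction on terms and requires care with the bookkeeping of projections for weakening contexts. Once this is in hand, soundness under $\beta\eta$ reduces directly to the bijection (\ref{eq:closed-structure}) in $\clone$, and $\alpha$-equivalence is automatic since identifying variables with projections already strips out variable names.
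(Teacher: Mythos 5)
Your proposal is correct and follows exactly the route the paper intends for this result (the paper omits the proof, but your argument mirrors its treatment of the analogous free-property lemmas for $\Syn{\basicLambda_\sig}$, $\Syn{\monLolli_\sig}$ and $\Syn{\stpc_{\sig}}$): equip the syntactic clone with closed structure via abstraction and application, observe that the bijection~(\ref{eq:closed-structure}) is precisely $\beta\eta$, and obtain the unique extension of an interpretation by a forced inductive definition together with a substitution lemma. No gaps.
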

%

\section{Cartesian closed structure}
\label{sec:cc-clones}
The development above makes defining cartesian closed structure straightforward. For reasons of space we restrict ourselves to the cartesian case, but similar remarks apply to the linear and ordered cases.

\begin{definition}
	A \emph{cartesian closed clone} is a clone equipped with both closed structure and cartesian structure. We write $\CCClone$ for the category of cartesian closed clones and homomorphisms that strictly preserve both structures.
\end{definition}

By \Cref{res:free-property-of-stpc,res:free-property-of-stlc}, we already have a free property .

\begin{lemma}
	The composite forgetful functor
		$\CCClone \to \Clone \to \Sig$
	has a left adjoint, and 
		$\Syn{\stlpc_{\sig}}$	
	is the free cartesian closed clone on $\sig$.
\end{lemma}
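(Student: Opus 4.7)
The proof follows the template already established by \Cref{res:free-property-of-stpc} and \Cref{res:free-property-of-stlc}, now combining both pieces of structure on a single syntactic object. First I would verify that $\Syn{\stlpc_\sig}$ is itself a cartesian closed clone, and then check that it has the required universal property as a free object.

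\noindent\textbf{Cartesian closed structure on $\Syn{\stlpc_\sig}$.} The product and projection rules of $\stlpc$, together with the usual $\beta\eta$-laws, equip $\toMulti(\Syn{\stlpc_\sig})$ with cartesian structure by exactly the argument used for $\Syn{\stpc_\sig}$: the syntactic projection and pairing operations realise the universal arrows of \Cref{def:finite-product-multicategory}. Similarly the abstraction and application rules, together with their $\beta\eta$-laws, give the hom-set bijection (\ref{eq:closed-structure}) and so closed structure. The two developments involve disjoint type and term constructors and disjoint equations, so they compose without interference.

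\noindent\textbf{Universal property.} Given a cartesian closed clone $\clone$ and a signature homomorphism $\interp : \sig \to \forget\clone$, I would define $\interp\sem{-} : \Syn{\stlpc_\sig} \to \clone$ by induction: on types by interpreting base types via $\interp$ and sending $\type_1 \times \type_2$ and $\type_1 \to \type_2$ to the chosen product and exponential; on terms by sending variables to projections, constants via $\interp$, pairs and projections to the pairing and projections of $\clone$, and abstractions and applications to $\Lambda^{\type}$ and $\eval$. Uniqueness is forced step-by-step by the requirement that $\interp\sem{-}$ preserve the clone, cartesian, and closed structures. For existence one checks that $\interp\sem{-}$ is well-defined on $\alpha\beta\eta$-equivalence classes and commutes with syntactic substitution; the product $\beta\eta$-laws are built into the equations of (\ref{eq:product-structure-spelled-out}) in $\clone$, and the exponential ones into the bijection (\ref{eq:closed-structure}).

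\noindent\textbf{Main obstacle.} The only genuinely delicate point is showing that semantic interpretation commutes with syntactic substitution in the $\uplambda$-abstraction case: that is, $\interp\sem{\csub{(\abst{\varx}{\mmap})}{\ind{\mmap[2]}}} = \csub{\interp\sem{\abst{\varx}{\mmap}}}{\ind{\interp\sem{\mmap[2]}}}$. In the unary presentation this usually requires a separate substitution/weakening lemma proved by induction on term structure; in the multi-ary framework, however, it reduces to naturality of the closed-structure isomorphism (\ref{eq:closed-structure}) in the context $\Ctx$, applied to a composite context, together with the clone axioms of $\clone$. Once this is in hand, all the remaining clauses of the induction are routine verifications using the corresponding axioms.
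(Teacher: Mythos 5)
Your proposal is correct, but it takes a more laborious route than the paper, which disposes of this lemma in one line: since \Cref{res:free-property-of-stpc} and \Cref{res:free-property-of-stlc} already establish that the product fragment and the exponential fragment each freely generate the corresponding structure on a clone, and since (as you yourself observe) the two extensions involve disjoint type constructors, term formers, and equations, the free properties simply combine, and $\Syn{\stlpc_\sig}$ is the free object for the conjunction of the two structures. Your direct verification --- equipping $\Syn{\stlpc_\sig}$ with both structures, defining $\interp\sem{-}$ by induction, and checking existence and uniqueness --- is essentially a re-run, for the combined calculus, of the argument underlying those two cited lemmas; it is self-contained where the paper's version is modular. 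What your write-up adds that the paper leaves implicit is the precise location of the usual substitution lemma: you correctly identify that the $\uplambda$-abstraction case of ``interpretation commutes with substitution'' reduces to naturality of the isomorphism (\ref{eq:closed-structure}) in the context argument, which in turn is automatic because its inverse is given by composition with $\eval$ and substitution in a clone is associative. This is exactly the phenomenon the paper advertises in \Cref{sec:recovering-the-semantics-for-cartesian} (soundness ``for free'' from being a structure-preserving homomorphism), so your observation is consistent with, and slightly more explicit than, the paper's treatment. The only thing you lose relative to the paper's approach is the emphasis on modularity: by re-proving everything from scratch you obscure the fact that no new work beyond the two fragment lemmas is actually required.
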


The nucleus of any cartesian closed clone
	$\ccc\clone$
is also cartesian closed:
\[
	\nucleus{\clone}(\type \times \type[2], \type[3])
	=	
	\clone(\type \times \type[2]; \type[3])
	\iso 
	\clone(\type, \type[2]; \type[3])
	\iso 
	\clone(\type; \expObj{\type[2]}{\type[3]})
	=
	\nucleus{\clone}(\type, \expObj{\type[2]}{\type[3]})
\]
Similarly, by
	\Cref{ex:cartesian-category-to-cartesian-clone,ex:ccc-to-closed-clone}, 
for any cartesian closed category $\ccc\cat$ the induced category 
	$\catToClone\cat$
is cartesian closed.
\Cref{res:factoring-adjunction-for-cartesian-clones} then restricts as follows.

\begin{proposition}
	\label{res:factoring-adjunction-for-cc-clones}
	The functor 
		$\nucleus{(-)} : \CCClone \to \CCCat$
	fits into the following diagram, in which the right-hand adjunction is an equivalence:
	\[
		\begin{tikzcd}
			\Sig & \CCClone & \CCCat
			\arrow[""{name=0, anchor=center, inner sep=0}, "\free"{}, yshift=2mm, from=1-1, to=1-2]
			\arrow[""{name=1, anchor=center, inner sep=0}, "\forget"{}, yshift=-2mm, from=1-2, to=1-1]
			\arrow[""{name=2, anchor=center, inner sep=0}, "\nucleus{(-)}", yshift=2mm, from=1-2, to=1-3]
			\arrow[""{name=3, anchor=center, inner sep=0}, "\catToClone", yshift=-2mm, from=1-3, to=1-2]
			\arrow["\simeq\adjUp"{anchor=center, rotate=0}, draw=none, from=3, to=2]
			\arrow["\adjUp"{anchor=center, rotate=0}, draw=none, from=0, to=1]
		\end{tikzcd}
	\]
	Moreover,
			$\forget \circ \catToClone$
	is equal to the canonical forgetful functor $\CCCat \to \Sig$.
	Hence, the free cartesian closed category on $\sig$ is canonically isomorphic to 
		$\nucleus{\Syn{\stlpc_{\sig}}}$.
\end{proposition}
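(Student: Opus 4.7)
The plan is to leverage Proposition~\ref{res:factoring-adjunction-for-cartesian-clones} and Corollary~\ref{res:cartclone-equivalent-to-cartcat} by checking that closed structure is carried through each leg of the diagram. First, I would observe that both functors restrict to the cartesian closed setting: the paragraph preceding the statement computes that $\nucleus\clone$ is cartesian closed whenever $\clone$ is, and the combination of Example~\ref{ex:cartesian-category-to-cartesian-clone} with Example~\ref{ex:ccc-to-closed-clone} shows $\catToClone\cat$ is cartesian closed whenever $\cat$ is. In each case the exponentials on one side are literally the same data as those on the other, so restriction is immediate on objects, and morphism-level restriction is the easy observation that a clone homomorphism preserves $\eval$ and $\Lambda$ iff the underlying functor of nuclei does.

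The heart of the argument is upgrading the cartesian equivalence to a closed one. I would show that the unit $\eta'_\clone : \clone \to \catToClone(\nucleus\clone)$ of the cartesian adjunction, which is identity-on-objects and sends $\mmap : \ind\type \to \type[2]$ to $\csub{\mmap}{\p{1}{\ind\type}, \dots, \p{n}{\ind\type}}$, is in fact a cartesian closed clone homomorphism. Preservation of cartesian structure is already established in Corollary~\ref{res:cartclone-equivalent-to-cartcat}; preservation of exponentials follows because the exponential object, evaluation multimap, and abstraction in $\catToClone(\nucleus\clone)$ are defined directly from those in $\clone$ via the natural isomorphism~(\ref{eq:closed-structure}). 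The counit of the cartesian equivalence is the identity on objects and so trivially preserves closed structure; hence the equivalence $\CartClone \simeq \CartCat$ restricts to the desired right-hand equivalence $\CCClone \simeq \CCCat$.

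For the left-hand adjunction I would combine Lemma~\ref{res:free-property-of-stpc} and Lemma~\ref{res:free-property-of-stlc}, as already invoked in the lemma preceding this statement, to conclude that the free cartesian closed clone on $\sig$ is $\Syn{\stlpc_\sig}$. The equation $\forget \circ \catToClone = \forget_{\CCCat \to \Sig}$ is a direct unpacking of the two forgetful functors, exactly as in Proposition~\ref{res:factoring-adjunction-for-cartesian-clones}. The final canonical isomorphism with $\nucleus{\Syn{\stlpc_\sig}}$ then follows because left adjoints compose: $\nucleus{(-)} \circ \free$ is left adjoint to the canonical forgetful functor $\CCCat \to \Sig$.

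The step I expect to be the main obstacle is verifying that $\eta'_\clone$ preserves the full closed structure: one must check not only that it sends the exponential object in $\clone$ to the exponential object in $\catToClone(\nucleus\clone)$, but also that it commutes with $\eval$ and $\Lambda$. Both checks reduce to unwinding the isomorphism~(\ref{eq:closed-structure}) together with the clone substitution laws for projections, but keeping the explicit projection-substitutions straight is fiddly and is where the real bookkeeping lies.
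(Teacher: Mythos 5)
Your proposal is correct and follows essentially the same route as the paper: check that both $\nucleus{(-)}$ and $\catToClone$ carry cartesian closed structure across (via the computation preceding the statement together with \Cref{ex:cartesian-category-to-cartesian-clone,ex:ccc-to-closed-clone}), then observe that the adjoint equivalence of \Cref{res:factoring-adjunction-for-cartesian-clones} and \Cref{res:cartclone-equivalent-to-cartcat} restricts because the unit and (identity) counit preserve the closed structure, with the free property supplied by the lemma immediately preceding the proposition. The only slip is notational: the unit substitutes the product projections $\pi_i^{\ind\type}$, not the clone projections $\p{i}{\ind\type}$ (substituting the latter would return $\mmap$ unchanged by the clone unit law).
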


As in \Cref{sec:recovering-the-semantics-for-cartesian}, the preceding two results are enough to recover the sound semantic interpretation of $\stlpc$, and the usual syntactic model.

\section{Cartesian combinatory logic and SK-clones}
\label{sec:SK-clones}
In this section we begin a multi-ary investigation of cartesian combinatory logic, and give a categorical statement of the classical correspondence between combinatory logic and $\stlc$
	(for which see \eg~\cite{Gilezan1993,Bimbo2012}).
In \Cref{sec:SK-categories} we shall use this
to define \emph{SK-categories} and show they are sound and complete for $\stlc$.
%
%

We briefly recapitulate the rules of typed combinatory logic $\CL_\sig$ over a signature $\sig$; for a fuller account see~\eg~\cite{Bimbo2012}.
%
%
Types are as in $\stlc$. Terms are given by the grammar
$
	\mmap, \mmap[2] 
		::= \varx 
			\suchthat c \in \sig(\Ctx; \type[2]) 
			\suchthat (\app{\mmap}{\mmap[2]}) 
			\suchthat \combS 
			\suchthat \combK$: 
we have variables, constants and an application operation as in $\stlc$ and, for any context $\Ctx$ and types $\type, \type[2]$ and $\type[3]$, two \emph{combinators} 
	$\Ctx \vdash \combS^{\Ctx}_{\type, \type[2], \type[3]} :
		\expObj
				{ 
				\big( \expObj
							{\type}
							{(\expObj{\type[2]}{\type[3]})} \big) 
				}
				{\big(
					\expObj
						{(\expObj{\type}{\type[3]})}
						{(\expObj{\type}{\type[3]})}
				\big)}$ 
and 
	$\Ctx \vdash 
		\combK^{\Ctx}_{\type, \type[2]} : \expObj
					{\type}
					{
						(\expObj
							{\type[2]}
							{\type})
					}$.
%
%
Substitution is as in $\stlc$, where the combinators 
	$\combZ \in \{ \combS, \combK \}$
satisfy 
	$\csub{\combZ}{\mmap[2]_1 / \varx_1, \dots, \mmap[2]_n / \varx_n} = \combZ$
so that 
	$\combZ^{\Ctx}$
is the weakening of $\combZ^{\diamond}$.
	
The correlate of $\beta$-equality is \emph{weak equality} $\weq$, which is the smallest congruence containing 
$
	\appfour{\combS}{\varx}{\varx[2]}{\varx[3]}
		= \app
				{  (\app{\varx}{\varx[3]} ) }
				{  (\app{\varx[2]}{\varx[3]} ) }$
and
	$\appthree{\combK}{\varx}{\varx[2]} = \varx$.
The correlate of $\beta\eta$-equality is \emph{extensional weak equality} $\wexteq$, which extends $\weq$ with the rule
\vspace{-1mm}
\begin{equation} \label{eq:extensionality}
	\binaryRule
		{  \app{\mmap}{\varx_1 \, \cdots \, \varx_n} = \app{\mmap'}{\varx_1 \, \cdots \, \varx_n}  }
		{  \varx_1, \dots, \varx_n \text{ not free in } \mmap \text{ or } \mmap' }
		{\mmap = \mmap'}
		{ext}
\end{equation}
%

We write 
	$\CLw$
for combinatory logic with weak equality and 
	$\CLwext$ 
for combinatory logic with extensional weak equality. 
The usual encoding of $\CLw$ in $\stlc$ sends $\combS$ and $\combK$ to 
	$\abst
		{f}
		{\abst{g}
			{\abst{x}
				{ 
					\app
						{ (\app{f}{x}) }
						{ (\app{g}{x}  )}
	 			}
	 		}
	 	}$
and 
	$\abst
		{x}
		{\abst{y}{x}}$,
respectively.

The next definition may be obtained by seeing that $\CLw$ can be presented as an algebraic theory, and that clones are equivalent to algebraic theories (\eg~\cite{Linton1966,Taylor1973}).
We implicitly bracket application to the left, so 
	$\clappthree{\mmap}{\mmap[2]}{\mmap[3]}
		:=  \clapp
					{( \clapp{\mmap}{\mmap[2]} )}
					{\mmap[3]}$.
We also write 
	$\wkn{(-)}{\Ctx[2]; \Ctx[3]}$
for the weakening map
	$\clone(\Ctx; \type[2]) \to \clone(\Ctx[2], \Ctx, \Ctx[3]; \type[2])$
sending 
	$\mmap$
to
	$\csubbig
			{ \mmap }
			{ \p{\length{\Ctx[2]} + 1} {\Ctx[2], \Ctx, \Ctx[3]},
				\dots,
				\p{\length{\Ctx[2]} + \length{\Ctx}} {\Ctx[2], \Ctx, \Ctx[3]},
			}$;
when $\Ctx$ is empty we write just $\wkn{(-)}{\Ctx[2]}$.

\begin{definition}
	An \emph{SK-clone} is a clone $\clone$ equipped with a mapping 
		$\lolliObj{-}{{=}} : \Obj\clone \times \Obj\clone \to \Obj\clone$,
	nullary multimaps 
			$\combS_{\type, \type[2], \type[3]} 
				\in \clone{\big(\diamond ; 
					\lolliObjbig
							{ 
							\lolliObj
										{\type}
										{\lolliObj{\type[2]}{\type[3]}} 
							}
							{
								\lolliObj
									{\lolliObj{\type}{\type[2]}}
									{\lolliObj{\type}{\type[3]}}
							}\big)}$
			and
				$\combK_{\type, \type[2]}
					\in \clone{\big(\diamond; 
								\lolliObj
									{\type}
									{\lolliObj{\type[2]}{\type}} \big)}$
			for every $\type, \type[2], \type[3] \in \Obj\clone$,
		and a binary \emph{application} operation 
			$(\clapp{-}{{=}}):
						\clone(\Ctx; \lolliObj{\type}{\type[2]})
												\times \clone(\Ctx; \type) 
										\to \clone(\Ctx; \type[2])$ 			
			for every 
				$\Ctx \in \Obj{\clone}^{\star}$ and $\type[2] \in \Obj\clone$,
		such that the following axioms hold whenever they are well-typed:
	\begin{align*}
			\csub
				{(\clapp{\mmap}{\mmap[2]})}						
				{\mmap[3]_1, \dots, \mmap[3]_n}
			=
			\clapp	
				{  \csub{\mmap}{\mmap[3]_1, \dots, \mmap[3]_n}  }
				{  \csub{\mmap[2]}{\mmap[3]_1, \dots, \mmap[3]_n} }
			\hspace{3mm}
			&,
			\hspace{3mm}
			\clappthree
				{  ({\combK}
						_{\type, \type[2]})
						^{\type, \type[2]}  }
				{\p{1}{}}
				{\p{2}{}} = \p{1}{} 			
			\\[2mm]
			\clappfour
				{
						({\combS}
						_{\type, \type[2], \type[3]})
						^{
							\lolliObj{\type}{\lolliObj{\type[2]}{\type[3]}}, 
							\lolliObj{\type}{\type[2]},
							\type
						}
				}
				{  \p{1}{}  }
				{  \p{2}{}  }
				{  \p{3}{}  }
				&= \clapp 
						{( \clapp{\p{1}{}}{\p{3}{}} )}
						{(  \clapp{\p{2}{}}{\p{3}{}}  )} 
	\end{align*}
	A \emph{homomorphism of SK-clones} is a clone homomorphism that preserves application, $\combS$ and $\combK$:
	$
		\cloneFunctor(\combS_{\type, \type[2], \type[3]})
			= \combS_{\cloneFunctor\type, \cloneFunctor\type[2], \cloneFunctor\type[3]}
		%
		,
		%
		\cloneFunctor(\combK_{\type, \type[2]})
			= \combK_{\cloneFunctor\type, \cloneFunctor\type[2]}$
		%
		and
		%
	$\cloneFunctor(\clapp{\mmap}{\mmap[2]})
			= \clapp 
					{\cloneFunctor\mmap}
					{\cloneFunctor\mmap[2]} 			
	$.
	We write $\SKClone$ for the category of SK-clones and their homomorphisms.
\end{definition}


\begin{lemma}
	\label{res:free-property-of-clweak}
	The composite forgetful functor
		$\SKClone \to \Clone \to \Sig$
	has a left adjoint, and the free SK-clone on $\sig$ is the syntactic clone $\Syn{\CLw_\sig}$.
\end{lemma}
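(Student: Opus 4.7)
The plan is to follow the template established in \Cref{res:internal-language-of-a-clone} and \Cref{res:free-property-of-stlc}. First I would equip $\Syn{\CLw_\sig}$ with the structure of an SK-clone. The sorts are types of $\CL_\sig$; multimaps $\Ctx \to \type[2]$ are $\weq$-equivalence classes of terms $\Ctx \vdash \mmap : \type[2]$; substitution is the admissible substitution rule on $\CLw_\sig$; projections are the variables $\p{i}{\ind\type} := (\varx_1 : \type_1, \dots, \varx_n : \type_n \vdash \varx_i : \type_i)$. For the SK-structure, the exponential $\lolliObj{\type}{\type[2]}$ is the function type; the combinators $\combS_{\type, \type[2], \type[3]}$ and $\combK_{\type, \type[2]}$ are the nullary multimaps $\diamond \vdash \combS^{\diamond}_{\type, \type[2], \type[3]}$ and $\diamond \vdash \combK^{\diamond}_{\type, \type[2]}$; and $\clapp{-}{=}$ is syntactic application. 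The clone axioms hold by the standard properties of syntactic substitution, while the SK-clone axioms hold directly from $\weq$, using that the syntactic combinator $\combS^\Ctx$ is identified with the weakening $\wkn{\combS^{\diamond}}{\Ctx}$ via the rule $\csub{\combZ^{\diamond}}{-} = \combZ^{\Ctx}$ (and similarly for $\combK$).

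For the free property, given an SK-clone $\clone$ and a signature homomorphism $\iota : \sig \to \forget\clone$, I would construct the extension $\iota\sem{-} : \Syn{\CLw_\sig} \to \clone$ by recursion on raw terms: a variable $\varx_i$ in context $\Ctx$ is sent to the projection $\p{i}{\ind{\iota\sem\type}}$; a constant $c$ is sent to (an appropriate reindexing of) $\iota(c)$; the combinators $\combS^{\Ctx}$ and $\combK^{\Ctx}$ are sent to the weakenings $\wkn{\combS_{\iota\sem\type, \iota\sem{\type[2]}, \iota\sem{\type[3]}}}{\iota\sem\Ctx}$ and $\wkn{\combK_{\iota\sem\type, \iota\sem{\type[2]}}}{\iota\sem\Ctx}$; and $\app{\mmap}{\mmap[2]}$ is sent to $\clapp{\iota\sem\mmap}{\iota\sem{\mmap[2]}}$. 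A routine induction shows $\iota\sem{-}$ preserves substitution, and uniqueness is immediate since any SK-clone homomorphism extending $\iota$ must agree with this definition on every term constructor.

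The main obstacle is well-definedness, \ie~showing $\iota\sem{-}$ respects $\weq$. Since $\weq$ is generated by congruence together with the two axioms on $\combS$ and $\combK$, this reduces to verifying those two axioms hold in $\clone$. Translating, for instance, $\appfour{\combS^{\diamond}}{\varx}{\varx[2]}{\varx[3]} \weq \app{(\app{\varx}{\varx[3]})}{(\app{\varx[2]}{\varx[3]})}$ into a clone equation in the context $(\varx : \lolliObj{\type}{\lolliObj{\type[2]}{\type[3]}}, \varx[2] : \lolliObj{\type}{\type[2]}, \varx[3] : \type)$ amounts to weakening the nullary $\combS$ to that context and applying it to the three projections $\p{1}{}, \p{2}{}, \p{3}{}$; at this point the resulting equation is precisely the SK-clone axiom on $\combS$, and the $\combK$ case is analogous. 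The passage from a free property per signature to a left adjoint is then automatic.
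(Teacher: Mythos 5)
Your proposal is correct and follows exactly the template the paper uses for its other free-property lemmas (the paper itself leaves this proof to the reader, giving only the analogous sketch for the free multicategory case): equip the syntax modulo $\weq$ with the evident SK-clone structure, define the extension of an interpretation by recursion on raw terms, and check well-definedness by observing that the two generating equations of $\weq$, once translated via projections and weakening, are precisely the $\combS$ and $\combK$ axioms of an SK-clone. The only point worth making explicit is that preservation of substitution by $\iota\sem{-}$ (needed both for the congruence cases of well-definedness and for the homomorphism property) rests on the first SK-clone axiom, that application commutes with substitution; your sketch implicitly uses this and is otherwise complete.
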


A core feature of the syntax of combinatory logic, which is at the heart of the correspondence between the terms of $\CLwext$ and $\stlc$, is the admissibility of \emph{bracket extension} algorithms~(see~\eg~\cite[\S7.1]{Barendregt1985}). 
To express this in the typed setting, we use the following notation. For a binary operation $\lolliObj{-}{{=}}$ on a set $\sorts$ we define 
	$\lolliCtx{-}{{=}} : \sorts^\star \times \sorts \to \sorts$ 
inductively as follows:
\[
			\lolliCtx{\diamond}{\type[2]} := \type[2] 
			\quad 
			,
			\quad
			\lolliCtx{\type}{\type[2]} := \lolliObj{\type}{\type[2]}
			\quad 
			,
			\quad
			\lolliCtx{\Ctx, \type}{\type[2]} 
				:= \lolliCtx
						{\Ctx}
						{\lolliObj{\type}{\type[2]}}
\]
With this notation, bracket abstraction amounts to saying that if 
	$\Ctx := (\varx_i : \type_i)_{i=1, \dots, n}$ 
and
	$\Ctx \vdash \mmap : \type[2]$
in $\CLw$, there exists a closed term 
	$\diamond \vdash \close\mmap : \lolliCtx{\Ctx}{\type[2]}$
such that
	$\app
		{  \wkn{(\close\mmap)}{\Ctx}  }
		{\varx_1 \, \ldots \, \varx_n}
	\weq \mmap$.
The extensionality axiom~(\ref{eq:extensionality}) then says that $\close\mmap$ is unique: in other words, 
	$\mmap \mapsto  
		\app
			{  \wkn{\mmap}{\Ctx}  }
			{\varx_1 \, \ldots \, \varx_n}$ 
	is an isomorphism.

We now translate this into clone-theoretic terms. For any SK-clone $\clone$ we obtain the operation 
	$\mmap \mapsto  
		\app
			{  \wkn{\mmap}{\Ctx}  }
			{\varx_1 \, \ldots \, \varx_n}$
as the composite below:
\begin{equation}
	\label{eq:def-of-itapp}
	\itapp{\Ctx}{\type[2]}
	:=
	\Big(
	\clone(\diamond; \lolliCtx{\Ctx}{\type[2]})
		\xra{\w^{\Ctx}}
		\clone(\Ctx; \lolliCtx{\Ctx}{\type[2]})
		\xra{\clapp{(-)}{\p{1}{\Ctx} \cdot \dots \p{\length\Ctx}{\Ctx}}}
		\clone(\Ctx; \type[2])
	\Big)
\end{equation}
For $\Ctx := \diamond$ this is just the identity.
The admissibility of bracket abstraction in the syntax of $\CLw$ is then captured by the next lemma. Typically bracket abstraction algorithms restrict to closed constants, because an open constant may have no corresponding closed term. We restrict in the same way. Call a signature $\sig$ \emph{nullary} if $\sig(\Ctx; \type) = \emptyset$ whenever $\Ctx \neq \diamond$, and write $\Sig_0 \hookrightarrow \Sig$ for the full subcategory of nullary signatures.

\begin{lemma}
	\label{res:weakening-and-application-in-SK-clones}
	Let $\sig$ be a nullary signature. 
	Then for any 
		$\Ctx \in \Obj{\Syn{\CLw_\sig}}^{\star}$ 
	and 
		$\type[2] \in \Obj{\Syn{\CLw_\sig}}$
	there exists a map 
		$\close{(-)}$
	such that 
		$\itapp{\Ctx}{\type[2]} \circ \close{(-)} =  \id_{\Syn{\CLw_\sig}}$.
\end{lemma}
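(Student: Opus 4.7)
The plan is to reproduce the classical bracket abstraction construction of combinatory logic inside $\Syn{\CLw_\sig}$, and then iterate over the context.

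First I would define, for each $\Ctx$, $\type$, $\type[2]$, a function $[\varx] \colon \Syn{\CLw_\sig}(\Ctx, \type; \type[2]) \to \Syn{\CLw_\sig}(\Ctx; \lolliObj{\type}{\type[2]})$ by induction on combinatory preterms in the standard way: the final projection $\p{\length\Ctx + 1}{\Ctx, \type}$ is sent to a weakening of $\clappthree{\combS}{\combK}{\combK}$; every other projection, every combinator, and every constant $c$ is sent to (a suitable weakening of) $\clapp{\combK}{(-)}$; and an application is handled by $[\varx](\clapp{\mmap}{\mmap[2]}) := \clappthree{\combS}{([\varx]\mmap)}{([\varx]\mmap[2])}$. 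The nullary assumption on $\sig$ is used precisely in the constant clause: it ensures every constant arises as a closed multimap, so that the term $\clapp{\combK}{c}$ is well-typed in the intended context after weakening.

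Next I would verify by structural induction on $\mmap$ that $\clapp{\wkn{([\varx]\mmap)}{\diamond; \type}}{\p{\length\Ctx + 1}{\Ctx, \type}} = \mmap$ holds in $\Syn{\CLw_\sig}$. Each case reduces to one of the SK-clone axioms $\clappthree{\combK}{a}{b} = a$ and $\clappfour{\combS}{a}{b}{c} = \clapp{(\clapp{a}{c})}{(\clapp{b}{c})}$, together with the fact that application commutes with clone substitution and the unit laws for projections. Iterating, I would then define $\close{(-)} := [\varx_1] \circ \cdots \circ [\varx_{\length\Ctx}]$, so that $\close\mmap \in \Syn{\CLw_\sig}(\diamond; \lolliCtx{\Ctx}{\type[2]})$. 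Unfolding the definition~\eqref{eq:def-of-itapp} of $\itapp{\Ctx}{\type[2]}$, the desired identity $\itapp{\Ctx}{\type[2]} \circ \close{(-)} = \id$ is then obtained by applying the one-step inversion above $\length\Ctx$ many times.

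The main obstacle is bookkeeping: in the clone formalism one cannot be as cavalier with free variables, weakenings, and contexts as in the traditional presentations of bracket abstraction, so each inductive case carries substantial administrative overhead. A secondary subtlety is that bracket abstraction is naturally defined on raw preterms rather than $\weq$-equivalence classes; but the lemma only asserts the \emph{existence} of a set-theoretic section of $\itapp{\Ctx}{\type[2]}$, so it suffices to pick a single representative of each equivalence class (\eg{} by well-ordering the preterms) and apply the inductive construction to it.
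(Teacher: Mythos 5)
Your proposal is correct and is exactly the argument the paper intends: the lemma is the clone-theoretic restatement of the classical bracket-abstraction algorithm (the paper defers to, e.g., Barendregt \S7.1 rather than spelling it out), and your clauses for $[\varx]$, the one-step inversion via the $\combS$ and $\combK$ axioms, the iteration over the context, and the use of nullarity in the constant case all match the standard construction. Your observation that $[\varx]$ need only be defined on chosen representatives of $\weq$-classes, since the lemma asserts a mere set-theoretic section of $\itapp{\Ctx}{\type[2]}$, is the right way to handle the one genuine subtlety.
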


Because bracket abstraction is defined by induction on the syntax, we cannot straightforwardly define it in an arbitrary SK-clone. We can, however, consider the sub-category of SK-clones (= semantic models of $\CLw$) which admit bracket abstraction in the sense that each $\itapp{\Ctx}{\type[2]}$ has a retraction.
The \emph{extensional} models are then those for which this retract $\close{(-)}$ also satisfies uniqueness.

\begin{definition}
	An SK-clone $\clone$ is \emph{extensional} if for every 
		$\Ctx \in \Obj{\clone}^{\star}$
	and 
		$\type[2] \in \Obj\clone$
	the map 
		$\itapp{\Ctx}{\type[2]}$
	defined in~(\ref{eq:def-of-itapp})  is invertible.
	We write 
		$\ExtSKClone$
	for the full subcategory of $\SKClone$ consisting of just the extensional SK-clones.
\end{definition}

\begin{lemma}
	\label{res:free-property-of-clext}
	The composite forgetful functor
		$\ExtSKClone \to \Clone \to \Sig_0$
	has a left adjoint, and the free extensional SK-clone on a nullary signature $\sig$ is the syntactic clone $\Syn{\CLwext_\sig}$.
\end{lemma}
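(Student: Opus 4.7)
The plan is to realise $\Syn{\CLwext_\sig}$ as a quotient of the free SK-clone $\Syn{\CLw_\sig}$ from Lemma~\ref{res:free-property-of-clweak} by the extensionality rule~(\ref{eq:extensionality}), then verify extensionality and the universal property. Since rule~(\ref{eq:extensionality}) is a congruence for substitution, application, $\combS$, and $\combK$---by a routine induction on derivations---the SK-clone structure on $\Syn{\CLw_\sig}$ descends to $\Syn{\CLwext_\sig}$, and the canonical projection $q \colon \Syn{\CLw_\sig} \to \Syn{\CLwext_\sig}$ is an SK-clone homomorphism.

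For the extensionality of $\Syn{\CLwext_\sig}$, Lemma~\ref{res:weakening-and-application-in-SK-clones} provides a map $\close{(-)}$ in $\Syn{\CLw_\sig}$ with $\itapp{\Ctx}{\type[2]} \circ \close{(-)} = \id$ for every $\Ctx, \type[2]$. Since each $\close{s}$ is closed, applying rule~(\ref{eq:extensionality}) to the closed terms $\close{s}, \close{s'}$ shows that $s \wexteq s'$ entails $\close{s} \wexteq \close{s'}$, so $\close{(-)}$ descends to a section of $\itapp{\Ctx}{\type[2]}$ in $\Syn{\CLwext_\sig}$. The opposite direction is now immediate: the injectivity of $\itapp{\Ctx}{\type[2]}$ on closed terms \emph{is} the content of rule~(\ref{eq:extensionality}) when the free-variable context of $\mmap, \mmap'$ is empty, and a section together with injectivity forces an isomorphism.

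For the universal property, given a nullary signature homomorphism $\sigFunctor \colon \sig \to \forget\clone$ into an extensional SK-clone $\clone$, Lemma~\ref{res:free-property-of-clweak} yields a unique SK-clone homomorphism $\widehat{\sigFunctor} \colon \Syn{\CLw_\sig} \to \clone$. Uniqueness of a factorisation through $q$ is automatic since $q$ is surjective, so it suffices to show that $\widehat{\sigFunctor}$ respects rule~(\ref{eq:extensionality}). The anticipated main obstacle is a subtle mismatch: (\ref{eq:extensionality}) applies to terms with free variables in an arbitrary context $\Ctx[3]$, whereas the definition of an extensional SK-clone only supplies closed-context isomorphisms $\itapp{\Ctx}{\type[2]} \colon \clone(\diamond; \lolliCtx{\Ctx}{\type[2]}) \xra{\iso} \clone(\Ctx; \type[2])$. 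The identity $\lolliCtx{\Ctx[3], \Ctx}{\type[2]} = \lolliCtx{\Ctx[3]}{\lolliCtx{\Ctx}{\type[2]}}$ reduces the general case to the closed one: by associativity of SK-clone application, the generalised extension map $\clone(\Ctx[3]; \lolliCtx{\Ctx}{\type[2]}) \to \clone(\Ctx[3], \Ctx; \type[2])$ factors as $\itapp{\Ctx[3], \Ctx}{\type[2]} \circ (\itapp{\Ctx[3]}{\lolliCtx{\Ctx}{\type[2]}})^{-1}$, which is invertible and in particular injective, giving the required implication.
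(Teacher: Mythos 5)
Your proof is correct and follows exactly the route the paper's scaffolding is set up for: quotient the free SK-clone of \Cref{res:free-property-of-clweak} by rule~(\ref{eq:extensionality}), use the bracket-abstraction section from \Cref{res:weakening-and-application-in-SK-clones} together with the closed-term instance of~(\ref{eq:extensionality}) to get extensionality of the quotient, and verify the universal property by checking that~(\ref{eq:extensionality}) is sound in any extensional SK-clone. In particular, you correctly identify and resolve the one genuinely delicate point---that the rule applies to terms in an arbitrary context while extensionality of an SK-clone is stated only for closed terms---via the factorisation of the open-context extension map through $\itapp{\Ctx[3], \Ctx}{\type[2]} \circ (\itapp{\Ctx[3]}{\lolliCtx{\Ctx}{\type[2]}})^{-1}$.
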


\subsection{Extensional SK-clones are closed clones}
\label{sec:closed-clones-as-SK-clones}

In this section we outline why $\ExtSKClone$ is equivalent to $\ClClone$, thereby giving a category-theoretic equivalence not just between the syntax of $\CLwext$ and $\stlc$ but also between their models. 
The proof uses extensionality or the $\eta$-law to pass from arbitrary multimaps to nullary ones, from which one can build a \emph{strict closed} clone. 
We shall rely heavily on the following simple observation.

\begin{restatable}{lemma}{IsomorphismOfClonesFromIsoOfPresheaves}
	\label{res:isomorphism-of-clones-from-isomorphism-of-presheaves}
	Let $\clone$ be a clone and 
		$\objX
		:=
		\big\{ 
			\objX(\Ctx; \type[2])
		\big\}_{\Ctx \in \Obj{\clone}^\star, \type[2] \in \Obj\clone}$
	a family of sets together with an isomorphism
		$\big\{ \nu_{\Ctx; \type} : \clone(\Ctx; \type) \to \objX(\Ctx; \type) \big\}_{\Ctx, \type}$ 
	between $\objX$ and the hom-sets of $\clone$ in the functor category
		$\big[ \Obj{\clone}^{\star} \times \Obj{\clone}, \Set \big]$.
	Then $\objX$ acquires a canonical clone structure and
		$\nu$ 
	becomes an isomorphism of clones.
\end{restatable}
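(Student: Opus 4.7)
The plan is a straightforward transport of structure along $\nu$. Since the indexing category $\Obj{\clone}^{\star} \times \Obj{\clone}$ is discrete, an isomorphism in $[\Obj{\clone}^{\star} \times \Obj{\clone}, \Set]$ is simply a family of bijections $\nu_{\Ctx; \type[2]}$, so every clone-theoretic operation on $\clone$ can be conjugated through $\nu$ to yield a corresponding operation on $\objX$.

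Concretely, I would first define the projections in $\objX$ by setting $\p{i}{\ind{\type}}^{\objX} := \nu_{\ind{\type}; \type_i}(\p{i}{\ind{\type}}^{\clone})$ for each $\type_1, \dots, \type_n \in \Obj{\clone}$ and each $i \in \{1, \dots, n\}$. Next, for $x \in \objX(\type_1, \dots, \type_n; \type[2])$ and $y_i \in \objX(\Ctx[2]; \type_i)$ I would define substitution in $\objX$ by
\[
    \csub{x}{y_1, \dots, y_n}^{\objX}
    :=
    \nu_{\Ctx[2]; \type[2]}\bigl(
        \csub{\nu_{\ind{\type}; \type[2]}^{-1}(x)}
             {\nu_{\Ctx[2]; \type_1}^{-1}(y_1), \dots, \nu_{\Ctx[2]; \type_n}^{-1}(y_n)}^{\clone}
    \bigr).
\]
The two unit axioms and the associativity axiom for $\objX$ then follow directly from the corresponding axioms in $\clone$: one applies $\nu^{-1}$ to both sides of the equation to be proved in $\objX$, at which point the equation reduces to the analogous one in $\clone$.

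By construction, $\nu$ preserves projections and substitution on the nose, so it is a clone homomorphism; since each component $\nu_{\Ctx; \type[2]}$ is a bijection (with clone-theoretic inverse $\nu^{-1}$ defined analogously), $\nu$ is an isomorphism of clones. There is no real obstacle here beyond bookkeeping: the only mild subtlety is to observe that the definitions above are forced, in the sense that any clone structure on $\objX$ making $\nu$ a homomorphism must coincide with the one given, which yields the ``canonical'' qualifier in the statement.
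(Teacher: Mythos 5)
Your proposal is correct and matches the paper's own concrete argument exactly: the paper defines the projections as $\nu_{\Ctx; \type_i}(\p{i}{\Ctx})$ and substitution by the same conjugation formula, then verifies the three axioms by direct calculation. (The paper additionally remarks that one can see this abstractly via monadicity of clones over $[\Obj{\clone}^{\star} \times \Obj{\clone}, \Set]$ and the fact that monadic functors are iso-fibrations, but this is offered only as an alternative to the transport-of-structure argument you give.)
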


We now introduce strict closed clones.

\begin{definition}
	A \emph{strict closed clone} is a closed clone $\closed\clone$ such that every 
		$\Lambda^{\type} 
			: 
			\clone(\Ctx, \type; \type[2]) 
			\to 
			\clone(\Ctx, \expObj{\type}{\type[2]})$
	is the identity.
	We write 
		$\inc : \StClClone \hookrightarrow \ClClone$
	for the full subcategory  consisting of just the strict closed clones. 
\end{definition}


Any closed clone $\closed\clone$ determines a strict closed clone $\toStForClone\clone$ and a clone isomorphism 
	$\lambda_{\clone} : \clone \to \toStForClone\clone$
by applying 
	\Cref{res:isomorphism-of-clones-from-isomorphism-of-presheaves}
to the isomorphisms
	$\clone(\Ctx; \type[2]) \iso \clone(\diamond; \expCtx{\Ctx}{\type[2]})$
arising from the closed structure.
This extends to a functor 
	$\toStForClone : \ClClone \to \StClClone$
sending 
	$\cloneFunctor  : \closed\clone \to \closed{\clone[2]}$
to the composite
	$\lambda_{\clone[2]} \circ \cloneFunctor \circ \lambda_{\clone}^{-1}$.
A short calculation shows that the isomorphisms $\lambda$ make
	$\toStForClone : \ClClone \leftrightarrows \StClClone : \inc$
into an equivalence of categories.

We play a similar game for turning extensional SK-clones into (strict) closed clones. Indeed,  for any extensional SK-clone we have isomorphisms 
	$\clone(\Ctx; \type[2]) \iso \clone(\diamond; \lolliCtx{\Ctx}{\type[2]})$
defining a strict closed clone 
	$\toStForCL\clone$
with 
	$(\toStForCL\clone)(\Ctx; \type[2]) 
		:= \clone(\diamond; \lolliCtx{\Ctx}{\type[2]})$,
and hence a functor
	$\toStForCL : \ExtSKClone \to \StClClone$
in a similar fashion to $\toStForClone$.

Finally, for any closed clone $\closed\clone$ we get an extensional SK-clone 
	$\toCl\clone$
with the same underlying clone by taking application to be application in $\stlc$, so 
	$\clapp{\mmap}{\mmap[2]}
		:= \csub{\eval_{\type, \type[2]}}{\mmap, \mmap[2]}$,
and encoding the combinators as usual.
%
%



\begin{restatable}{theorem}{MainTheoremForClones}
	\label{res:main-theorem-for-clones}
	There exist equivalences of categories 
	\vspace{0mm}
	\begin{td}[ampersand replacement = \&]
		\ExtSKClone 
			\arrow[yshift=2mm]{r}{\toStForCL}
			\arrow[phantom]{r}[font=\scriptsize, description]{\simeq} \&
		\StClClone 
			\arrow[yshift=-2mm]{l}{\toCl' := \toCl \circ \inc}
			\arrow[hookrightarrow, yshift=2mm]{r}{\inc} 
			\arrow[phantom]{r}[font=\scriptsize, description]{\simeq} \&
		\ClClone. 
			\arrow[yshift=-2mm]{l}{\toStForClone} 
	\end{td}
\end{restatable}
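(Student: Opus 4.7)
The plan is to handle the two equivalences separately. The right-hand equivalence $\toStForClone : \ClClone \leftrightarrows \StClClone : \inc$ is already established in the discussion preceding the theorem, so the real work is to construct and verify the left-hand equivalence $\toStForCL : \ExtSKClone \leftrightarrows \StClClone : \toCl'$.

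First I would verify that $\toStForCL$ lands in $\StClClone$. Given an extensional SK-clone $\clone$, the maps $\itapp{\Ctx}{\type[2]}$ are bijections by hypothesis, so applying \Cref{res:isomorphism-of-clones-from-isomorphism-of-presheaves} with $\objX(\Ctx; \type[2]) := \clone(\diamond; \lolliCtx{\Ctx}{\type[2]})$ endows $\toStForCL\clone$ with a canonical clone structure and makes the inverses $\itapp{\Ctx}{\type[2]}^{-1}$ into a clone isomorphism $\clone \xrightarrow{\iso} \toStForCL\clone$. Strict closed structure on $\toStForCL\clone$ is then automatic: setting $\expObj{\type}{\type[2]} := \lolliObj{\type}{\type[2]}$, the definition of $\lolliCtx{-}{{=}}$ gives
\[
(\toStForCL\clone)(\Ctx, \type; \type[2])
= \clone(\diamond; \lolliCtx{\Ctx}{\lolliObj{\type}{\type[2]}})
= (\toStForCL\clone)(\Ctx; \lolliObj{\type}{\type[2]}),
\]
so $\Lambda^{\type}$ is forced to be the identity and $\eval$ is chosen accordingly. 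For $\toCl$, I would define $\clapp{\mmap}{\mmap[2]} := \csub{\eval}{\mmap, \mmap[2]}$ and encode $\combS, \combK$ via the standard $\lambda$-terms built from $\Lambda$; the SK-clone axioms then translate directly into $\beta$-equalities from the closed structure. Extensionality of $\toCl\clone$ follows from $\length{\Ctx}$-fold iteration of the bijection $\clone(\Ctx, \type; \type[2]) \iso \clone(\Ctx; \lolliObj{\type}{\type[2]})$, after verifying by induction on $\length{\Ctx}$ that the resulting bijection $\clone(\Ctx; \type[2]) \iso \clone(\diamond; \lolliCtx{\Ctx}{\type[2]})$ has $\itapp{\Ctx}{\type[2]}$ as its inverse, by unwinding the definition in~(\ref{eq:def-of-itapp}).

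For the equivalence itself, the composite $\toStForCL \circ \toCl'$ is the identity on $\StClClone$ essentially on the nose: strictness of $\clone$ collapses $\clone(\diamond; \lolliCtx{\Ctx}{\type[2]})$ to $\clone(\Ctx; \type[2])$ literally, and under this identification the maps $\itapp{\Ctx}{\type[2]}$ become identities, so both the underlying clone and the strict closed structure are preserved. In the other direction, for $\clone \in \ExtSKClone$ the clone isomorphism $\itapp{\Ctx}{\type[2]}^{-1} : \clone \to \toStForCL\clone$ from the previous step lifts via $\toCl'$ to a candidate SK-clone isomorphism $\clone \to \toCl'(\toStForCL\clone)$, and the remaining task is to check that it preserves $\combS$, $\combK$ and $\clapp{-}{{=}}$. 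I expect this to be the main obstacle: the combinators in $\toCl'(\toStForCL\clone)$ are constructed from the strict closed structure of $\toStForCL\clone$ via the classical $\lambda$-encodings, so the verification essentially recapitulates the fact that the $\lambda$-to-$\CL$ translation is inverse modulo $\wexteq$ to bracket abstraction. This is precisely where extensionality is used, and the naturality supplied by \Cref{res:isomorphism-of-clones-from-isomorphism-of-presheaves} reduces the check to the nullary hom-sets, where the required equations become standard calculations in any strict closed clone.
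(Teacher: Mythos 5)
Your proposal follows essentially the same route as the paper: dispose of the right-hand equivalence as already established, check $\toStForCL \circ \toCl' = \id_{\StClClone}$ directly from strictness, and obtain $\toCl' \circ \toStForCL \iso \id$ from the clone isomorphism induced by the maps $\itapp{\Ctx}{\type[2]}$, with the real work being the verification that this isomorphism preserves application, $\combS$ and $\combK$ --- which the paper also identifies as the crux and carries out by reducing $\eval$ to $\combI$ via extensionality and then computing in the internal language. The only caveat is that your remark about reducing the check to nullary hom-sets applies to the combinators but not to application, which must be verified at arbitrary contexts by conjugating with $\itapp{\Ctx}{\type[2]}$, exactly as in the paper's longer calculation.
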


\section{A categorical model of $\stlc$}
\label{sec:SK-categories}
In 
	\Cref{res:factoring-adjunction-for-cartesian-clones,res:factoring-adjunction-for-cc-clones}
we recovered a unary semantic interpretation of $\stpc$ and $\stlpc$ from our clone-theoretic ones. 
But we do not have a corresponding result for $\stlc$.
In this section we fill this gap: we introduce \emph{SK-categories} and show they play the role for $\stlc$ that cartesian closed categories play for $\stlpc$.
Our definition is inspired by
	\emph{closed categories}~(\cite{Eilenberg1966,Day1978}),
%
%
%
which axiomatise an `internal' version of the hom-functor 
	$\cat(-, {=})$
in the form of a functor
	$\lolliObj{-}{{=}} : \op\cat \times \cat \to \cat$.
%
%
Closed categories have a unit object, corresponding to requiring a unit type~(\cf~\cite{Manzyuk2012}); our definition avoids this
	(see also~\cite{ShulmanNlab,Uustalu2020}).
%

%
Recall that in the presence of contravariance, \emph{dinaturality} and \emph{extranaturality} are the right replacements for naturality
	(see~\eg~\cite[\S IX.4]{cfwm}).
	
\begin{definition}
	\label{def:SK-category}
	An \emph{SK-category} consists of a category $\cat$ and functors
		$\lolliObj{-}{{=}} : \op\cat \times \cat \to \cat$ 
	and 
		$\skforget : \cat \to \Set$,
	together with 
	\begin{enumerate}
	%
	\item 
		Maps
		$\SS{\obj, \obj[2], \obj[3]}
						: 
						\lolliObj
							{\obj}
							{\lolliObj{\obj[2]}{\obj[3]}}
						\to 
						\lolliObj
							{\lolliObj{\obj}{\obj[2]}}
							{\lolliObj{\obj}{\obj[3]}}$
		dinatural in $\obj$ and natural  in $\obj[2]$ and $ \obj[3]$;
	\item 
		Maps
			$\K{\obj}{\obj[2]}  : \obj[2] \to \lolliObj{\obj}{\obj[2]}$
		extranatural in $\obj$ and natural in $\obj[2]$;
	\item 
		Maps 
			$\catapp{\obj, \obj[2]} 
				: 
				\skforget\lolliObj{\obj}{\obj[2]}
					\times 
					\forget\obj
				\to 
				\skforget\obj[2]$
	extranatural in $\obj$ and natural in $\obj[2]$;
	\end{enumerate}
	This data is subject to the condition that 
		$\skforget \circ \lolliObj{-}{{=}} = \cat(-, {=})
							: \op\cat \times \cat \to \Set$
	and the 7 axioms of \Cref{fig:SK-category-axioms}.
	An \emph{SK-functor} 
		$(\functor, \homs, \forgs)$ 
	is a functor 
		$\functor : \cat \to \cat[2]$
	with natural transformations 
	as below, such that the axioms of 
			\Cref{fig:SK-functor-axioms} 
	hold. 
	\[
		\begin{tikzcd}[column sep = 3em]
			\op\cat \times \cat &
			\op{\cat[2]} \times \cat 
			\\
			\cat 
				\arrow{r}[swap]{\functor} &
			\cat[2]
			\arrow["\op\functor \times \functor", from=1-1, to=1-2]
			\arrow["\lolliObj{-}{{=}}"{name=0}, from=1-2, to=2-2]
			\arrow["\lolliObj{-}{{=}}"{name=1, swap}, from=1-1, to=2-1]
			\arrow[from=2-1, to=2-2]
			\arrow["\phi", shorten <=18pt, shorten >=18pt, Rightarrow, from=1, to=0]
		\end{tikzcd}
		\hspace{1.8cm}
		%
		\begin{tikzcd}[]
			\cat && {\cat[2]} \\
			& \Set
			\arrow["\functor", from=1-1, to=1-3]
			\arrow[""{name=0, anchor=center, inner sep=0}, "\skforget", from=1-3, to=2-2]
			\arrow[""{name=1, anchor=center, inner sep=0}, "\skforget"', from=1-1, to=2-2]
			\arrow["\forgs", shorten <=15pt, shorten >=15pt, Rightarrow, from=1, to=0]
		\end{tikzcd}
	\]
	We call $\skfunctor{\functor}$ \emph{strict} if $\phi$ is the identity, and write $\StSKCat$ for the category of SK-categories and strict SK-functors.
\end{definition}

\begin{figure}
\centering 
%
\begin{subfigure}{\textwidth}
\vspace{-2mm}
\[
	\hspace{0mm}
	(1) \hspace{-3mm}
	\begin{tikzcd}[scalenodes = .95, column sep = -6em, row sep = 2em]
		\:
		& 
				\big(
					\skforget
							\lolliObj
								{\obj}
								{\lolliObj{\obj[2]}{\obj[3]}}  
				\times 
					\skforget\lolliObj{\obj}{\obj[2]} \big)
				\times \skforget\obj
			\\
		\node[align = center, name = 0]
			{$
			\big( 
						\skforget
							\lolliObj
								{\obj}
								{\lolliObj{\obj[2]}{\obj[3]}}
						\times 
							\skforget\obj
					\big)
				$ \\[1mm] 
				\hspace{5mm}$
				\times \big(
					\skforget\lolliObj{\obj}{\obj[2]} \big)
						\times 
						\skforget\obj
				\big)
				$
			};
		\: &
		\: & 
		\node[align = center, name = 1]
			{$
				\skforget
					\lolliObj
						{\lolliObj{\obj}{\obj[2]}}
						{\lolliObj{\obj}{\obj[3]}} 
				$ \\[1mm]\hspace{12mm}$			
				\times 
					\skforget\lolliObj{\obj}{\obj[2]} 
			\times \skforget\obj
		$	};
		\\
		\skforget\lolliObj{\obj[2]}{\obj[3]}
			\times \skforget\obj[2] & 
		\: & 
		\skforget\lolliObj{\obj}{\obj[3]}
			\times \skforget\obj 
		\\
		\: &
		\skforget\obj[3] &
		\:
		\arrow[from=1-2, to=0]
		\arrow["\catapp{} \times \catapp{}"{swap}, from=0, to=3-1]
		\arrow["\catapp{}"{swap}, from=3-1, to=4-2]
		\arrow["\skforget\SS{} \times \id \times \id"{yshift=-1mm, xshift=.5mm}, from=1-2, to=1]
		\arrow["\catapp{} \times \id", from=1, to=3-3]
		\arrow["\catapp{}", from=3-3, to=4-2]
	\end{tikzcd}
	\hspace{-11mm}
	\begin{minipage}{0.6\textwidth}
	\centering
	\[
	(2) \hspace{-1mm}
	\begin{tikzcd}[scalenodes = .9, column sep = 1em]
		{\skforget\obj[2] \times \skforget\obj} & {\skforget\lolliObj{\obj}{\obj[2]} \times \skforget\obj} \\
		& {\skforget\obj[2]}
		\arrow["{\skforget\K{}{} \times \id}"{yshift=1mm}, from=1-1, to=1-2]
		\arrow["{\catapp{}}", from=1-2, to=2-2]
		\arrow["{\pi_1}"', from=1-1, to=2-2]
	\end{tikzcd}
	\]
	%
	%
	%
	%
	\[
	(3) \hspace{-1mm}	
	\begin{tikzcd}
		\terminal \times \skforget\obj
		\arrow{r}[yshift = 1mm]{ \nameOf{\id_{\obj}} \times \id }
		\arrow{dr}[swap]{\pi_2}
		&
		\skforget\lolliObj{\obj}{\obj}
			\times
			\skforget\obj
		\arrow{d}{  \catapp{}  }
		\\
		\: 
		&
		\skforget\obj
	\end{tikzcd}
	\]
	\end{minipage}	
\]
\vspace{0mm}
%
%
	\[
	\centering
	\hspace{-2mm}
	(4) \hspace{1mm}
	\begin{tikzcd}[column sep = 4em, scalenodes = 1]
			\skforget\lolliObj{\obj[2]}{\obj[3]}
				\times 
				\skforget\lolliObj{\obj}{\obj[2]}
			\arrow{r}{\circ}
			\arrow{d}[swap]{  \skforget\K{}{} \times \id  }
			&
			\skforget\lolliObj{\obj}{\obj[3]}
			\\
			\skforget
				\lolliObj
					{\obj}
					{\lolliObj{\obj[2]}{\obj[3]}}
			\times 
			\skforget\lolliObj{\obj}{\obj[2]}		
			\arrow{r}[swap]
				{\skforget\SS{} \times \id}		
			&
			\skforget
				\lolliObj
					{  \lolliObj{\obj}{\obj[2]}  }
					{  \lolliObj{\obj}{\obj[3]}  }
			\times 
			\skforget\lolliObj{\obj}{\obj[2]}		
			\arrow{u}[swap]{  \catapp{}  }			
	\end{tikzcd}		
	\]
%
%
\vspace{0mm}
\[
	\hspace{-5mm}
	(5)
	\hspace{-1mm}
	\begin{tikzcd}[column sep = -4.5em, scalenodes = .9]
		\: &
		{\left[ [X, A], \left[ [X, B], [X, C] \right]  \right]} &
		\: 
		\\
		{\big[ [X, A], [X, [B,C]] \big]} & \: & {\big[ \left[ [X, A], [X, B] \right] \left[ [X, A], [X, C] \right]  \big]} \\
		{\big[ X, [A, [B, C]] \big] } && {\big[ \left[ X, [A, B] \right], \left[ [X, A], [X, C] \right] \big]} \\
		{\big[ X, \left[ [A, B], [A, C] \right] \big]} && {\big[ \left[ X, [A, B] \right], \left[ X, [A, C] \right] \big]}
		\arrow["{\SS{}}", from=3-1, to=2-1]
		\arrow["{\lolliObj{\id}{\SS{}}}"{yshift=-1mm, xshift=-.5mm}, from=2-1, to=1-2]
		\arrow["{\SS{}}", from=1-2, to=2-3]
		\arrow["{\lolliObj{\SS{}}{\id}}", from=2-3, to=3-3]
		\arrow["{\lolliObj{\id}{\SS{}}}"', from=3-1, to=4-1]
		\arrow["{\SS{}}"', from=4-1, to=4-3]
		\arrow["{\lolliObj{\id}{\SS{}}}"', from=4-3, to=3-3]
	\end{tikzcd}
	\hspace{-11mm}
	\begin{minipage}{0.6\textwidth}
	\centering
	\[
	(6) \hspace{-2mm}
	\begin{tikzcd}[scalenodes = 1, column sep = 1.5em]
		\lolliObj{\obj}{\obj[3]}
		\arrow{r}{ \lolliObj{\id}{\K{\obj[2]}{}} }
		\arrow[swap]{d}{ \K{\obj[2]}{} }
		&
		\lolliObj
			{\obj}
			{\lolliObj{\obj[2]}{\obj[3]}}
		\\
		\lolliObj
			{\obj[2]}
			{\lolliObj{\obj}{\obj[3]}}
		\arrow{r}[swap]{  \SS{} }
		&
		\lolliObj
			{  \lolliObj{\obj[2]}{\obj}  }
			{  \lolliObj{\obj[2]}{\obj[3]}  }		
		\arrow{u}[swap]{ \lolliObj{\K{\obj[2]}{}}{\id} }
	\end{tikzcd}		
	\]
	\[
	(7) \hspace{-2mm}
	\begin{tikzcd}
		\lolliObj{\obj}{\obj[3]}
		\arrow{r}{  \lolliObj{\id}{\K{\obj[2]}{}}  }
		\arrow{dr}[swap]{ \K{\lolliObj{\obj}{\obj[2]}}{} }
		&
		\lolliObj
			{\obj}
			{  \lolliObj{\obj[2]}{\obj[3]}  }
		\arrow{d}{  \SS{}  }
		\\
		\: 
		&
		\lolliObj 
			{  \lolliObj{\obj}{\obj[2]}  }
			{  \lolliObj{\obj}{\obj[3]}	}
	\end{tikzcd}	
	\]		
	\end{minipage}
\]
\vspace{0mm}
\caption{
	Axioms for an SK-category. 
	In (1) the unlabelled arrow is the canonical map
		$\seqlr{ \seq{ \pi_1 \pi_1, \pi_2 }, \seq{  \pi_2 \pi_1, \pi_2 } }
			:
			(\objX \times \objX[2]) \times \objX[3] 
			\to 
			(\objX \times \objX[3]) \times (\objX \times \objX[3])$.
	In (3) we write $\ulcorner \id_C \urcorner$ for the set map $\ast \mapsto \id_C : 1 \to \forget[C, C]$. 	
	}
\vspace{-2mm}
\label{fig:SK-category-axioms}
\end{subfigure}
\noindent\hspace{0cm}\makebox[10cm]{\rule{\textwidth}{0.4pt}}
\vspace{3mm}
\begin{subfigure}{\textwidth}
\[
	\begin{tikzcd}
		\skforget^{\cat}\lolliObj{\obj}{\obj[2]} & \cat(\obj, \obj[2]) & \cat[2](\functor\obj, \functor\obj[2]) \\
		\skforget^{\cat[2]}\functor\lolliObj{\obj}{\obj[2]}  && \skforget^{\cat[2]}\lolliObj{\functor\obj}{\functor\obj[2]}
		\arrow["\forgs"', from=1-1, to=2-1]
		\arrow["\skforget\homs"', from=2-1, to=2-3]
		\arrow[equals, from=1-3, to=2-3]
		\arrow[equals, from=1-1, to=1-2]
		\arrow["{\functor_{\obj, \obj[2]}}", from=1-2, to=1-3]
	\end{tikzcd}
	\hspace{8mm}
	%
	\begin{tikzcd}
		{\functor\obj[2]} & {\functor\lolliObj{\obj}{\obj[2]}} \\
		& {\lolliObj{\functor\obj}{\functor\obj[2]}}
		\arrow["{\functor\K{\obj}{}}", from=1-1, to=1-2]
		\arrow["\homs", from=1-2, to=2-2]
		\arrow["{\K{\functor\obj}{}}"', from=1-1, to=2-2]
	\end{tikzcd}
\]
%
%
%
%
\[
\hspace{-4mm}
\begin{minipage}{0.5\textwidth}
\begin{tikzcd}[column sep = -2.2em, scalenodes = .9]
	\skforget^{\cat}\lolliObj{\obj}{\obj[2]} \times \skforget^{\cat}\obj &
	\: & 
	\skforget^{\cat}\obj[2] \\
	\skforget^{\cat[2]}\functor\lolliObj{\obj}{\obj[2]} \times \skforget^{\cat[2]}\functor\obj & 
	\: & 
	\skforget^{\cat[2]}\functor\obj[2] \\
	\: &
	\skforget^{\cat[2]}\lolliObj{{\functor}\obj}{{\functor}\obj[2]} \times \skforget^{\cat[2]}\functor\obj &
	\:
	\arrow["{\forgs \times \forgs}"', from=1-1, to=2-1]
	\arrow["{\skforget^{\cat[2]}\homs \times \id}"'{yshift=1mm}, from=2-1, to=3-2]
	\arrow["{\catapp{}^{\cat[2]}}"'{yshift=1mm}, from=3-2, to=2-3]
	\arrow["{\catapp{}^{\cat}}", from=1-1, to=1-3]
	\arrow["\forgs", from=1-3, to=2-3]
\end{tikzcd}
\vspace{8mm}
\end{minipage}
%
%
\hspace{0mm}
%
\begin{tikzcd}[column sep = -2.4em, scalenodes = .9]
	{\functor\big[ \obj, [\obj[2], \obj[3]] \big]} &
	\: & 
	\functor\big[  [\obj, \obj[2]], [\obj, \obj[3]] \big]  \\
	{\big[{\functor} \obj, {\functor}[\obj[2], \obj[3]] \big]} && \big[  {\functor}[\obj, \obj[2]], {\functor}[\obj, \obj[3]] \big] \\
	{\big[{\functor} \obj, [{\functor}\obj[2], {\functor}\obj[3]] \big]}  & 
	\:   & 
	\big[  {\functor}[\obj, \obj[2]], [{\functor}\obj, {\functor}\obj[3]] \big] \\
	\: &
	{\big[  [{\functor}\obj, {\functor}\obj[2]], [{\functor}\obj, {\functor}\obj[3]] \big]} &
	\: 
	\arrow["\homs"', from=1-1, to=2-1]
	\arrow["{\lolliObj{\id}{\homs}}"', from=2-1, to=3-1]
	\arrow["{\functor\SS{}}", from=1-1, to=1-3]
	\arrow["\homs", from=1-3, to=2-3]
	\arrow["{\lolliObj{\id}{\homs}}", from=2-3, to=3-3]
	\arrow["{\SS{}}"'{yshift=1mm, xshift=-1mm},, from=3-1, to=4-2]
	\arrow["{\lolliObj{\homs}{\id}}"'{yshift=1mm, xshift = 1mm}, from=4-2, to=3-3]
\end{tikzcd}
\]
\vspace{-1mm}
\caption{Axioms for an SK-functor}
\vspace{-5mm}
\label{fig:SK-functor-axioms}
\end{subfigure}
\noindent\hspace{0cm}\makebox[10cm]{\rule{\textwidth}{0.4pt}}

\vspace{-2mm}
\caption{Extra axioms for \Cref{def:SK-category}}
\end{figure}	

We think of $\skforget\obj$ as the set of multimaps $\diamond \to \obj$ and $\catapp{}{}$ as a formal application operation $(\clapp{-}{{=}})$. Axioms (1) and (2) are the weak equality laws from $\CL$. 
Axioms (3) and (4) ensure compatibility between the category structure and the corresponding $\CL$ constructions: for example, 
axiom (3) implies
	$\skforget(f)(\varx) = \clapp{f}{\varx}$, 
and  axiom (4) says that composition coincides with
	$\appthree{\combS}{(\app{\combK}{-})}{({=})}$, 
corresponding to the weak equality
	$\appfour
		{\combS}
		{(\app{\combK}{f})}
		{g}
		{\varx}
	=
		\app
			{f}
			{(\app{g}{\varx})}$.
Axioms (5) -- (7) are coherence laws.

%

Every extensional SK-clone determines an SK-category. 
Because we follow~\cite{Eilenberg1966} and ask for an \emph{equality}
	$\skforget\lolliObj{\type}{\type[2]} = \cat(\type, \type[2])$
in the definition of SK-categories, but in general an extensional SK-clone 
	$\sk\clone$
only has an \emph{isomorphism} 
	$\clone(\type; \type[2])
		\iso \clone(\diamond; \lolliObj{\type}{\type[2]})$,
we need to strictify in the same manner as 	
	\Cref{sec:closed-clones-as-SK-clones}.
As a notational shorthand, we write 
	$\combI, \combB$ 
and
	$\combBprime$
for the closed multimaps satisfying the equations below in the internal language of $\clone$
 	(see~\eg~\cite{Gilezan1993,Bimbo2012}):
\[
	\clapp
				{\wkn{\combI}{\type}}
				{\varx} 
			= \varx
	\hspace{1mm}
	,
	\hspace{1mm}
	\clappfour
			{  \combB
				%
					^{\expObj{\type[2]}{\type[3]}, \expObj{\type}{\type[2]}, \type}  
			}
			{\varx}{\varx[2]}{\varx[3]}
		= \clapp{\varx}{(\clapp{\varx[2]}{\varx[3]})}	
	\hspace{1mm}
	,
	\hspace{1mm}
	\clappfour
			{  (\combB')
					^{\expObj{\type}{\type[2]}, \expObj{\type[2]}{\type[3]}, \type}  
			}
			{\varx}{\varx[2]}{\varx[3]}
					= \clapp
							{\varx[2]}{(\clapp{\varx}{\varx[3]})}
\]
	
The category 
	$\toSKCat\clone$
has objects $\Obj\clone$ and hom-sets 
	$(\toSKCat\clone)(\type, \type[2]) 
		:= \clone(\diamond; \lolliObj{\type}{\type[2]})$ (\cf~\cite{Fox1971}).
The identity on $\type$ is $\combI_{\type}$ and the composite of
	$\mmap$ 
and  
	$\mmap'$ 
is
	$\clappthree
			{\combB} 
			{\mmap}
			{\mmap'}$.
%
%
%
For $\skforget$ we take 
	$\skforget\type := \clone(\diamond; \type)$
with the action on maps given by application. 
%
%
%
For 	
	$\lolliObj{-}{{=}}$
the action on objects is given by the SK-structure, with the action on maps given by
$
	\lolliObj{\objX}{\mmap} := 
		\clapp
			{\combB} 
			{\mmap}
$
and
$
	\lolliObj{\mmap}{\objX} := 
		\clapp
			{\combB'} 
			{\mmap}	
$.
The maps
	$\SS{}$ and $\K{}{}$ 
are given by the corresponding combinators, and $\catapp{}$ is the application operation in $\clone$.
This extends to a functor 
	$\toSKCat : \ExtSKClone \to \StSKCat$.

The internal language of SK-categories is $\CLwext$, and hence $\stlc$. 
%
%
We write 
	$\forget$
for the functor which sends an SK-category 
	$\skcat\cat$
to the signature with base types $\Obj\cat$ and constants
	$\skforget\lolliObj{\Ctx}{\type[2]}$. 
	
\begin{proposition}
	\label{res:internal-language-of-SK-cats}
	The forgetful functor 
		$\forget : \StSKCat \to \Sig$
	has a left adjoint, and the free SK-category on $\sig$ is 
		$
			\toSKCat\big( \Syn{\CLwext_\sig} \big)
				\iso (\toSKCat \circ \toCl)\big( \Syn{\stlc_\sig} \big)
		$. 
\end{proposition}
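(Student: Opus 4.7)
The plan is to transport the free property of $\Syn{\stlc_\sig}$ as a closed clone (\Cref{res:free-property-of-stlc}) along a chain of equivalences
\[
	\ClClone \xrightarrow{\toCl} \ExtSKClone \xrightarrow{\toSKCat} \StSKCat,
\]
where the first equivalence is provided by \Cref{res:main-theorem-for-clones} and the second remains to be established.

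The key technical step is to show that $\toSKCat : \ExtSKClone \to \StSKCat$ extends to an equivalence of categories. I would construct a candidate pseudo-inverse $\SKCatToStClone : \StSKCat \to \ExtSKClone$ sending an SK-category $\skcat\cat$ to the extensional SK-clone with the same objects and multimaps
\[
	\SKCatToStClone(\cat)(\Ctx; \type[2]) := \skforget\lolliCtx{\Ctx}{\type[2]}.
\]
The clone operations are derived from the categorical structure using the internal combinators $\combI, \combB, \combBprime$ to implement bracket abstraction; the interpretations of $\combS$, $\combK$ and application are furnished by $\SS{}{}$, $\K{}{}$ and $\catapp{}$ respectively. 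Extensionality is automatic, since the hom-sets are defined so that the map $\itapp{\Ctx}{\type[2]}$ of~(\ref{eq:def-of-itapp}) literally becomes the identity. For the inverse directions, the equation $\skforget \circ \lolliObj{-}{{=}} = \cat(-, {=})$ from \Cref{def:SK-category} makes $\toSKCat \circ \SKCatToStClone$ equal to the identity on the nose, while \Cref{res:isomorphism-of-clones-from-isomorphism-of-presheaves} applied to the extensionality isomorphisms $\clone(\Ctx; \type[2]) \iso \clone(\diamond; \lolliCtx{\Ctx}{\type[2]})$ shows $\SKCatToStClone \circ \toSKCat \iso \id$.

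Given this equivalence, the free property is transported directly. Unwinding the definition of $\forget : \StSKCat \to \Sig$ shows that it coincides with $\forget \circ \SKCatToStClone$ up to canonical isomorphism, so the left adjoint is $\toSKCat$ postcomposed with the left adjoint to $\forget : \ExtSKClone \to \Sig$. Combining \Cref{res:main-theorem-for-clones} with \Cref{res:free-property-of-stlc}, the latter free object is $\toCl(\Syn{\stlc_\sig})$. The identification with $\toSKCat(\Syn{\CLwext_\sig})$ then follows from \Cref{res:free-property-of-clext}: both $\Syn{\CLwext_\sig}$ and $\toCl(\Syn{\stlc_\sig})$ satisfy the same universal property in $\ExtSKClone$, so they are canonically isomorphic, and $\toSKCat$ preserves the isomorphism.

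The main obstacle is verifying that the seven axioms of \Cref{fig:SK-category-axioms} descend to the substitution identities and combinator laws required of an extensional SK-clone on $\SKCatToStClone(\cat)$. Axioms (1)--(4) correspond fairly directly to the weak-equality rules for $\combS, \combK$ and to the compatibility between composition in $\cat$ and the clone's application (as sketched in the text after \Cref{def:SK-category}), whereas the coherence laws (5)--(7) are needed to ensure that bracket abstraction is well-behaved under substitution. This is largely a bookkeeping exercise, but it is the most delicate part of the argument.
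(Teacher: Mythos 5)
Your route is viable but genuinely different from the paper's arrangement, and it quietly relocates the hardest work. The paper establishes this proposition \emph{directly}, in the same style as \Cref{res:free-property-of-clweak} and \Cref{res:free-property-of-clext}: one interprets $\CLwext_\sig$ in an arbitrary SK-category by structural induction and uses the axioms of \Cref{fig:SK-category-axioms} to verify soundness of extensional weak equality, obtaining the unique strict SK-functor out of $\toSKCat\big(\Syn{\CLwext_\sig}\big)$. The proposition is then \emph{used} in the proof of \Cref{res:factoring-adjunction-for-closed-clones} precisely to avoid the ``long series of inductive arguments'' needed to check that the clone operations built from $\SS{}$, $\K{}{}$ and $\catapp{}$ on an SK-category satisfy the clone axioms. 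By instead proving the equivalence $\StSKCat \simeq \ExtSKClone$ first and transporting the free property of $\Syn{\stlc_\sig}$ along $\ClClone \simeq \ExtSKClone \simeq \StSKCat$, you commit yourself to carrying out that full direct verification \emph{without} the internal-language shortcut --- there is no circularity in your plan, but the step you describe as ``largely a bookkeeping exercise'' is exactly the part the paper's own proof of \Cref{res:factoring-adjunction-for-closed-clones} defers and then sidesteps. What your approach buys is a cleaner logical dependency (the equivalence of models comes first, the syntactic statement is a corollary); what it costs is that the delicate inductions on $\mathsf{sub}$, $\mathsf{app}$ and the derived combinators $\combI, \combB, \combBprime$ must all be done by hand from axioms (1)--(7).

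Two smaller points deserve care. First, extensionality of $\SKCatToStClone(\cat)$ is not ``automatic'': even though the sets $\skforget\lolliCtx{\diamond}{\lolliCtx{\Ctx}{\type[2]}}$ and $\skforget\lolliCtx{\Ctx}{\type[2]}$ coincide, the map $\itapp{\Ctx}{\type[2]}$ of~(\ref{eq:def-of-itapp}) is a composite of weakening and application to projections, and showing it equals the identity is a genuine computation with the axioms (it is the analogue of the strict-closedness calculation at the end of the paper's proof of \Cref{res:factoring-adjunction-for-closed-clones}); the same caveat applies to your claim that $\toSKCat \circ \SKCatToStClone$ is the identity ``on the nose'', which requires matching composition in $\cat$ against $\combB$-application via axiom (4). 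Second, \Cref{res:free-property-of-clext} is stated only over $\Sig_0$, so the identification of the free object with $\toSKCat\big(\Syn{\CLwext_\sig}\big)$ via that lemma is only licensed for nullary signatures; for general $\sig$ you should phrase the free object as $(\toSKCat \circ \toCl)\big(\Syn{\stlc_\sig}\big)$ and treat the $\CLwext$ description as the special case.
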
 

Using \Cref{res:main-theorem-for-clones}, we now obtain a version of 
		\Cref{res:factoring-adjunction-for-cartesian-clones,res:factoring-adjunction-for-cc-clones}
for $\stlc$.

\begin{restatable}{theorem}{SKCategoriesAreClosedClones}
	\label{res:factoring-adjunction-for-closed-clones}
	The composite 
		$\toSKCat \circ \inc : \StClClone \to \SKCat$
	is invertible; hence we get the diagram  below, in which the right-hand adjunction is an equivalence:
	\[
		%
		\hspace{1cm}
		%
		\begin{tikzcd}[ampersand replacement = \&]
			\Sig \& \ClClone \& \SKCat
			\arrow[""{name=0, anchor=center, inner sep=0}, "\free"{}, yshift=2mm, from=1-1, to=1-2]
			\arrow[""{name=1, anchor=center, inner sep=0}, "\forget"{}, yshift=-2mm, from=1-2, to=1-1]
			\arrow[""{name=2, anchor=center, inner sep=0}, "\toSKCat \circ \toCl", yshift=2mm, from=1-2, to=1-3]
			\arrow[""{name=3, anchor=center, inner sep=0}, "\SKCatToClone", yshift=-2mm, from=1-3, to=1-2]
			\arrow["\simeq\adjUp"{anchor=center, rotate=0}, draw=none, from=3, to=2]
			\arrow["\adjUp"{anchor=center, rotate=0}, draw=none, from=0, to=1]
		\end{tikzcd}
	\]
	Moreover,
			$\forget \circ \SKCatToClone$
	is equal to the forgetful functor $\SKCat \to \Sig$, so
	 the free SK-category on $\sig$ is canonically isomorphic to 
		$(\toSKCat \circ \toCl)(\Syn{\stlc_{\sig}})$.
\end{restatable}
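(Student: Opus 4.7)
The plan is to construct an explicit quasi-inverse $\SKCatToStClone : \SKCat \to \StClClone$ to the composite $\toSKCat \circ \toCl \circ \inc$, and then assemble the claimed diagram by combining this with \Cref{res:free-property-of-stlc} and the equivalences of \Cref{res:main-theorem-for-clones}. Given an SK-category $\skcat\cat$, I define $\SKCatToStClone(\cat)$ to have sorts $\Obj\cat$, hom-sets $\clone(\Ctx; \type[2]) := \skforget\lolliCtx{\Ctx}{\type[2]}$, and closed structure $\expObj{\type}{\type[2]} := \lolliObj{\type}{\type[2]}$. Strictness of the closed structure is built in, because $\lolliCtx{\Ctx, \type}{\type[2]} = \lolliCtx{\Ctx}{\lolliObj{\type}{\type[2]}}$ by the very definition of $\lolliCtx{-}{{=}}$, so that $\clone(\Ctx, \type; \type[2]) = \clone(\Ctx; \expObj{\type}{\type[2]})$ on the nose. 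The clone's identities, projections and substitution are obtained from $\SS{}$, $\K{}{}$, and $\catapp{}$ by mimicking bracket abstraction as in $\CLwext$ (\cf~\Cref{res:weakening-and-application-in-SK-clones}): the identity on $\type$ is represented by the internal analogue of $\combI_\type$, composition by $\combB$-composition, and projections by the iterated $\combK$-towers. An SK-functor $\skfunctor\functor$ then yields a strict closed clone homomorphism with action on hom-sets given by $\forgs \circ \functor_{-,=}$.

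Second, I will verify the clone and closed-clone axioms using the seven SK-category axioms of \Cref{fig:SK-category-axioms}: (1) and (2) ensure that the internally-defined application satisfies the weak-equality laws of $\CL$ and hence that bracket abstraction is sound; (3) yields the left and right unit laws for substitution; (4) gives the compatibility of composition with the closed structure; and (5)--(7) supply the coherence required for associativity. Likewise, the axioms of \Cref{fig:SK-functor-axioms} translate directly into the preservation conditions required of a strict closed clone homomorphism. I will then check that $\SKCatToStClone$ is quasi-inverse to $\toSKCat \circ \toCl \circ \inc$: starting from a strict closed clone $\closed\clone$, the round trip produces the clone with hom-sets $\clone(\diamond; \expCtx{\Ctx}{\type[2]})$, which by strictness equals $\clone(\Ctx; \type[2])$ on the nose; starting from an SK-category $\skcat\cat$, the round trip gives back the same data by the defining equality $\skforget \circ \lolliObj{-}{{=}} = \cat(-, {=})$ together with the fact that $\SS{}$, $\K{}{}$, $\catapp{}$, and $\skforget$ are preserved by construction.

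Once the equivalence $\StClClone \simeq \SKCat$ is established, composing with $\StClClone \simeq \ClClone$ from \Cref{res:main-theorem-for-clones} yields the adjoint equivalence $\SKCat \simeq \ClClone$ displayed on the right of the diagram; the left-hand adjunction is \Cref{res:free-property-of-stlc}. The identity $\forget \circ \SKCatToClone = \forget_{\SKCat \to \Sig}$ is immediate from the construction (the basic sorts and nullary constants of $\SKCatToClone(\cat)$ are, by definition, the objects of $\cat$ and the elements of $\skforget\lolliCtx{\Ctx}{\type[2]}$), and the free SK-category on $\sig$ is therefore $(\toSKCat \circ \toCl)(\Syn{\stlc_\sig})$. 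The main obstacle is the bookkeeping in the second step: verifying that the seven axioms of an SK-category are \emph{exactly} what is needed to upgrade $\skforget\lolliCtx{-}{{=}}$ to a strict closed clone. This amounts to the semantic counterpart of showing that a combinatory algebra equipped with extensional bracket abstraction gives rise to a model of $\stlc$, and requires matching each clone-theoretic identity to its weak-equality or coherence law in \Cref{fig:SK-category-axioms}.
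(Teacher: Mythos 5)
Your proposal is correct and follows essentially the same route as the paper: both construct an explicit quasi-inverse $\SKCatToStClone$ by taking hom-sets $\skforget\lolliCtx{\Ctx}{\type[2]}$ with projections, application and substitution built from $\K{}{}$-towers, $\SS{}$ and $\catapp{}$ (the paper gives the same inductive definitions and likewise defers the axiom-checking to "a long series of inductive arguments" or, alternatively, to a computation in the internal language $\CLwext$), and then compose with \Cref{res:main-theorem-for-clones} and \Cref{res:free-property-of-stlc} to assemble the diagram. The only cosmetic difference is that the paper also records the internal-language shortcut for the verification step, which you could adopt to shorten the bookkeeping you flag as the main obstacle.
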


Recall that a \emph{closed monoidal category} is a monoidal category 
	$\monoidal{\cat[2]}$
such that every 
	${(-) \tens \obj[2]}$
has a right adjoint $\lolliObj{\obj[2]}{-}$, and that in a closed category $\cat$ giving every
	$\lolliObj{\obj}{-}$
a $\cat$-enriched left adjoint is equivalent to giving closed monoidal structure~(\cite{Eilenberg1966,Day1978,Uustalu2020}).
	\Cref{res:factoring-adjunction-for-closed-clones} 
and 
	\Cref{res:factoring-adjunction-for-cc-clones}
imply a cartesian~version.

\begin{restatable}{corollary}{CartesianClosedFromSK}
	Equipping a category
		$\cat$ 
	with cartesian closed structure is equivalent to equipping $\cat$ with SK-structure and natural isomorphisms 
		$\cat(\tensu, \lolliObj{\obj}{\obj[2]})
				\iso
				\cat(\obj, \obj[2])$
	and
		$\cat(\obj \tens \obj[2], \obj[3])
			\iso 
			\cat(\obj, \lolliObj{\obj[2]}{\obj[3]})$	
	for every 
		$\obj, \obj[2], \obj[3] \in \cat$.
\end{restatable}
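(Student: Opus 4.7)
The plan is to chain \Cref{res:factoring-adjunction-for-closed-clones} with \Cref{res:factoring-adjunction-for-cc-clones}, linked via \Cref{res:representable-equals-cartesian}. By the former, SK-structure on $\cat$ is equivalent to the unary-hom component of a closed clone $\clone := \SKCatToClone\cat$, under which $\clone(\obj; \obj[2]) = \cat(\obj, \obj[2])$ and more generally $\clone(\Ctx; \type) \iso \clone(\diamond; \lolliCtx{\Ctx}{\type})$. By the latter, equipping $\cat$ with cartesian closed structure is equivalent to equipping $\clone$ with cartesian structure on top of its closed clone structure. So it suffices to show that cartesian structure on $\clone$ corresponds to the two natural isomorphisms of the corollary.

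By \Cref{res:representable-equals-cartesian}, cartesian structure on $\clone$ is equivalent to representable structure. Applying \Cref{res:universal-arrows-as-natural-isomorphisms}, and noting that the higher-arity cases follow from the nullary and binary ones via the closed-under-composition clause of \Cref{def:representable-multicategory}, this amounts to nullary and binary tensors with natural isomorphisms
\[
	\clone(\tensu; \type) \iso \clone(\diamond; \type)
	\quad \text{and} \quad
	\clone(\obj \tens \obj[2]; \type) \iso \clone(\obj, \obj[2]; \type).
\]
Unwinding these on the SK-category side via the closed structure, together with the identity $\skforget\lolliObj{\obj}{\obj[2]} = \cat(\obj, \obj[2])$, the binary iso becomes the corollary's second iso, and the nullary iso becomes $\cat(\tensu, \type) \iso \skforget\type$, which specialises at $\type := \lolliObj{\obj}{\obj[2]}$ to the first iso of the corollary.

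The main obstacle is the converse to this specialisation: recovering the full nullary representability $\cat(\tensu, \type) \iso \skforget\type$ for all $\type$ from the corollary's exponential-only iso. The plan for this step is to combine the two given isomorphisms---setting $\obj := \tensu$ in the second and composing with the first yields $\cat(\tensu \tens \obj[2], \obj[3]) \iso \cat(\obj[2], \obj[3])$, so $\tensu \tens \obj[2] \iso \obj[2]$ by Yoneda---and then to use the SK-structure's map $\K{\tensu}{\type} : \type \to \lolliObj{\tensu}{\type}$ together with this unit law to exhibit a natural isomorphism $\type \iso \lolliObj{\tensu}{\type}$. Since $\skforget\lolliObj{\tensu}{\type} = \cat(\tensu, \type)$, this extends the exponential-only iso to all $\type$ and closes the loop, yielding the required equivalence.
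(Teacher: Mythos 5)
Your overall route is the same as the paper's: both reduce the statement to endowing the closed clone $\SKCatToClone\cat$ with representable (hence, by \Cref{res:representable-equals-cartesian}, cartesian) structure and then transport it back along the equivalences of \Cref{res:factoring-adjunction-for-cc-clones,res:factoring-adjunction-for-closed-clones}. You have also correctly isolated the one genuinely delicate point, which the paper's own sketch passes over inside its induction: the hypothesised isomorphisms give $\cat(\tensu, \type) \iso \skforget\type$ only for objects of the form $\type = \lolliObj{\obj}{\obj[2]}$, whereas nullary representability needs it for \emph{every} $\type$. This is exactly the $n = 0$ instance of the paper's claimed $\skforget\lolliObj{\otimes_i \type_i}{\type[2]} \iso \skforget\lolliCtx{\type_1, \dots, \type_n}{\type[2]}$, where the chain of isomorphisms silently breaks down because $\lolliCtx{\diamond}{\type[2]} = \type[2]$ need not be an exponential.

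However, your proposed repair does not close this gap. The Yoneda argument gives the \emph{left} unit law $\tensu \tens \obj[2] \iso \obj[2]$, but by the second hypothesised isomorphism $\lolliObj{\tensu}{-}$ is right adjoint to $(-) \tens \tensu$, i.e.\ to tensoring on the \emph{right}; so to conclude $\type \iso \lolliObj{\tensu}{\type}$ by Yoneda you would need the right unit law $\objX \tens \tensu \iso \objX$, and no symmetry of $\tens$ is available to convert one unit law into the other. Put clone-theoretically: a universal nullary arrow $\diamond \to \tensu$ is in particular an element of $\skforget\tensu$, so an inverse to $\skforget\K{\tensu}{\type} : \skforget\type \to \cat(\tensu, \type)$ must ultimately be built from a point of $\skforget\tensu$ --- which neither your argument nor the given data visibly produces, since the SK-axioms only populate $\skforget$ on exponential objects, where it coincides with a hom-set. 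Until you either derive the right unit law, exhibit such a point, or otherwise extend the first isomorphism to all of $\cat$, the nullary case --- and with it the terminal object of the claimed cartesian closed structure --- remains open.
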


%
%

\subsubsection{Acknowledgements.} 
I thank Nathanael Arkor and Dylan McDermott for useful discussions on early drafts of this paper, and the reviewers for their many useful comments.
I am grateful to Nayan Rajesh for pointing out the adjunctions between cartesian categories and cartesian clones, and between cartesian closed categories and cartesian closed clones, are in fact equivalences.
Finally, I thank Marcelo Fiore for introducing me to clones.

%

 \bibliographystyle{splncs04}
 \bibliography{clones-combinatory-logic-categories}
%


\vfill
\appendix 

\pagebreak
\section{Omitted proofs}
\label{sec:appendix}
We sketch some of the proofs omitted from the main paper.

\subsection{ \texorpdfstring{\Cref{sec:clones-and-multicats}}{Section 2} }


\SemanticInterpOfBasicLin*
\begin{proof}
The unique extension 
	$\interp\sem{-} : \Syn{\basicMon_{\sig}} \to \multicat$
of
	$\interp : \sig \to \forget\multicat$
is the semantic interpretation: 
\[
	\interp\sem{\varx} 
		= \Id_{\sem\type}
	\qquad
	,
	\qquad
	\interp\sem{
		c^{\S}(\mmap[2]_1, \dots, \mmap[2]_n)
	}
	= \msub
			{(\interp c)}
			{\interp\sem{\mmap[2]_1}, \dots, \interp\sem{\mmap[2]_n}}
\]
\qed
\end{proof}


\CloneToMulticat*
\begin{proof}
	The identity  $\Id_{\type}$ in $\toMulti\clone$ is $\p{1}{\type}$.
	Substitution is as follows.
	For multimaps
		$\mmap : \type_1, \dots, \type_n \to \type[2]$
	and
		$( \mmap[2]_i : \Ctx[2]_i \to \type_i)_{i = 1, \dots, n}$,
	we set 
		$\msub
			{\mmap}
			{ \mmap[2]_1, \dots,  \mmap[2]_n}
			: \Ctx[2]_1, \dots, \Ctx[2]_n \to \type[2]$
	to be the substitution 
		$\csubbig
			{\mmap}
			{ 
				\csub{\mmap[2]_1}{\p{\Ctx[2]_1}{}}, 
				\dots,  
				\csub{\mmap[2]_n}{{\p{\Ctx[2]_n}{}}}
			}$,
	where 
		$\csub
			{\mmap[2]_i}
			{\p{\Ctx[2]_i}{}}$
	is the composite
		$
			\Ctx[2]_1, \dots, \Ctx[2]_n \to \Ctx[2]_i \xra{\mmap[2]_i} \type_i
		$,
	in which the first arrow uses the obvious projections
		$\p{\sum_{j=1}^{i-1} \length{\Ctx[2]_j}}{},
		\dots,
		\p{\sum_{j=1}^{i} \length{\Ctx[2]_j}}{}$.
	\qed
\end{proof}


\subsection{ \texorpdfstring{\Cref{sec:cartesian-clones}}{Section 4}  }

\FactoringAdjunctionForCartesianClones*
\begin{proof}
	Set
		$\eta_{\clone} : \clone \to \catToClone(\nucleus\clone)$
	to be the identity-on-objects cartesian clone map which sends 
		$\mmap : \type_1, \dots, \type_n \to \type[2]$
	to 
		$\csub{\mmap}{\pi_1^{\ind\type}, \dots, \pi_n^{\ind\type}} : \prod_{i=1}^n \type_i \to \type[2]$.
	Now suppose 
		$\cloneFunctor : \clone \to \catToClone{\cat[2]}$
	is a clone homomorphism.
	We show there is a unique strict product-preserving functor 
		$\ext{\cloneFunctor} : \nucleus\clone \to \cat[2]$
	satisfying 
		$\catToClone(\ext{\cloneFunctor}) \circ \eta_{\clone} = \cloneFunctor$.
	Take $\ext{\cloneFunctor}$ to be the restriction of $\cloneFunctor$ to unary maps, so that 
		$(\catToClone(\ext{\cloneFunctor}) \circ \eta_{\clone})(\mmap)$
	is the following:
	\[	
	 	(\catToClone(\ext{\cloneFunctor})(\csub{t}{\pi_1^{\ind\type}, \dots, \pi_n^{\ind\type}})
	 	= \cloneFunctor(\csub{\mmap}{\pi_1^{\ind\type}, \dots, \pi_n^{\ind\type}})
	 	= (\cloneFunctor\mmap) \circ 
	 			\seq{\pi_1^{\ind\type}, \dots, \pi_n^{\ind\type}}
	 	= \cloneFunctor\mmap
	\]
	and by the unit law in $\catToClone\cat[2]$ this is exactly 
	$\cloneFunctor(\mmap)$, 
	as required.
	Uniqueness then follows by essentially the same argument.
	\qed
 \end{proof} 
 

For proving \Cref{res:representable-equals-cartesian}, we begin with the following lemma, which says that the universal arrows in a representable multicategory are automatically closed under composition.
	
	
\begin{lemma}
	\label{res:closure-of-universal-arrows-in-clones}
	For any clone $\clone$, the multicategory $\toMulti\clone$ is representable if and only if
	$\toMulti\clone$ is equipped with universal arrows 
		as in \Cref{def:representable-multicategory}
	for each 
		${\objX_1, \dots, \objX_n \in {\Obj\clone}^{\star}}$.
\end{lemma}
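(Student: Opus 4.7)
My plan is to prove the non-trivial direction (the other being immediate) by first upgrading the hom-set isomorphism coming from each universal arrow to the full product universal property in $\clone$, and then deducing closure under composition from the uniqueness of products.

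First, for each tuple $\ind\objX \in \Obj\clone^\star$, using the universal property of $\univ_{\ind\objX} : \ind\objX \to \T(\ind\objX)$ I will define projections $\pi_i \in \clone(\T(\ind\objX); \objX_i)$ as the unique multimaps satisfying $\csub{\pi_i}{\univ_{\ind\objX}} = \p{i}{\ind\objX}$. The crucial calculation is the identity $\csub{\univ_{\ind\objX}}{\pi_1, \dots, \pi_n} = \p{1}{\T(\ind\objX)}$ in the clone: both sides, when further composed with $\univ_{\ind\objX}$, evaluate to $\univ_{\ind\objX}$ itself (using associativity with the defining equation for the $\pi_i$ on one side, and the projection law on the other), so uniqueness in the universal property forces equality. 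Associativity then yields, for any $\mmap \in \clone(Z; \T(\ind\objX))$, the identity $\mmap = \csub{\univ_{\ind\objX}}{\csub{\pi_1}{\mmap}, \dots, \csub{\pi_n}{\mmap}}$, establishing that $\T(\ind\objX)$ together with the $\pi_i$'s satisfies the product universal property in the sense of \Cref{def:finite-product-multicategory}.

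Granted this, both $\T(\T(\ind\objX), \T(\ind{\objX[2]}))$ and $\T(\ind\objX, \ind{\objX[2]})$ will be categorical products of the concatenated tuple $(\ind\objX, \ind{\objX[2]})$ in $\clone$ --- the former by associativity of products, the latter directly --- so there will be a canonical isomorphism $\alpha$ between them determined by matching projections. I will then verify by a direct associativity-based computation that each projection of $\T(\ind\objX, \ind{\objX[2]})$ applied to $\csub{\alpha}{\mu}$ equals the corresponding projection of $\univ_{\ind\objX, \ind{\objX[2]}}$, where $\mu := \msub{\univ_{\T(\ind\objX), \T(\ind{\objX[2]})}}{\univ_{\ind\objX}, \univ_{\ind{\objX[2]}}}$ is the candidate composite. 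The product property will then force $\csub{\alpha}{\mu} = \univ_{\ind\objX, \ind{\objX[2]}}$, and $\mu$ will inherit universality from $\univ_{\ind\objX, \ind{\objX[2]}}$ by precomposition with the isomorphism~$\alpha$.

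The main obstacle will be the first step. The universal arrow condition in \Cref{def:representable-multicategory} directly yields only the isomorphism $\clone(\T(\ind\objX); Y) \iso \clone(\ind\objX; Y)$, which varies the codomain with fixed unary domain, whereas the product property I need varies the input context with fixed codomain. Bridging this gap depends essentially on the clone-specific identity $\csub{\univ_{\ind\objX}}{\pi_1, \dots, \pi_n} = \p{1}{\T(\ind\objX)}$: here all the $\pi_i$ share the single input context $\T(\ind\objX)$, so their paired substitution into $\univ_{\ind\objX}$ can yield an endomorphism of $\T(\ind\objX)$, whereas in a general multicategory the analogous composition would yield a map with domain $\T(\ind\objX)^n$ instead.
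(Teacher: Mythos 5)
Your proposal is correct and follows essentially the same route as the paper: the heart of both arguments is the pair of identities (\ref{eq:properties-of-pi}) relating $\univ_{\ind\type}$ to the derived projections $\pi_i^{\ind\type} = \ext{(\p{i}{\ind\type})}$, and your uniqueness argument for $\csub{\univ_{\ind\type}}{\pi_1, \dots, \pi_n} = \id$ (together with your closing remark on why this is clone-specific) supplies exactly the justification the paper leaves implicit. The only difference is organisational: where the paper exhibits an explicit inverse $\csub{(-)}{\widetilde{\pi}_1, \dots, \widetilde{\pi}_{n+m}}$ to precomposition with the candidate composite, you route the same computation through the uniqueness of products and a transfer of universality along the canonical isomorphism, which merely anticipates \Cref{res:representable-equals-cartesian}.
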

\begin{proof}
	The key observation is the following. Let us write 
		$\pi_i^{\ind\type} : \T_{i=1}^n \type_i \to \type_i$
	for $\ext{(\p{i}{\ind{\type}})}$, the unique map corresponding to 
		$\p{i}{\ind\type} : \type_1, \dots, \type_n \to \type_i$
	across the isomorphism given by representability.
	Then
	\begin{equation}
	\label{eq:properties-of-pi}
	\pi_i^{\ind{\type}} \circ \univ_{\ind\type} = \p{i}{\ind\type}			
	\qquad 
	,
	\qquad 
	\csub
		{\univ_{\ind\type}}	
		{\pi_1^{\ind\type}, \dots, \pi_n^{\ind\type}}
	=
		\id_{\T(\type_1, \dots, \type_n)}
	\end{equation}
	
	We now turn to proving the result. 
	We are given universal arrows as follows
	\begin{align*}
		\univ_{\ind\type} &: \type_1, \dots, \type_n \to \T_{i=1}^n \type_i \\
		\univ_{\ind{\type[2]}} &: \type[2]_1, \dots, \type[2]_m \to \T_{j=1}^m \type[2]_i \\
		\univ_{\T(\ind\type), \T(\ind{\type[2]})}
			&: \T_{i=1}^n \type_i, \T_{j=1}^m \type[2]_j 
			\to 
			\T{\big(  
					\T_{i=1}^n \type_i, \T_{j=1}^n \type[2]_m
				)\big)}
	\end{align*}
	and we need to show the composite 
		$\widetilde{\univ} :=
			\msub
				{\univ_{\T(\ind\type), \T(\ind{\type[2]})}}
				{\univ_{\ind\type}, \univ_{\ind{\type[2]}}}$
	in $\toMulti\clone$	is also universal.	
	To this end, set
	\[
		\objX_i :=
			\begin{cases}
				\type_i & \text{ for } i = 1, \dots, n \\
				\type[2]_{i - n} & \text{ for } i = n+1, \dots, n + m
			\end{cases}
	\]
	and define
		$\widetilde{\pi}_i : 	
				\T{\big(  
						{\T}_{i=1}^n \type_i, \T_{j=1}^n \type[2]_m
					)\big)} \to \objX_i$
	using iterated applications of the projections $\pi$:
	\[
		\widetilde{\pi}_i  	:=
			\begin{cases}
				\csub
					{\pi_i^{\ind\type}}
					{\pi_1^{\T(\ind\type), \T(\ind{\type[2]})}} & \text{ for } i = 1, \dots, n \\
				\csub
						{\pi_{i-n}^{\ind{\type[2]}}}
						{\pi_1^{\T(\ind\type), \T(\ind{\type[2]})}} & \text{ for } i = n+1, \dots, n + m
			\end{cases}
	\]
	Directly calculating using~(\ref{eq:properties-of-pi}), we see that
		$\p{i}{ \ind{\type}, \ind{\type[2]} }
					=
					\msub
						{\widetilde{\pi}_i}
						{\widetilde{\univ}}$
	and
		$\id_{\T(\T(\ind\type), \T(\ind{\type[2]}))}
					= 
					\csub 
						{\widetilde{\univ}}
						{\widetilde{\pi}_1, \dots, \widetilde{\pi}_{n+m}}$.
	Hence, setting
		$\ext{\mmap} := 
				\csub
					{\mmap}
					{\widetilde{\pi}_1, \dots, \widetilde{\pi}_{n+m}}$
	we get that $\ext{(-)}$ is inverse to 
		$
		\msub
			{(-)}
			{\msub
					{\univ_{\T(\ind\type), \T(\ind{\type[2]})}}
					{\univ_{\ind\type}, \univ_{\ind{\type[2]}}}}
		= 
		\msub
			{(-)}
			{\widetilde\univ}$, 
	as required.
	\qed
\end{proof}



\RepresentableEqualsCartesian*
\begin{proof}
		Suppose first that $\clone$ has representable structure. Then for  
			${\type_1, \dots, \type_n \in {\Obj\clone}^{\star}}$
		let 
			$\pi_i^{\ind\type} := \ext{(\p{i}{\ind\type})} : \T_{i=1}^n \type_i \to \type_i$
		be the multimap defined in 
			\Cref{res:closure-of-universal-arrows-in-clones}, 
		and for 
			$(\mmap_i : \Ctx \to \type_i)_{i=1, \dots, n}$
		set 
			$\seq{\mmap_1, \dots, \mmap_n}$
		to be 
			$\csub
				{\univ_{\ind\type}}
				{\mmap_1, \dots, \mmap_n}$.
		Then, using~(\ref{eq:properties-of-pi}):
		\[
			\pi_i^{\ind\type} \circ \seq{\mmap_1, \dots, \mmap_n}
			=
			\csublr
				{\pi_i^{\ind\type}}
				{\csub
						{\univ_{\ind\type}}
						{\mmap_1, \dots, \mmap_n}}
			= 
				\csub
					{\p{i}{\ind\type}}
					{\mmap_1, \dots, \mmap_n}
			= 
				\mmap_i
		\]
		On the other hand, 
		$
			\seq{ 
				\csub{\pi_1^{\ind\type}}{\mmap},
				\dots, 
				\csub{\pi_n^{\ind\type}}{\mmap}
			}
			= 
				\csubbig 
					{\univ_{\ind\type}}
					{\csub{\pi_1^{\ind\type}}{\mmap},
									\dots, 
									\csub{\pi_n^{\ind\type}}{\mmap}}
			=
				\mmap
		$, 
		so the tensors $\T_{i=1}^n \type_i$ become cartesian products. 
		%
		
		Now suppose that $\clone$ has cartesian structure. By 
			\Cref{res:closure-of-universal-arrows-in-clones}, 
		it suffices to construct a universal arrow for each 
			$\type_1, \dots, \type_n \in {\Obj\clone}^{\star}$.
		To this end, we set 
		\begin{align*}
			\univ_{\ind\type} 
					&:= \seq{\p{1}{\ind\type}, \dots, \p{n}{\ind\type}}
					: \type_1, \dots, \type_n \to \smallprod_{i=1}^n \type_i
			\\
			\ext{\mmap} 
								&:= \csub{\mmap}{\pi_1^{\ind\type}, \dots, \pi_n^{\ind\type}} 
								: \smallprod_{i=1}^n \type_i \to \type[2]
		\end{align*}
		for 
			$\mmap : \type_1, \dots, \type_n \to \type[2]$.
		Then, calculating as in \Cref{res:closure-of-universal-arrows-in-clones}, we have 
		\begin{align*}
			\csub
				{\ext{\mmap}}
				{\univ_{\ind\type}}
			= 
			\csubthree
				{\mmap}
				{\pi_1^{\ind\type}, \dots, \pi_n^{\ind\type}}
				{\seq{\p{1}{\ind\type}, \dots, \p{n}{\ind\type}}}
			=
			\csub
				{\mmap}
				{\p{1}{\ind\type}, \dots, \p{n}{\ind\type}}
			= 
			\mmap
			\\ 
			\ext{ ( \csub{\mmap}{\univ_{\ind\type}} ) }
			= 
			\csubthree
				{\mmap}
				{\univ_{\ind\type}}
				{\pi_1^{\ind\type}, \dots, \pi_n^{\ind\type}} 
			=
			\csubbig
				{\mmap}
				{\dots, 
					\csub
						{\univ_{\ind\type}}
						{\pi_1^{\ind\type}, \dots, \pi_n^{\ind\type}}, 
				\dots}
			= 
			\mmap
		\end{align*}
		so we get a representable structure on $\clone$. 
		\qed
\end{proof}	


\CartCloneEquivalentToCartCat*
\begin{proof}
	The counit of the adjunction 
		$\nucleus{(-)} \dashv \catToClone$
	is the identity.
	On the other hand,  the unit $\eta$ sends a multimap 
		$\mmap : \type_1, \dots, \type_n \to \type[2]$
	to 
		$\csub
			{\mmap}
			{\pi_1^{\ind\type}, \dots, \pi_n^{\ind\type}}$.
	But in 
		\Cref{res:representable-equals-cartesian}
	we showed that this mapping has inverse 
		$\csub
				{(-)}
				{ \seq{\p{1}{\ind\type}, \dots, \p{n}{\ind\type}} } $.
	Hence, $\eta$ is also invertible.   \qed
\end{proof}


%
%


\subsection{ \texorpdfstring{\Cref{sec:closed-clones-as-SK-clones}}{Section 7.1} }

\IsomorphismOfClonesFromIsoOfPresheaves*
\begin{proof}
	Abstractly, it suffices to see that the full subcategory of clones satisfying 
		$\Obj\clone = \sorts$
	is monadic over the functor category $[\sorts^{\star} \times \sorts, \Set]$, and that every monadic functor is an \emph{iso-fibration} in the sense of~\cite{Berger2012}.
	Concretely, the projections are
		$\nu_{\Ctx; \type_i}(\p{i}{\Ctx})$
	and substitution is
	\begin{equation*}
		\label{eq:induced-substitution-operation}
		\csub{\mmap}{\mmap[2]_1, \dots, \mmap[2]_n}
			:=
				\nu_{\Ctx[2]; \type[2]}{\big( 
					\csub
							{ \nu_{\Ctx; \type[2]}^{-1}(\mmap) }
							{\dots, \nu_{\Ctx[2]; \type_i}^{-1}(\mmap[2]_i), \dots} 
				\big)}
	\end{equation*}
	The three axioms are then verified by direct calculation. \qed
\end{proof}


\MainTheoremForClones*
\begin{proof}
	The right-hand equivalence is easily checked. 
	
	For the left-hand one, write $\toCl'$ for the restriction $\toCl \circ \inc$ of $\toCl$ to strict closed clones.
	A short calculation shows that $\toStForCL \circ \toCl' = \id_{\StClClone}$.
	For the other direction, the composite
		$\toCl' \circ \toStForCL$
	sends an extensional SK-clone $\sk\clone$ to the clone 
	with the same objects and hom-sets
	\[	
		(\toCl'\toStForCL\clone)(\Ctx; \type[2])
			= (\toStForCL\clone)(\Ctx; \type[2])
			= \clone(\diamond; \lolliCtx{\Ctx}{\type[2]})
	\]
	Variables and substitution in this clone are as in 
		$\toStForCL\clone$.
	Since the underlying clone structure of 
		$(\toCl'\toStForCL)\clone$
	is the same as that in 
		$\toStForCL\clone$,
	and we have a clone isomorphism
		$\itappu_{\clone} : \toStForCL\clone \to \clone$
	induced by the isomorphisms 
		$\itapp{\Ctx}{\type[2]}$,
	we get a clone isomorphism
		$(\toCl'\toStForCL)\clone \to \clone$.	
	So it remains to show this is an isomorphism of SK-clones.
		
	First we show preservation of application. 
	In any closed clone, $\eval_{\type, \type[2]}$ is the multimap
			$\Lambda(\p{1}{\expObj{\type}{\type[2]}})$	
	so in $\toStForCL\clone$ we get 
		$\eval_{\type, \type[2]} 
			=
			\itapp
				{\lolliObj{\type}{\type[2]}}
				{\lolliObj{\type}{\type[2]}}^{-1}
				(  \p{1}{\lolliObj{\type}{\type[2]}}  )$.
	Now, by extensionality we have a closed multimap 
		$\combI_{\type} : \diamond \to \lolliObj{\type}{\type}$ 
	satisfying
			$\itapp{\type}{\type}(\combI_{\type}) 
					= \clapp{  \combI_{\type}  }{  \p{1}{\type}  } 
					= \p{1}{\type}$.			
	Hence, 
		$\eval_{\type, \type[2]} = \combI_{\lolliObj{\type}{\type[2]}}$.
	We now compute
		 $\clapp{\mmap}{\mmap[2]}$ 
	in
		$(\toCl\toStForCL)\clone$
	as follows:
	\begin{align*}
		{\mmap} \middot_{(\toCl\toStForCL)\clone} {\mmap[2]}
		&= 
		\app{\mmap}{\mmap[2]} 
			\quad \text{ in } \toStForCL\clone			\\
		&=
		\csub
			{\eval_{\type, \type[2]}}
			{\mmap, \mmap[2]} 				
			\quad \text{ in } \toStForCL\clone				\\
		&=
		\csub
			{ \combI_{\lolliObj{\type}{\type[2]}}  }
			{\mmap, \mmap[2]}		
			\quad \text{ in } \toStForCL\clone						\\
		&=
			\itapp{\Ctx}{\type[2]}^{-1}
			\big(\csub
				{ (\itapp
							{\lolliObj{\type}{\type[2]}, \type}
							{\type[2]}
							\,
							\combI_{\lolliObj{\type}{\type[2]}} )}
				{\itapp{\Ctx}
						{\lolliObj{\type}
						{\type[2]}}
						\,
						\mmap, 
					\itapp
						{\Ctx}
						{\type}
						\,
						\mmap[2]}
			\big)
			\quad \text{ in } \clone																	\\
		&=																		
			\itapp{\Ctx}{\type[2]}^{-1}
			\big(\csub
				{ (\clappthree
						{  \combI_{\lolliObj{\type}{\type[2]}} } 
						{ \p{1}{\lolliObj{\type}{\type[2]}, \type}  }
						{\p{2}{\lolliObj{\type}{\type[2]}, \type} }		
				)}
				{\itapp{\Ctx}
						{\lolliObj{\type}
						{\type[2]}}
						\,
						\mmap, 
					\itapp
						{\Ctx}
						{\type}
						\,
						\mmap[2]}
			\big)		
				\quad \text{ in } \clone																		\\
		&=
			\itapp{\Ctx}{\type[2]}^{-1}
			\big(\csub
				{( \clapp 
						{ \p{1}{\lolliObj{\type}{\type[2]}, \type}  }
					{\p{2}{\lolliObj{\type}{\type[2]}, \type} }		
				)}
				{\itapp{\Ctx}
						{\lolliObj{\type}
						{\type[2]}}
						\,
						\mmap, 
					\itapp
						{\Ctx}
						{\type}
						\,
						\mmap[2]}
			\big)	
				\quad \text{ in } \clone	\\
		&=			
			\itapp{\Ctx}{\type[2]}^{-1}
			\big(
				\clapp
					{\csub
							{\p{1}{\lolliObj{\type}{\type[2]}, \type}}
							{\itapp{\Ctx}
													{\lolliObj{\type}
													{\type[2]}}
													\,
													\mmap, 
												\itapp
													{\Ctx}
													{\type}
													\,
													\mmap[2]} 
					}
					{\csub
							{\p{2}{\lolliObj{\type}{\type[2]}, \type}}
							{\itapp{\Ctx}
													{\lolliObj{\type}
													{\type[2]}}
													\,
													\mmap, 
												\itapp
													{\Ctx}
													{\type}
													\,
													\mmap[2]} 
					}
			\big)				
				\quad \text{ in } \clone															\\
			&= 			
			\itapp{\Ctx}{\type[2]}^{-1}
			\big(
				\clapp
					{
						\itapp{\Ctx}
							{\lolliObj{\type}
							{\type[2]}}
							\,
							\mmap					
					}
					{
						\itapp
							{\Ctx}
							{\type}
							\,
							\mmap[2]
					}
			\big)			\quad \text{ in } \clone																	
	\end{align*}
	Hence,
		$\itappu
			({\mmap} \middot_{(\toCl\toStForCL)\clone} {\mmap[2]})
			=	
				{\itappu \, \mmap}
				\middot_{\clone}
				{\itappu \, \mmap[2]}
		$, 
	as required.
		
	For preservation of $\combK$ and $\combS$ first recall that
		$(\itappu_{\clone})_{\diamond; \type[2]}
				= \itapp{\diamond}{\type[2]}$
	is the identity:
		$(\toStForCL\clone)(\diamond; \type[2]) \xra{=} \clone(\diamond; \type[2])$.
	Thus, we need to show that the combinators $\combK$ and $\combS$ in 
		$(\toCl'\toStForCL)\clone$
	are exactly those in $\clone$.
	In $(\toCl'\toStForCL)\clone$ we have that $\combK_{\type, \type[2]}$ is 
				$\Lambda^{\type}\Lambda^{\type[2]} (\p{1}{\type, \type[2]})$
	in $\toStForCL\clone$.
	Using the strictness of $\toStForCL\clone$ and the definition of projections from 
		\Cref{res:isomorphism-of-clones-from-isomorphism-of-presheaves},
	we get that $\combK_{\type, \type[2]}$ in 
		$(\toCl'\toStForCL)\clone$
	is
		$\itapp{\type, \type[2]}{\type}^{-1}
				(\p{1}{\type, \type[2]})$
	in $\clone$.
	But in $\clone$ we have
		$\itapp{\type, \type[2]}{\type} (\combK_{\type, \type[2]})
				= 
				\clappthree
					{\combK_{\type, \type[2]}}
					{\p{1}{\type, \type[2]}}
					{\p{2}{\type, \type[2]}}
				= 
				\p{1}{\type, \type[2]}$
	so $\combK$ must be preserved.
	
	The argument for $\combS$ is similar. \qed
\end{proof}


\subsection{ \texorpdfstring{\Cref{sec:SK-categories}}{Section 8} }

\SKCategoriesAreClosedClones*
\begin{proof}
The main part of the proof is constructing an isomorphism 
\[
	\toSKCat \circ \toCl' : \StClClone \leftrightarrows \StSKCat : \SKCatToStClone
\]
The claim then follows by setting 
	$\SKCatToClone$
to be the composite 
	$\inc \circ \SKCatToStClone$.
The most difficult part of the proof is constructing clone structure from an SK-category
	$\skcat{\cat}$.
Here we give an outline: the rough idea is to define substitution using application so that the $\beta$-law automatically holds.

We begin with some notation. Write 
	$\SS{\type, \type[2]}^{\Ctx}
		: 
		\lolliCtx{\Ctx}{\lolliObj{\type}{\type[2]}}
		\to
		\lolliObj
			{\lolliCtx{\Ctx}{\type}}
			{\lolliCtx{\Ctx}{\type[2]}}
		$
and 
	$\K{\Ctx}{\type}
		: 
		\type
		\to 
		\lolliCtx
			{\Ctx}
			{\type}$
for the maps defined inductively to be: the identity when $\length\Ctx = 0$,
	$\SS{}$
and 
	$\K{}{}$
when $\length\Ctx = 1$, and the composites below for $\length\Ctx \geq 2$:
\begin{align*}
	\SS{\type, \type[2]}^{\Ctx, \objX}
		&:=
		\lolliCtxbig
			{\Ctx}
			{\lolliObj{\objX}{\lolliObj{\type}{\type[2]}}}
		\xra{  \lolliCtx{\Ctx}{\SS{}}  }
		\lolliCtxbig
			{\Ctx}
			{\lolliObj
				{ \lolliObj{\objX}{\type} }
				{ \lolliObj{\objX}{\type[2]} }
			}
		\xra{ \SS{}^{\Ctx} }
			\lolliObjbig
				{\lolliCtx{\Ctx, \objX}{\type}}
				{\lolliCtx{\Ctx, \objX}{\type[2]}}				\\
	\K{\Ctx, \objX}{\type}
		&:= 	
		\type
		\xra{\K{}{}}
		\lolliObj
				{\objX}
				{\type}
		\xra{\K{\Ctx}{}}
		\lolliCtxbig
				{\Ctx}
				{\lolliObj
					{\objX}
					{\type}		
				}				
\end{align*}
The projections $\p{i}{\ind\type}$ are now defined as the image of $\id_{\type_i}$ under the map below:
\[
	\forget\lolliObj{\type_i}{\type_i}
		\xra{ \forget\K{\type_1, \dots, \type_{i-1}}{} }
		\forget\lolliCtx{\type_1, \dots, \type_i}{\type_i}
		\xra{ \forget\lolliCtx{\id}{ \K{\type_{i+1}, \dots, \type_{n}}{} }  }
		\forget\lolliCtx{\type_1, \dots, \type_n}{\type_i} 
\]
Next we define an application operation $\mathsf{app}$ by
\[
	\forget\lolliCtx{\Ctx[2]}{\lolliObj{\type}{\type[2]}}
		\times \forget\lolliCtx{\Ctx[2]}{\type}
		\xra{\forget(\SS{}^{\Ctx[2]}) \times \id}
	\forget\lolliObjbig{ \lolliCtx{\Ctx[2]}{\type[2]} }{ \lolliObj{\Ctx[2]}{\type[2]} }
		\times \forget\lolliCtx{\Ctx[2]}{\type}				
		\xra{\catapp{}}
	\forget\lolliCtx{\Ctx[2]}{\type[2]}
\]
and a unary substitution operation $c$ as follows:
\begin{align*}
	%
	%
	\skforget\lolliObj{\type}{\type[2]}
		\times \skforget\lolliCtx{\Ctx[2]}{\type}
		\xra{  \skforget\K{\Ctx[2]}{} \times \id }
	\skforget\lolliCtx{\Ctx[2]}{\lolliObj{\type}{\type[2]}}
		\times \skforget\lolliCtx{\Ctx[2]}{\type}		
		\xra{\mathsf{app}}
	\skforget\lolliCtx{\Ctx[2]}{\type[2]}	
\end{align*}

We then define substitution by induction. Substitution into a closed term is exactly weakening, so for $\length\Ctx = 0$ we take 
\[
	\skforget\type[2] \times \smallprod_{i=0}^{n} \, \skforget\lolliCtx{\Ctx[2]}{\type_i}
		\xra{\pi_1} 
		\skforget\type[2]
		\xra{\K{\Ctx[2]}{\type[2]}}
		\skforget\lolliCtx{\Ctx[2]}{\type[2]}.
\]

For $\length\Ctx = 1$ we take the map $c$ and for $\length\Ctx \geq 2$ we define $\mathsf{sub}$ as in the next diagram 
	(we abuse notation somewhat by neglecting the structural isomorphisms for the product structure):
\begin{td}
	\forget\lolliCtx{\Ctx, \type_{n+1}}{\type[2]}
			\times \prod_{i=1}^{n+1} \forget\lolliCtx{\type[2]}{\type_i}  & 
	\forget\lolliCtx{\Ctx[2]}{\type[2]} \\
	\forget\lolliCtx{\Ctx}{ \lolliObj{\type_{n+1}}{\type[2]} }
			\times \prod_{i=1}^{n+1} \forget\lolliCtx{\type[2]}{\type_i}	 & 
	\forget\lolliCtx{\Ctx[2]}{ \lolliObj{\type_{n+1}}{\type[2]} }
			\times \forget\lolliCtx{\type[2]}{\type_{n+1}}	
	\arrow["\mathsf{sub}_{\Ctx, \type_{n+1}}", from=1-1, to=1-2]
	\arrow[equals, from=1-1, to=2-1]
	\arrow["\mathsf{sub}_{\Ctx} \times \id"{swap}, from=2-1, to=2-2]
	\arrow["\mathsf{app}"{swap}, from=2-2, to=1-2]
\end{td}

One can show this forms a clone by a long series of inductive arguments using the axioms of \Cref{fig:SK-category-axioms}. A short check shows that it is strict closed. 

Alternatively, and much more simply, one can employ \Cref{res:internal-language-of-SK-cats} and use $\CLwext$ as an internal language: we identify maps 
	$\mmap : \type \to \type[2]$
in $\cat$ with closed terms $\diamond \vdash \mmap : \lolliObj{\type}{\type[2]}$, and take the SK-structure as sketched just before \Cref{res:internal-language-of-SK-cats}.
A short series of inductions yields the following for 
	$\SS{\type, \type[2]}^{\Ctx}$
and
	$\K{\Ctx}{\type}$
as defined above, where $\Ctx := (\varx_i : \type_i)_{i=1, \dots, n}$ and $n \geq 1$:
\begin{align*}
	\appfour
			{\SS{\type, \type[2]}^{\Ctx}}
			{f}
			{g}
			{\varx_1 \, \dots \, \varx_n}
	&=
	\app
		{ (\app{f}{\varx_1 \, \dots \, \varx_n}) } 
		{ (\app{g}{\varx_1 \, \dots \, \varx_n}) } 						\\
	\appthree
			{\K{\Ctx}{\type}}
			{f}
			{\varx_1 \, \dots \, \varx_n}
	&=
		f																					
\end{align*}
It follows that the $i^{\text{th}}$ projection $\p{i}{\ind\type}$ is the closed term $P_i$ which satisfies
$
	\app{P_i}{\varx_1 \, \dots \, \varx_n} = \varx_i
$
and also that, where $\Ctx[2] = (\varx[2]_j : \type[2]_j)_{j=1, \dots, m}$ in the definitions above, then
\begin{align*}
	\app
			{\mathsf{app}(\mmap, \mmap[2])}
			{\varx[2]_1 \, \dots \, \varx[2]_m}
	&=
	\app
		{ (\app{\mmap}{\varx[2]_1 \, \dots \, \varx[2]_m}) } 
		{ (\app{\mmap[2]}{\varx[2]_1 \, \dots \, \varx[2]_m}) } 						\\
	\app
			{c(\mmap, \mmap[2])}
			{\varx[2]_1 \, \dots \, \varx[2]_m}
	&=
		\app
			{\mmap}																				
			{(\app{u}{\varx[2]_1 \, \dots \, \varx[2]_m})}										\\
	\app 
		{\mathsf{sub}{\big( \mmap, (\mmap[2]_1, \dots, \mmap[2]_n) \big)}}
		{\varx[2]_1 \, \dots \, \varx[2]_m}
	&=
		\appfour
			{\mmap}
			{ (\app{\mmap[2]_1}{\varx[2]_1 \, \dots \, \varx[2]_m})  }
			{\dots}
			{ (\app{\mmap[2]_n}{\varx[2]_1 \, \dots \, \varx[2]_m})  }		
\end{align*}

The three clone axioms then reduce to direct calculations with $\CLwext$. Finally, to see this clone is strict closed, one sees as in \Cref{res:main-theorem-for-clones} that 
	$\eval_{\type, \type[2]}$
is the identity $\combI_{\lolliObj{\type}{\type[2]}}$, and hence that for
	$\varx_1 : \type_1, \dots, \varx_n : \type_n \vdash \mmap : \lolliObj{\type}{\type[2]}$ 
the term
	$	\csubbig
			{  \eval_{\type, \type[2]}  }
			{  
				\csub{\mmap}{\p{1}{}, \dots, \p{n}{}},
				\p{n+1}{}
			}$
in the induced clone $\SKCatToStClone\cat$ must satisfy
\begin{align*}
	\appthree
		{\mathsf{sub}
				{\Big(
					\eval_{\type, \type[2]},
					\big( \mathsf{sub}(\mmap, (P_1, \dots, P_n)),  P_{n+1}) \big)
				\Big)}}
		{\varx_1 \, \dots \, \varx_n}
	{a} 																															\\
	&\hspace{-7.5cm}=
	\appthree
		{\eval_{\type, \type[2]}}
		{\Big( 
			\appthree
				{\big( \mathsf{sub}(\mmap, (P_1, \dots, P_n)),  P_{n+1}) \big)}
				{\varx_1 \, \dots \, \varx_n}
				{a}		
		\Big)
		}
		{
			(\appthree 
				{P_{n+1}}
				{\varx_1 \, \dots \, \varx_n}
				{a}	)	
		}																																\\
	&\hspace{-7.5cm}=
	\appthree
		{\eval_{\type, \type[2]}}
		{\Big( 
			\appthree
				{\big( \mathsf{sub}(\mmap, (P_1, \dots, P_n)),  P_{n+1}) \big)}
				{\varx_1 \, \dots \, \varx_n}
				{a}		
		\Big)
		}
		{
			a
		}																																\\		
	&\hspace{-7.5cm}=
	\appthree
		{\eval_{\type, \type[2]}}
		{\big( 
			\appfour
				{\mmap}
				{ 
					(\appthree
						{P_1}
						{\varx_1 \, \dots \, \varx_n}
						{a})
				}
				{\dots}
				{ 
					(\appthree
						{P_n}
						{\varx_1 \, \dots \, \varx_n}
						{a})
				}		
		\big)
		}
		{
			a
		}																																\\			
	&\hspace{-7.5cm}=
	\appthree
		{\eval_{\type, \type[2]}}
		{\big( 
			\appfour
				{\mmap}
				{\varx_1}
				{\dots}
				{\varx_n}		
		\big)
		}
		{
			a
		}																																	\\
	&\hspace{-7.5cm}=
	\appfive
		{\mmap}
		{\varx_1}
		{\dots}
		{\varx_n}	
		{a}			
\end{align*}

Hence, by extensionality, we get that
	$	\csubbig
				{  \eval_{\type, \type[2]}  }
				{  
					\csub{\mmap}{\p{1}{}, \dots, \p{n}{}},
					\p{n+1}{}
				}
		= \mmap
	$
in $\SKCatToStClone\cat$, so this is a strict closed clone.
This extends to a functor
	$\SKCatToStClone : \StSKCat \to \StClClone$
in the obvious manner, and similar calculations---either with the respective internal languages or diagrammatically---show that we get the isomorphism claimed.
\qed
\end{proof}


\CartesianClosedFromSK*
\begin{proof}
	The given isomorphisms are, by induction, equivalent to giving natural isomorphisms of the form
		$\cat({\otimes_i \type_i}, {\type[2]})
				\iso 
				\cat(\tensu, \lolliCtx{\type_1, \dots, \type_n}{\type[2]})$.
	Now, if $\cat$ has an SK-structure together with these isomorphisms then we get
		$\skforget\lolliObj{\otimes_i \type_i}{\type[2]}
			\iso 
			\skforget\lolliCtx{\type_1, \dots, \type_n}{\type[2]}$,
	but this is exactly endowing the closed clone 
		$\SKCatToClone(\cat)$
	with a representable structure, making it cartesian closed. 
	Observing that each of the constructions in 
		\Cref{res:factoring-adjunction-for-cc-clones} 
	and
		\Cref{res:factoring-adjunction-for-closed-clones}
	preserves the underlying category, we see this endows $\cat$ with cartesian closed structure.
	
	Going the other way, if $\cat$ is cartesian closed then so is the clone 
		$\catToClone\cat$, 
	and by representability and \Cref{res:universal-arrows-as-natural-isomorphisms} we get natural isomorphisms 
		$(\catToClone\cat)(\prod_{i=1}^n \type_i; \type[2])
				\iso 
				(\catToClone\cat)(\type_1, \dots, \type_n; \type[2])
		$
	for all $n$. 
	Passing through 
		\Cref{res:factoring-adjunction-for-cc-clones} 
	and
		\Cref{res:factoring-adjunction-for-closed-clones},
	we see that the induced closed category has isomorphisms
	\[
		\cat({\smallprod_{i=1}^n \type_i}, {\type[2]})
		=
		\skforget\lolliObjbig{\smallprod_{i=1}^n \type_i}{\type[2]}
					\iso 
					\skforget\lolliCtx{\type_1, \dots, \type_n}{\type[2]}
		=
		\cat{\big(\terminal, (\expCtx{\type_1, \dots, \type_n}{\type[2]})\big)}
	\]
	as required.
	\qed
\end{proof}

\end{document}